\def\Tr{\operatorname{Tr}}
\def\FS{\operatorname{FS}}
\def\GE{\operatorname{GE}}
\def\E{\operatorname{E}}
\def\BS{\operatorname{BS}}
\def\supp{\operatorname{supp}}
\def\LOCC{\operatorname{LOCC}}
\def\GHZ{\operatorname{GHZ}}
\def\({\left(}
\def\){\right)}
\def\[{\left[}
\def\]{\right]}
\let\emptyset\varnothing
\newcommand{\mc}[1]{\mathcal{#1}}
\newcommand{\wt}[1]{\widetilde{#1}}
\newcommand{\tf}[1]{\textbf{#1}}
\newcommand{\msc}[1]{\mathscr{#1}}
\newcommand{\bbm}[1]{\mathbbm{#1}}
\newtheorem{theorem}{Theorem}
\newtheorem{corollary}{Corollary}
\newtheorem{definition}{Definition}
\newtheorem{lemma}{Lemma}
\newtheorem{proposition}{Proposition}
\newtheorem{remark}{Remark}
\begin{document}

\widetext

\title{Universal Limitations on Quantum Key Distribution over a Network}

\author{Siddhartha Das}\email{das.seed@gmail.com}
\affiliation{Centre for Quantum Information \& Communication (QuIC), \'{E}cole polytechnique de Bruxelles,   Universit\'{e} libre de Bruxelles, Brussels, B-1050, Belgium}
\author{Stefan B\"{a}uml}\email{stefan.baeuml@icfo.eu}
\affiliation{ICFO-Institut de Ciencies Fotoniques, The Barcelona Institute of Science and Technology, Avinguda Carl Friedrich Gauss 3, 08860 Castelldefels (Barcelona), Spain}
\author{Marek Winczewski}
\affiliation{Institute of Theoretical Physics and Astrophysics, National Quantum Information Centre,
Faculty of Mathematics, Physics and Informatics, University of Gdańsk, Wita Stwosza 57, 80-308 Gdańsk, Poland}
\affiliation{International Centre for Theory of Quantum Technologies (ICTQT), University of Gdańsk,
80-308 Gdańsk, Poland}
\author{Karol Horodecki}
\affiliation{International Centre for Theory of Quantum Technologies (ICTQT), University of Gdańsk,
80-308 Gdańsk, Poland}
\affiliation{Institute of Informatics, National Quantum Information Centre, Faculty of Mathematics,
Physics and Informatics, University of Gdańsk, Wita Stwosza 57, 80-308 Gdańsk, Poland}

\date{\today}
\begin{abstract}
We consider the distribution of secret keys, both in a bipartite and a multipartite (conference) setting, via a quantum network and establish a framework to obtain bounds on the achievable rates. We show that any multipartite private state--the output of a protocol distilling secret key among the trusted parties--has to be genuinely multipartite entangled. In order to describe general network settings, we introduce a multiplex quantum channel, which links an arbitrary number of parties, where each party can take the role of sender only, receiver only, or both sender and receiver. We define asymptotic and non-asymptotic LOCC-assisted secret-key-agreement (SKA) capacities for multiplex quantum channels and provide strong and weak converse bounds. The structure of the protocols we consider, manifested by an adaptive strategy of secret key and entanglement [Greenberger–Horne–Zeilinger (GHZ state)] distillation over an arbitrary multiplex quantum channel, is generic. As a result, our approach also allows us to study the performance of quantum key repeaters and measurement-device-independent quantum key distribution (MDI-QKD) setups. For teleportation-covariant multiplex quantum channels, we get upper bounds on the SKA capacities in terms of the entanglement measures of their Choi states. We also obtain bounds on the rates at which secret key and GHZ states can be distilled from a finite number of copies of an arbitrary multipartite quantum state. We are able to determine the capacities for MDI-QKD setups and rates of GHZ-state distillation for some cases of interest. 

\end{abstract}

\maketitle
\tableofcontents

\section{Introduction}
 Quantum communication over a network is a pertinent issue from both fundamental and application aspects~\cite{BB84,Eke91,DM03,RennerThesis,CLL+09,VBD+15,ZXC+18}. With technological advancement~\cite{HBD+15,Mst19,BRA+19,CZC+21}, and concerns for privacy~\cite{Sho94,ZXC+18}, there is a need for determining protocols and criteria for secret communication among multiple trusted parties in a network. Quantum key distribution (QKD) provides unconditional security for generating secure, random bits among trusted parties against a quantum eavesdropper, i.e., an eavesdropper that is only limited by the laws of quantum mechanics. Secret key agreement (SKA) among multiple allies is called conference key agreement \cite{Chen2005,AH09}. Conference key agreement can be achieved if all parties involved share a Greenberger–Horne–Zeilinger (GHZ) state \cite{GHZ89}. As in the case of bipartite QKD, however, there exists a larger class of states, known as multipartite private states \cite{AH09}, which can provide conference key by means of local measurements by the parties.
 
  \begin{figure}[ht]
        \centering
        \includegraphics[width=0.67\textwidth]{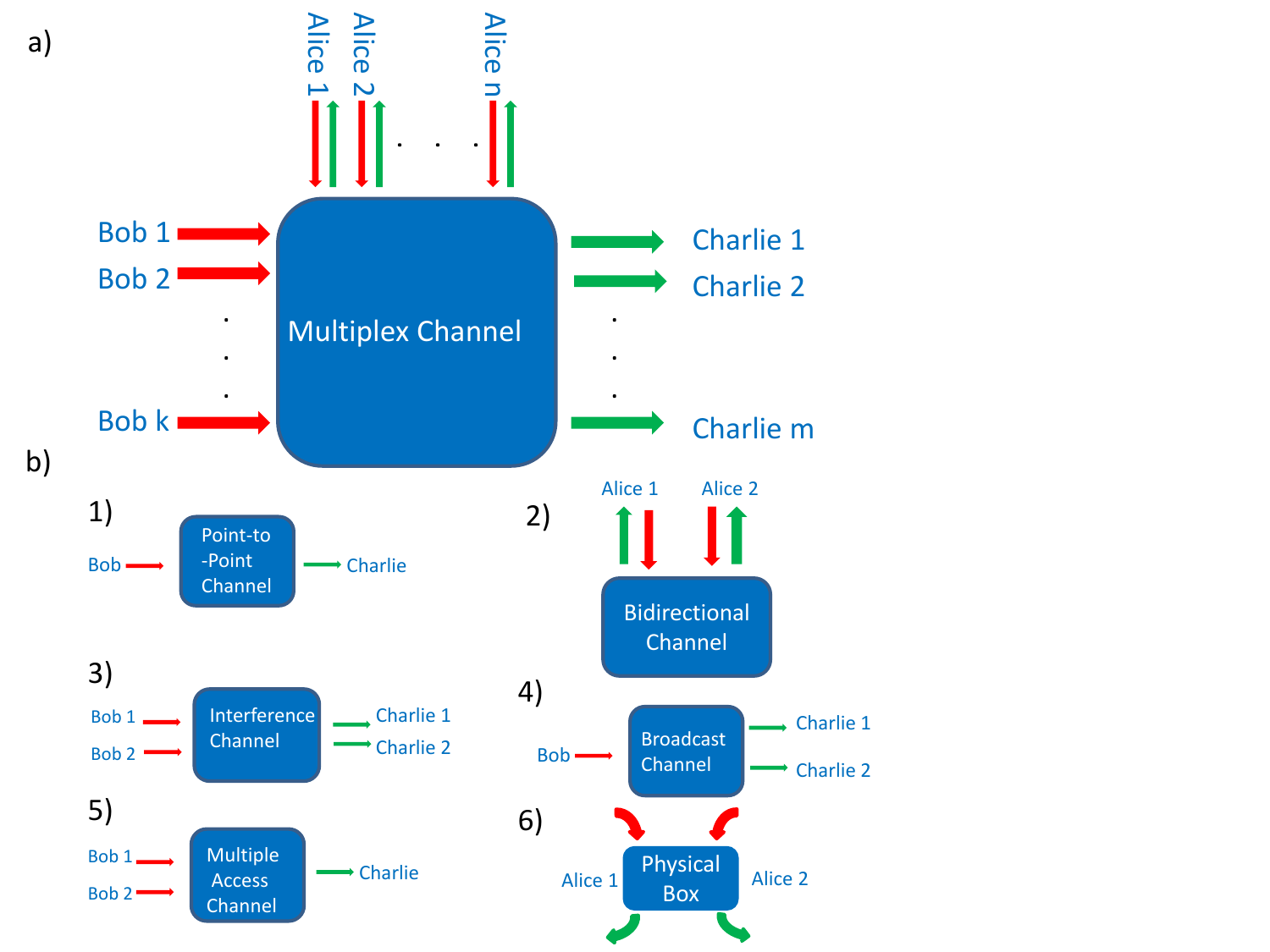}
        \caption{Pictorial illustration of a universal nature of a multiplex quantum channel from which all other network quantum channels arise, where red and green arrows show inputs and outputs to channels, respectively; see Section~\ref{sec:mc} for definitions.}\label{fig:mpc}
    \end{figure}
 
 Given the global efforts towards a so-called quantum internet \cite{DM03,Kim08,WEH18}, as well as quantum key distribution over long distances \cite{MHS+12,LCL+17}, it is thus pertinent to establish security criteria and benchmarks on key distribution and entanglement generation capabilities over a quantum network. A quantum network is a complex structure as it inherits various setups of different quantum channels with particular alignment due to local environmental conditions. One of the biggest obstacles in building this structure is an attenuation of the signal, which cannot be amplified by cloning or broadcasting due to its inherent quantum nature. The signal decays exponentially with distance over an optical fiber~\cite{ATL15}, and also, the interaction with the environment makes it difficult to preserve entanglement for long time~\cite{BRA+19}. Hence, even obtaining a metropolitan scale quantum network remains a challenge. To overcome these problems, there is a global effort in building technology of quantum repeaters~\cite{PhysRevLett.81.5932,PhysRevA.59.169,munro2015inside,CZC+21} that could act as relay stations for long-distance quantum communication~\cite{LCL+17,ZXC+18}. 
 
 \begin{figure}
        \centering
        \includegraphics[trim=1cm 8cm 0cm 0cm,width=11cm]{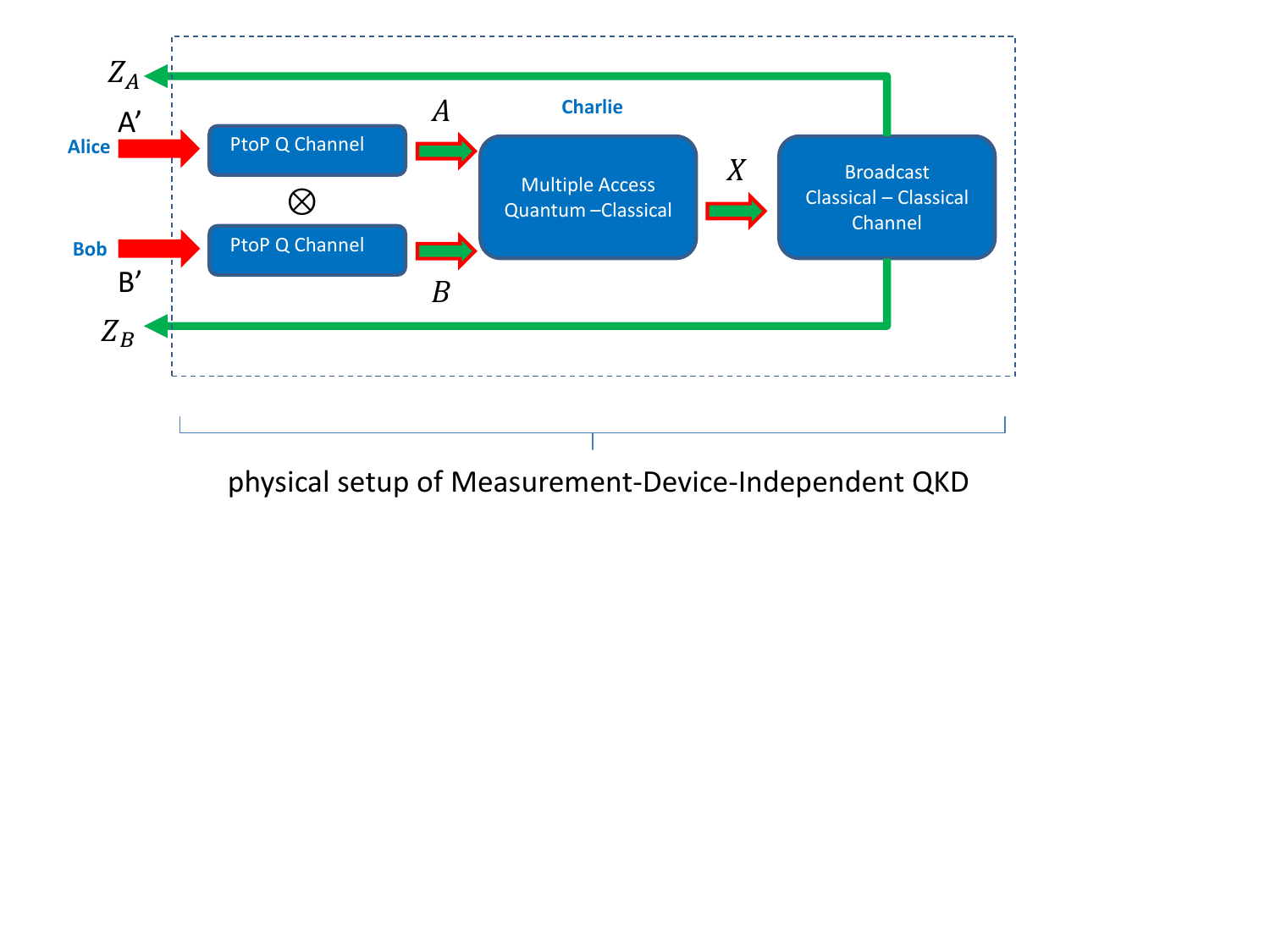}
        \caption{ Graphical depiction of a quantum to classical multiplex channel ${\cal N}_{A'B'\rightarrow Z_AZ_B}^{\text{MDI}}$ as a bidirectional Channel, which is a composition of three elementary multiplex channels. A pair of point-to-point channels from Alice to Charlie, and from Bob to Charlie composed with a multiple access quantum to classical channel (quantum instrument) performed by Charlie, followed by a broadcast classical channel back to Alice and Bob. The green arrows with red boundaries are the outputs of one multiplex channel, which are at the same time inputs to the other,
        hence the coloring.
        }\label{fig:MDI-fig}
    \end{figure}

Some of the first protocols to be performed once a quantum network is available will likely be bipartite as well as multipartite secret key agreement. The main concern to secure the network is a necessity for these QKD protocols to be free of loopholes. A number of spectacular attacks on implementations are based on inaccuracy (inefficiency) of detectors of polarized light~\cite{MAS06,QFLM05,DLQY16}. 
Based on the idea of entanglement swapping, a novel protocol known as
measurement-device-independent QKD (MDI-QKD) \cite{LCQ+12,braunstein2012side} was introduced, which does not require the honest parties to detect an incoming quantum signal, thus avoiding the problem of detector inefficiencies.
This allowed for a QKD protocol that is independent of any measurement device and hence called {\it measurement-device-independent} QKD (MDI-QKD)~\cite{LCQ+12,braunstein2012side}. This idea has drawn enormous theoretical and experimental attention over the last few years in terms of analyzing achievable key rates for such a scheme with various noise models and performing experiments with current technologies \cite{pirandola2015high,FYCC15,LYDS18,MZZ18,tamaki2018information,LL18,cui2019twin,curty2019simple,LWW+19,MPR+19,PLG+19}. 

Given the broad interest in implementing such technologies, understanding the fundamental limitations on the key rates achievable in scenarios such as quantum networks, quantum repeaters as well as setups for MDI-QKD is an important task. Seminal papers~\cite{HHHO05,christandl2004squashed} on upper bounds on secret key distillation from states along with results from Refs.~\cite{BDSW96,VPRK97,VP98,HHH99,Dat09} have led to notable recent progress in the aforementioned direction, for two parties over point-to-point channels assisted by local quantum operations and classical communication (LOCC)~\cite{TGW14,PLOB15,WTB16,CM17}. Building upon these works, further progress has been made in restricted network settings, e.g., between two parties over bidirectional~\cite{DBW17,BDW18,D18thesis}, broadcast~\cite{LP17,seshadreesan2016bounds,TSW17}, multiple access, and interference quantum channels~\cite{LP17}, as well as quantum repeaters~\cite{bauml2015limitations,CM17} and networks consisting of point-to-point~\cite{AML16,RKB+17,pirandola2019capacities} or broadcast channels~\cite{bauml2017fundamental}.

In this work, we aim to provide a unifying framework to derive upper bounds on the key rates, both in bipartite and conference settings, achievable in a broad range of different scenarios, including but not limited to broadcast, multiple access, interference channels, repeaters, some MDI-QKD setups and more general network scenarios. To that purpose, we introduce a \textit{multiplex quantum channel}, i.e., a multipartite quantum process that connects parties, each playing one of three possible roles -- both sender and receiver, only sender, or only receiver. A multiplex quantum channel is the most general form of a
memoryless multipartite quantum channel in a communication network setting. All other network quantum channels can be seen as a special case of this channel, see Fig.~\ref{fig:mpc} for certain common examples. Even the physical setups of MDI-QKD and \textit{key repeaters} can be described as  special cases of multiplex quantum channels, see Fig.~\ref{fig:MDI-fig}. In general, the input and output systems on which such a channel acts can be discrete (finite dimensional) or continuous variable (infinite dimensional) quantum systems.

Next, we introduce secret key distribution protocols over multiplex quantum channels with LOCC-assistance between users, as shown in Fig.~\ref{fig:LOCC}. This provides a unifying framework to evaluate performances of various seemingly different QKD protocols. In particular, we describe a general paradigm of QKD protocols where a fixed number of trusted allies are connected over a multiplex quantum channel $\mathcal{N}$. In these protocols, the allies
are allowed to perform LOCC between each use of $\mc{N}$ to generate in the end a key secure against any eavesdropper that satisfies the laws of quantum mechanics. This so-called quantum eavesdropper can have access to all environment parts, including the isometric extension to the channel $\mc{N}$. 

Our main technical result consists of a meta-converse bound on the one-shot conference key agreement capacity of a multiplex quantum channel, from which we can obtain a number of weak as well as strong converse bounds for many uses of the multiplex quantum channel, including adaptive and non-adaptive strategies. As our results work in the non-asymptotic setting of a finite number of channel uses, we believe them to be of wide practical interest.

In particular, as an important observation, we show that key repeater protocols, as well as commonly used setups for MDI-QKD are special cases of LOCC-assisted secret key agreement via a multiplex quantum channel. Whereas bounds on the key rates in such scenarios can also be obtained from a number of earlier results, e.g. from \cite{pirandola2019capacities,AML16,CM17}, our framework allows for higher level of specificity in the setups, e.g. by taking into consideration the lack of quantum memory or a particular kind of noisy measurement that is performed in the relay station. Thus our framework allows us obtain tighter bounds than those in Refs.~\cite{pirandola2019capacities,AML16,CM17} and even to compute MDI-QKD capacities of certain photon-based practical prototypes that use the so-called \textit{dual-rail} encoding scheme. This approach provides important tools for benchmarking the performance of such experimentally relevant protocols.

When considering conference key agreement, the pivotal observation we arrive at is that multipartite quantum states with directly accessible secret bits, also called (multipartite) private states~\cite{HHHO09,AH09}, are genuinely multipartite entangled. This fact also allows us to derive non-asymptotic upper bounds on the secret key distillation from a finite number of copies of a multipartite quantum state.

Our work showcases the topology-dependent and yet universal nature of entanglement measures based on sandwiched R\'enyi relative entropies~\cite{WWY14,MDSFT13}, of which relative entropy is a special case. These entanglement measures provide upper bounds on the secret key rate over an arbitrary multiplex quantum channel which was first shown for bipartite states in Ref.~\cite{HHHO05}. The entanglement measures are topology-dependent because the upper bound's argument depends (only) on the partition of quantum systems held by trusted allies based on their roles in the network channel. The results are based on the observation that multipartie private states are necessarily genuinely multipartite entangled.

\begin{figure}
\centering 
   \includegraphics[trim={1cm 3cm 1cm 1cm},clip,width=0.45\textwidth]{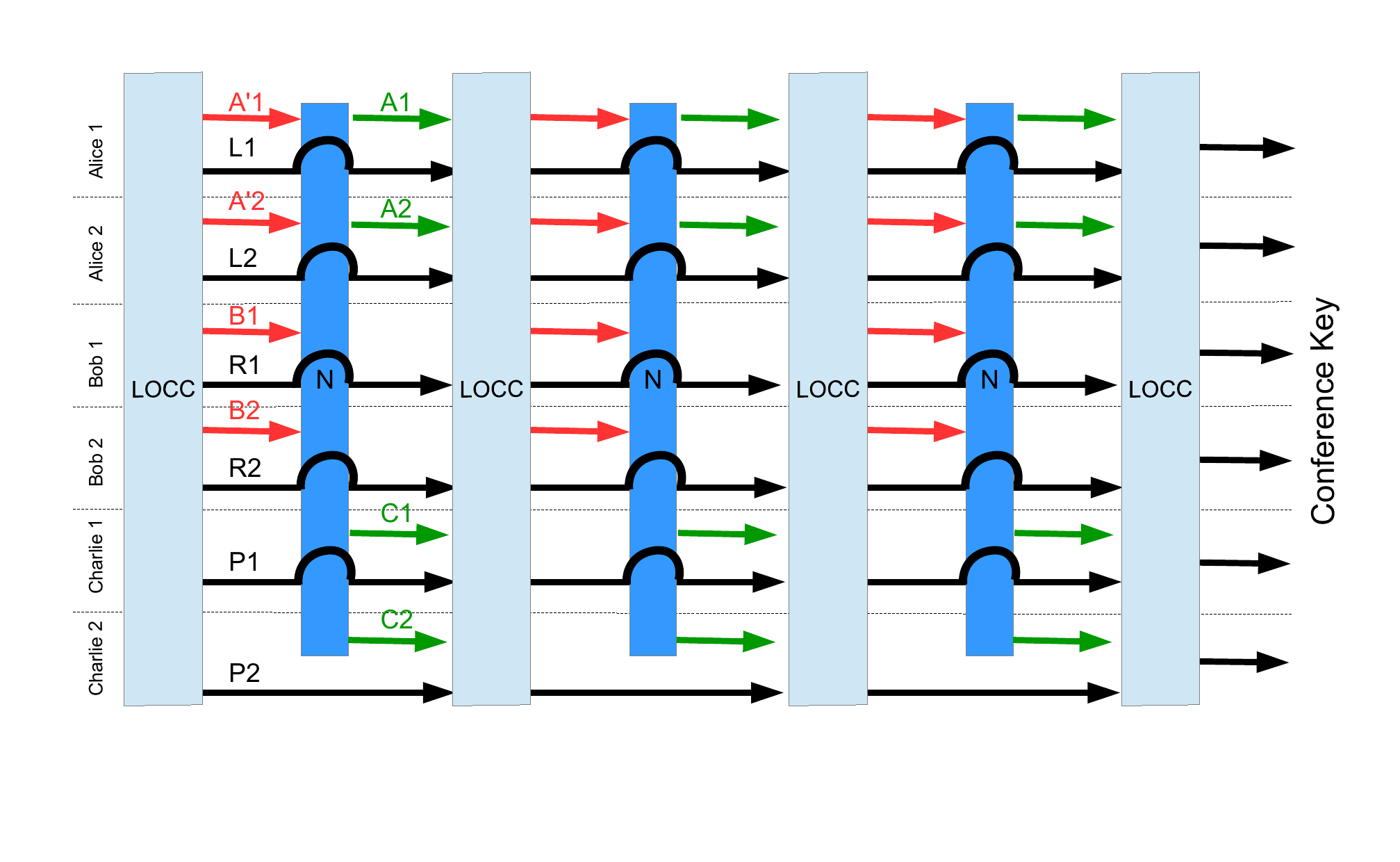}
 \caption{Example of an LOCC-assisted secret key agreement protocol among six parties, Alice 1, Alice 2, Bob 1, Bob 2, Charlie 1 and Charlie 2 using the multiplex channel $\mc{N}$ three times. Inputs into $\mc{N}$ are depicted red, outputs green and reference systems black. Alice 1 and 2 enter systems into and receive systems from $\mc{N}$, Bob 1 and 2 only enter and Charlie 1 and 2 only receive systems. In the end the six parties obtain a six-partite conference key.} \label{fig:LOCC}
 \end{figure}

The structure of this paper is as follows. We begin with a brief overview of the main results and briefly mention some important prior results along the direction of our work in Secs.~\ref{sec:main} and~\ref{sec:survey}, respectively. We introduce notations and review basic definitions and relevant prior results in Section~\ref{sec:pre}. In Section~\ref{sec:MEPT}, we introduce and discuss the properties of entanglement measures for the multiplex quantum channel. We show that genuine multipartite entanglement is a necessary criterion for secrecy. In Section~\ref{sec:cka}, we introduce LOCC-assisted secret key agreement protocols over an arbitrary multiplex quantum channel. We derive upper bounds on the maximum achievable rate for conference key agreement over finite uses of multiplex quantum channels. In Section~\ref{sec:Appl}, we leverage our bounds to provide non-trivial upper bounds on other quantum key distribution schemes such as measurement-device-independent quantum key distribution and quantum key repeaters. In Section~\ref{sec:DevetakWinter}, we derive lower bounds on the secret-key-agreement capacity over an arbitrary multiplex quantum channel. In Section~\ref{sec:dist-state}, we derive upper bounds on the number of secret key bits that can be distilled via LOCC among trusted parties sharing a finite number of copies of multipartite quantum states. We provide concluding remarks and open questions in Section~\ref{sec:dis}.

\section{Summary of the main results}\label{sec:main}
In the following, we will provide a brief overview of our main results. Regarding technique, our focus is on multipartite private states, which are the most general class of states that provide quantum conference key directly (i.e. without distillation) by local measurements. Such states are of the form  \cite{AH09}
\begin{equation}
\gamma_{\vv{SK}}\coloneqq U^{\text{tw}}_{\vv{SK}} (\Phi^{\GHZ}_{\vv{K}}\otimes\omega_{\vv{S}}) (U^{\text{tw}}_{\vv{SK}})^\dag,
\end{equation}
where $\vv{K}=K_1,...,K_N$ denotes the so-called key part, i.e. the systems which the $N$ parties involved have to measure in order to obtain conference and $\vv{S}=S_1,...,S_N$ denotes the so-called shield systems, which the parties have to keep secure from the eavesdropper. Also, $\Phi^{\GHZ}$ is an $N$-partite GHZ state, $\omega$ is some density operator, and $U^{\text{tw}}$ is a specifically constructed bipartite unitary operation known as twisting. 

We show that states of this form are necessarily genuinely multipartite entangled (GME), i.e., they cannot be expressed as a convex sum of product states no matter with respect to which partition the states are products. To show this, we define a multipartite privacy test, i.e. a dichotomic measurement $\{\Pi^\gamma,\bbm{1}-\Pi^\gamma\}$ such that any $\epsilon$-approximate multipartite private state $\rho$ with fidelity $F(\rho,\gamma)\geq1-\epsilon$ passes the test with success probability $\Tr[\Pi^\gamma\rho]\geq 1-\epsilon$. We then show that any biseparable state $\sigma$ cannot pass the privacy test with probability larger than $1/K$, where $\log K$ is the number of conference key bits obtainable by measuring (the key part of) $\gamma$. Namely we show that $\Tr[\Pi^\gamma\sigma]\leq1/K$ for all biseparable $\sigma$.

As a means of distributing bipartite or multipartite private states among the users, e.g. in a future quantum version of the Internet \cite{DM03,WEH18}, we introduce {\it multiplex quantum channels} that connect a number of parties which have one of three possible roles-- that of only sender or only receiver, or both sender and receiver. We denote senders as Bob $1$ ,..., Bob $k$, and their inputs as $B_1,...,B_k$, receivers as Charlie $1$ ,..., Charlie $m$, and their inputs as $C_1,...,C_m$, as well as parties that are both senders and receivers as Alice $1$ ,..., Alice $n$, with respective inputs $A'_1,...,A'_n$ and outputs $A_1,...,A_n$. See also figure \ref{fig:mpc}. To describe such channel, we use the notation $\mc{N}_{\vv{A'}\vv{B}\to\vv{A}\vv{C}}$, where for sake of brevity we have introduced $\vv{A}\coloneqq A_1,...,A_n$ etc. Further, $:\vv{A}:$ denotes the partition $A_1:...:A_n$ and $:\vv{A}:\vv{B}:$ stands for $A_1:...:A_n:B_1:...:B_k$  etc.

By interleaving the uses of a multiplex quantum channel with local operations and classical communications (LOCC) among the parties, we provide a general framework to describe a number of different quantum protocols. The idea is to construct a multiplex quantum channel in such a way that its use interleaved by LOCC simulates the protocol. For example, in an MDI-QKD setup, where Alice $1$ and Alice $2$ send states to the central measurement unit using respective channels ${\mc N}^{1,2}$, we can define a (bipartite) multiplex quantum channel of the form
\begin{align}
&\mc{N}^{\text{MDI}}_{A_1'A_2'\to A_1A_2}\coloneqq \nonumber\\ 
&\qquad \mc{B}_{X\to A_1A_2}\circ\mc{M}_{A''_1A''_2\to X}\circ{\mc N}^1_{A_1'\to A_1''}\otimes{\mc N}^2_{A_2'\to A_2''}.
\end{align}
Here  $\mc{M}_{A''_1A''_2\to X}$ is the quantum channel performing the central measurement  and $\mc{B}_{X\to A_1A_2}$ a classical broadcast channel sending the result back to Alice $1$ and Alice $2$. Other examples include multipartite  MDI-QKD and secret key agreement protocols over quantum network laced with key repeaters \cite{bauml2015limitations,CM17}.

Generalizing results for point-to-point \cite{PLOB15,WTB16,CM17} and bidirectional \cite{D18thesis,DBW17,BDW18} channels, we derive divergence-based measures for entangling abilities of multiplex quantum channels and show that they provide upper bounds on their secret-key-agreement capacities. The measures we introduce are of the following form: 
\begin{equation}
\tf{E}_{r}(\mc{N})\coloneqq \sup_{\tau\in\FS(:\vv{LA'}:\vv{RB}:)}\tf{E}_{r}(:\!\vv{LA}\!:\!\vv{R}\!:\!\vv{C}\!:)_{\mc{N}(\tau)},
\end{equation} 
where $r=\text{E}$ or $r=\text{GE}$ (E and GE denote entanglement and genuine entanglement, respectively) and $\FS$ denotes the set of fully separable states (see Sections~\ref{sec:ME} and \ref{sec:EM}). Here $\vv{L},\vv{R}$ denote ancillary systems that are kept by the respective parties. For any partition $:\vv{X}:$, we have defined $\tf{E}_{r}$ as the divergence from the convex set $\tf{S}_E$ of fully separable  or the convex set $\tf{S}_{GE}$ of biseparable states, measured by some divergence $\tf{D}$:
\begin{equation}
\tf{E}_{r}(:\vv{X}:)_{\rho}\coloneqq \inf_{\sigma\in\tf{S}_r(:\vv{X}:)}\tf{D}(\rho\Vert\sigma).
\end{equation}

Our main results are the following upper bounds  on secret-key-agreement capacities of a multiplex quantum channel, i.e. on the maximum rates at which multipartite private states can be obtained by using the channel as well as some free operations. In the one-shot case of a multiplex quantum channel with classical preprocessing and postprocessing (cppp), we have the following weak converse result: For any fixed $\varepsilon\in(0,1)$, the achievable region of cppp-assisted secret key agreement over a multiplex channel $\mc{N}$ satisfies 
\begin{equation}
P_{\textnormal{cppp}}^{(1,\varepsilon)}(\mc{N})\leq E^\varepsilon_{h,\GE}(\mc{N}),
\end{equation} 
where $E^\varepsilon_{h,\GE}(\mc{N})$ is the $\varepsilon$-hypothesis testing relative entropy of genuine multipartite entanglement of the multiplex channel $\mc{N}$, which is based on the $\varepsilon$-hypothesis testing divergence \cite{BD10}. In the case of many channel uses, interleaved by LOCC, we can also show the following strong converse bound:
\begin{equation}
P_{\LOCC}(\mc{N})\leq E_{\max,E}(\mc{N}), 
\end{equation}
where $E_{\max,E}(\mc{N})$ is the max-relative entropy of entanglement of the multiplex channel $\mc{N}$, which is based on the max-relative entropy \cite{Dat09}. In the case of finite dimensional Hilbert spaces we can also get a strong converse result in terms of the regularized relative entropy,
\begin{equation}
P_{\LOCC}(\mc{N})\leq E^{\infty}_{R,E}(\mc{N}).
\end{equation}
If $\mc{N}$ is teleportation-simulable \cite{BBPS96,PLOB15}, i.e. it can be simulated by a resource state and an LOCC operation, the bounds on $P_{\LOCC}(\mc{N})$ reduce to the relative entropy of entanglement of the resource state. Our upper bounds on the secret-key-agreement capacities also are upper bounds on the multipartite quantum capacities where goal is to distill GHZ states.

Our technique allows us to compute upper bounds on the rates achievable in MDI-QKD scenarios. For an instance, we consider a dual-rail scheme based on single photons~\cite{DKD18} to determine bounds on the MDI-QKD rates for two users. In this case, the channels between the users and the relay station are describable by erasure channels $\mathcal{E}_i$. We obtain the MDI-QKD capacity to be
\begin{equation}
     \wt{P}_{\LOCC}(\mathcal{N}^{\text{MDI},\mathcal{E}}_{\vv{A}\to \vv{Z}})= q \eta_1\eta_2, 
    \end{equation}
where $\eta_i$'s are the parameters of the erasure channels connecting users to the relay station and $q$ is the probability of success of the Bell-measurement at the relay station (see~\ref{sec:ex-mdi} for precise model of the MDI-QKD setup). Dependence on $\eta_i$ allows us to consider the rate-distance trade-off. We also determine upper bounds on the maximum rates for the MDI-QKD setups where the quantum channels from the users to the relay station are depolarising and dephasing channels.

We also provide lower bounds on the secret-key-agreement rates of multiplex quantum channels that can be achieved by cppp. Our protocols are based on Devetak-Winter (DW) \cite{DW05} and generalize the lower bound for multipartite states presented in Ref.~\cite{AH09} as well as the bound for point-to-point quantum channels presented in Ref.~\cite{pirandola2009direct} to multiplex quantum channels. Our first lower bound is a direct extension of the result for states given in Ref.~\cite{AH09}. The idea is to choose a so-called distributing party that performs the (directed) DW protocol with all remaining parties. The achievable rate is then the worst-case DW rate achievable between the distributing party and any party. Further, we maximize over all choices for the distributing party. Our second protocol is a variation where we have a directed chain of parties in which each party performs the DW protocol with the next party in the chain. The obtainable rate is given by the 'weakest link', i.e., the lowest DW rate, in the chain, and we maximize over all possible permutations of the parties in the chain. 

In the case of a bidirectional network, i.e., a network in which all nodes are connected with their neighbors by a product of point-to-point channels in opposite directions, we provide a tighter bound based on spanning trees. The idea is to find the lowest DW rate in a spanning tree among any pair of the parties and maximise this quantity among all spanning trees. We provide an example where this protocol achieves a higher rate than the previous ones and show that the lower bound can be computed with polynomial complexity.

Finally, we show that the techniques developed in previous sections can also be applied to upper bound the rates at which the conference key can be distilled from multipartite quantum states. In particular, we provide an upper bound on the one-shot distillable conference key in terms of the hypothesis testing relative entropy with respect to biseparable states. Our bound reads
\begin{equation}
      K_{\textnormal{D}}^{(1,\varepsilon)}(\rho)\leq E^\varepsilon_{h,\GE}(\rho).
\end{equation}
Using a particular construction of biseparable states, we provide bounds on this quantity for a number of examples, such as (multiple copies of) GHZ and W states, as well as dephased or depolarised GHZ and W states. We also provide an upper bound on the asymptotic distillable conference key, which is given by the regularized relative entropy with respect to biseparable states.
\begin{equation}
    K_D(\rho)\leq E^\infty_{GE}(\rho),
\end{equation}
which is a generalization of the bipartite bound given in Ref.~\cite{HHHO09}.

\subsection{Relation to prior works}\label{sec:survey}
We briefly sketch some of the major developments that provide upper bounds on the key distillation capacities from states or via LOCC-assisted secret key agreement protocol over a quantum channel. We then compare our bounds on the SKA capacities with those mentioned in prior works. 
 
Conditions and bounds on the distillable key of bipartite states were provided in Refs.~\cite{HHHO05, HHHO09} and \cite{christandl2004squashed}. The former is in terms of the relative entropy of entanglement~\cite{VPRK97,VP98}, the latter in terms of the squashed entanglement~\cite{CW04} (cf.~\cite{Tuc99,Tuc02}). These results were generalized to the conference key in Refs.~\cite{AH09} and \cite{YHH+09}, respectively.

For an LOCC-assisted secret key agreement protocol over a point-to-point channel,  Ref.~\cite{TGW14} provides a weak converse bound in terms of the squashed entanglement, which is generalized to the distribution of bi- and multipartite private states via broadcast channels in Ref.~\cite{seshadreesan2016bounds}. In the case of tele-covariant channels (see Section~\ref{sec:tele-cov}), Ref.~\cite{PLOB15} provides a weak and Ref.~\cite{WTB16} a strong converse bound in terms of the relative entropy of entanglement. This bound has been generalised to the distribution of multiple pairs of bipartite private states states via broadcast channels \cite{LP17,TSW17}, as well as multiple-access and interference channels \cite{LP17}.

For arbitrary point-to-point channels, a strong converse bound in terms of the max-relative entropy of entanglement~\cite{Dat09} is provided in Ref.~\cite{CM17}. Recently, another strong converse bound in terms of the regularized relative entropy was provided in Ref.~\cite{FF19}. For bidirectional channels, strong converse bounds in terms of the max-relative entropy of entanglement, that reduce to the relative entropy of entanglement for tele-covariant channels, have been provided in Refs.~\cite{DBW17,BDW18,D18thesis}.   

In the case where the bipartite key is distributed between two parties using a quantum key repeater, bounds have been provided in Ref.~\cite{CM17} when quantum communication takes place over a point-to-point channel. 
Bounds on rates, at which bipartite and multipartite keys for networks of point-to-point or broadcast channels can be obtained, have been provided in Refs. \cite{pirandola2019capacities,AML16,azuma2017aggregating,RKB+17,bauml2018linear} and \cite{bauml2017fundamental}, respectively. Also, bounds on the rates obtainable in key repeaters that are in terms of entanglement measures of the input states have been obtained in Refs.~\cite{bauml2015limitations,christandl2017private}.

In an LOCC-assisted conference key agreement protocol, the use of a multiplex quantum channel is interleaved with local operations and classical communications (LOCC) among trusted parties. For this scenario, we derive strong converse bounds in terms of the max-relative entropy entanglement for arbitrary multiplex channels. In the case of finite channel dimensions, we also derive bounds in terms of the regularized relative entropy of entanglement. In the case of tele-covariant channels, we obtain bounds in terms of the relative entropy of entanglement. In general, our bounds are not comparable with the squashed entanglement bounds provided in Refs.~\cite{TGW14,seshadreesan2016bounds}. We are able to retain the results of Refs.~\cite{PLOB15,WTB16,CM17,FF19} when multiplex channels are assumed to be point-to-point channels. Our bounds in terms of the max-relative entropy are a direct generalisation of the bounds on bidirectional channels presented in in Refs.~\cite{DBW17,BDW18,D18thesis}, thus we retain those results. By using the recent results Ref.~\cite{FF19}, we further provide bounds in terms of the regularised relative entropy of entanglement, which can provide an improvement.

Concerning quantum key repeaters as well as setups of MDI-QKD, upper bounds on the achievable key rates can be obtained from results bounding key rates achievable in quantum networks, e.g. the one presented in Ref.~\cite{pirandola2019capacities} and subsequently used in Ref.~\cite{van2020extending} or the ones presented in Refs.~\cite{AML16,CM17}. However, we would like to note that by designing the right kind of multiplex channel, we can make more specific assumptions on the operations performed at the relay stations and thus obtain tighter bounds. For example we could design a multiplex channel for a protocol that does not use a quantum memory at the relay station or that performs a particular imperfect measurement at the relay station. The bounds given in Refs.~\cite{pirandola2019capacities,AML16,CM17}, on the other hand, would bound the key rates of a repeater or MDI-QKD setup by finding the weakest link between the nodes, i.e. only take into consideration limitations arising from imperfect point-to-point channels linking Alice and Bob with the central relay station, while assuming unlimited quantum memory at the nodes as well as the possibility to perform perfect measurements, resulting in looser bounds. Hence the bounds given in Refs.~\cite{pirandola2019capacities,AML16,CM17} basically reduce to the minimum of the capacities of the two point-to-point channels, whereas our bounds represent the limitation arising from both imperfect channels and imperfect node operations, which is an important factor when benchmarking experimental implementations.

As for conference key distillation from multipartite states, we provide tighter bounds than those presented in Ref.~\cite{AH09}. As a GHZ state is a special case of a multipartite private state, our bounds can also be applied to the distillation of GHZ states from any pure or mixed multipartite entangled state, both in the asymptotic and finite copies regime. There are a number of results concerned with computing and bounding rates of multipartite entanglement transformation, including \cite{bennett2000exact,SVW,FL-2007,KT-2010,CCL-2011,vrana2015asymptotic,spee2017entangled,vrana2019distillation,streltsov2020rates}. As an example, we consider the non-asymptotic distillation of tripartite conference key from noisy and noiseless W-states and compare our results with Ref.~\cite{SVW}.

\section{Preliminaries}~\label{sec:pre}
In this section, we introduce notations, review basic concepts, and standard definitions to be used frequently in later sections.  
\subsection{Notations and definitions}
We consider quantum systems associated with separable Hilbert spaces. We study both discrete and continuous variable quantum systems, therefore the associated Hilbert spaces can be finite or infinite dimensional. For a composite quantum system $AB$ in a state $\rho_{AB}$, the reduced state $\Tr_{B}[\rho_{AB}]$ of system $A$ is denoted as $\rho_A$. We denote identity operator as $\bbm{1}$. Let $\vv{A'}\coloneqq \{A_a'\}_{a\in\msc{A}}$, $\vv{A}\coloneqq \{A_a\}_{a\in\msc{A}}$, $\vv{B}=\{B_b\}_{b\in\msc{B}}$, $\vv{C}=\{C_c\}_{c\in\msc{C}}$, $\vv{K}=\{K_i\}_{i=1}^M$, denote sets (compositions) of quantum systems, where $\msc{A},\msc{B},\msc{C}$ are finite sets of symbols such that $|\mc{A}|+|\mc{B}|+|\mc{C}|=M$ for some natural number $M \geq 2$. We consider $M$ trusted allies $\{\tf{X}_i\}_{i=1}^M\coloneqq
\{\tf{A}_a\}_{a\in\mathscr{A}}\cup\{\tf{B}_b\}_{b\in\mathscr{B}}\cup\{\tf{C}_c\}_{c\in\mathscr{C
}}$”. Also, $\vv{LA}$ denotes the set $\{L_aA_a\}_{a\in\msc{A}}$, where $L_a$ is a reference system of $A_a$ and held by $\tf{A}_a$, and same follows for $\vv{RB}$, $\vv{PC}$, and $\vv{SK}$. A quantum state $\rho_{\vv{A}}$ denotes a joint state of a system formed by composition of all $A_a$. We use $:\!\vv{A}\!:$ to denote partition with respect to each system in the set $\vv{A}$ as they are held by separate entities, and same follows for $:\!\vv{LA}\!:\!\vv{RB}:$. Each separate elements in a set are held by separate party, in general. For example, let us consider $\vv{A}=\{A_1,A_2,A_3\}$ for $|\msc{A}|=3$, then $\vv{A}$ also depicts composite system $A_1A_2A_3$ and $:\vv{A}:$ denotes the partition $A_1:A_2:A_3$ between each subsystem $A_a$ of $\vv{A}$. In a conference key agreement protocol, each pair $K_i,S_i$ of key and shield systems belongs to respective trusted party $\tf{X}_i$ and fully secure from Eve, while all $A'_a,A_a,B_b,C_c,K_i,S_i$ are physically inaccessible to Eve.

Let $\Phi^{\GHZ}_{\vv{K}}$ denote $M$-partite GHZ state and $\Phi^+_{\vv{L}|\vv{A}}$ denote an Einstein–Podolsky–Rosen (EPR) state~\cite{EPR35}, also called a maximally entangled state, where maximal entanglement is between $\vv{L}$ and $\vv{A}$. It should be noted that $\Phi^+_{\vv{L}|\vv{A}}=\bigotimes_{a\in\msc{A}}\Phi^+_{L_a|A_a}$, where
\begin{equation}
    \Phi^+_{L_a|A_a}=\frac{1}{d}\sum_{i,j=0}^{d-1}\ket{i,i}\bra{j,j}_{L_aA_a}
\end{equation}
for an orthonormal basis $\{\ket{i}\}_{i}$, where $d=\min\{|L_a|,|A_a|\}$. (Without loss of generality, one may assume n EPR state of an even-dimensional qudit system to be a tensor product of EPR states of qubit systems). 

A quantum channel $\mc{M}_{B\to C}$ is a completely positive,  trace-preserving map that acts on trace-class operators defined on the Hilbert space $\mc{H}_B$ and uniquely maps them to trace-class operators defined on the Hilbert space $\mc{H}_C$. For a channel $\mc{M}_{A\to B}$ with $A$ and $B$ as input and output systems, its Choi state $J^\mc{M}_{LB}$ is equal to $\mc{M}(\Phi^+_{LA})$.

A measurement channel $\mc{M}_{A'\to AX}$ is a quantum instrument whose action is expressed as
\begin{equation}
    \mc{M}_{A'\to AX}(\cdot)=\sum_{x}\mathcal{E}^x_{A'\to A}(\cdot)\otimes\op{x}_X,
\end{equation}
where each $\mathcal{E}^x$ is a completely positive, trace non-increasing map such that $\mc{M}$ is a quantum channel and $X$ is a classical register that stores measurement outcomes. A classical register (system) $X$ can be represented with a set of orthogonal quantum states $\{\op{x}_X\}_{x\in\msc{X}}$ defined on the Hilbert space $\mc{H}_X$. 

An LOCC channel $\mc{L}_{\vv{A'}\to \vv{B}}$ can be written as $\sum_{x\in\msc{X}}(\bigotimes_{y\in\msc{Y}}\mc{E}^{y,x}_{A'_y\to B_y})$, where $\vv{A'}=\{A'_{y}\}_y$ and $\vv{B}=\{B_{y}\}_y$ are sets of inputs and outputs, respectively, and $\{\mc{E}^{y,x}\}_{x}$ is a set of completely positive trace non-increasing maps for each $y$ such that $\mc{L}$ is a quantum channel (cf.~\cite{CLM+14}). An LOCC channel does not increase value of entanglement monotones and is deemed as a free operation in resource theory of entanglement~\cite{HHHO09,AH09,CLM+14}.

A quantity is called a generalized divergence \cite{PV10,SW12} if it satisfies the following monotonicity (data-processing) inequality for all density operators $\rho$ and $\sigma$ and quantum channels $\mc{N}$:
\begin{equation}\label{eq:gen-div-mono}
\mathbf{D}(\rho\Vert \sigma)\geq \mathbf{D}(\mathcal{N}(\rho)\Vert \mc{N}(\sigma)).
\end{equation}
Examples include the quantum relative entropy~\cite{Ume62} 
\begin{equation}
D(\rho\Vert\sigma)\coloneqq \Tr[\rho\log_2(\rho-\sigma)],
\end{equation}
for $\supp(\rho)\subseteq \supp(\sigma)$, otherwise it is $\infty$, as well as the sandwiched R\'enyi relative entropy  \cite{MDSFT13,WWY14} which is denoted as $\wt{D}_\alpha(\rho\Vert\sigma)$  and defined for states
$\rho,\sigma$, and  $\forall \alpha\in (0,1)\cup(1,\infty)$ as
\begin{equation}\label{eq:def_sre}
\wt{D}_\alpha(\rho\Vert \sigma):= \frac{1}{\alpha-1}\log_2 \Tr\left[\left(\sigma^{ (1-\alpha) / 2\alpha }\rho\sigma^{ (1-\alpha) / 2\alpha }\right)^\alpha \right] ,
\end{equation}
but it is set to $+\infty$ for $\alpha\in(1,\infty)$ if $\supp(\rho)\nsubseteq \supp(\sigma)$. In the limit $\alpha\to 1$ the sandwiched R\'enyi relative entropy converges to the quantum relative entropy, in the limit  $\alpha\to \infty$, it converges to the max-relative entropy \cite{MDSFT13}, which is defined as \cite{D09,Dat09}
\begin{equation}\label{eq:max-rel}
D_{\max}(\rho\Vert\sigma)\coloneqq \inf\{\lambda\in\mathbb{R}:\ \rho \leq 2^\lambda\sigma\},
\end{equation}
and if $\supp(\rho)\nsubseteq\supp(\sigma)$ then $D_{\max}(\rho\Vert\sigma)=\infty$. Another generalized divergence  is the $\varepsilon$-hypothesis-testing divergence \cite{BD10,WR12},  defined as
\begin{equation}
D^\varepsilon_h\!\(\rho\Vert\sigma\):=-\log_2\inf_{\Lambda: 0\leq\Lambda\leq \bbm{1}}\{\Tr[\Lambda\sigma]:\  \Tr[\Lambda\rho]\geq 1-\varepsilon\},
\end{equation}
for $\varepsilon\in[0,1]$ and density operators $\rho,\sigma$. For a more detailed description and other examples of the generalized divergences like the trace distance $\norm{\rho-\sigma}_{1}$ and negative of fidelity $-F(\rho,\sigma)$ and their properties, see Appendix \ref{App:Div}.

\subsection{Multiplex quantum channels}\label{sec:mc}
We now formally define a general form of network channel which encompasses all other known multiplex quantum channels possible in communication or information processing settings (see Fig.~\ref{fig:mpc}a and Appendix~\ref{app:mqc}). To the best of our knowledge, we have not encountered such a general form of network channel in the literature of quantum communication and computation before. 

\begin{definition}
Consider multipartite quantum channel $\mc{N}_{\vv{A'}\vv{B}\to\vv{A}\vv{C}}$ where each pair $A_a', A_a$ is held by a respective party $\tf{A}_a$ and each $B_b,C_c$ are held by parties $\tf{B}_b,\tf{C}_c$, respectively. While $\tf{A}_a$ is both sender and receiver to the channel, $\tf{B}_b$ is only a sender, and $\tf{C}_c$ is only a receiver to the channel. Such a quantum channel is referred to as the multiplex quantum channel. Any two different systems need not be of the same size in general. 
\end{definition}

The sets $\msc{A}$, $\msc{B}$, or $\msc{C}$ can be empty in such a way that there is at least one input to the channel and one output from the channel. Definition 1 includes all scenarios depicted in Figure 1 (see Appendix B). For example, for a point-to-point channel from Bob to Charlie the set $\msc{A}=\emptyset$ and the sets $\msc{B}$ and $\msc{C}$ are singleton sets.

Also, any physical box with quantum or classical inputs and quantum or classical outputs is a type of a multiplex quantum channel. We may not have an exact description of what is going inside the box except that undergoing process is physical, i.e., described by quantum mechanics. Physical computational devices like a physical black box (oracle) and quantum circuit~\cite{NC00} are also examples of multiplex quantum channels.  

\if0
\subsection{Generalized divergence and entropies}\ref{App:Div}
A quantity is called a generalized divergence \cite{PV10,SW12} if it satisfies the following monotonicity (data-processing) inequality for all density operators $\rho$ and $\sigma$ and quantum channels $\mc{N}$:
\begin{equation}\label{eq:gen-div-mono}
\mathbf{D}(\rho\Vert \sigma)\geq \mathbf{D}(\mathcal{N}(\rho)\Vert \mc{N}(\sigma)).
\end{equation}
As a direct consequence of the above inequality, any generalized divergence satisfies the following two properties for an isometry $U$ and a state~$\tau$ \cite{WWY14}:
\begin{align}
\mathbf{D}(\rho\Vert \sigma) & = \mathbf{D}(U\rho U^\dag\Vert U \sigma U^\dag),\label{eq:gen-div-unitary}\\
\mathbf{D}(\rho\Vert \sigma) & = \mathbf{D}(\rho \otimes \tau \Vert \sigma \otimes \tau).\label{eq:gen-div-prod}
\end{align}

The sandwiched R\'enyi relative entropy  \cite{MDSFT13,WWY14} is denoted as $\wt{D}_\alpha(\rho\Vert\sigma)$  and defined for states
$\rho,\sigma$, and  $\forall \alpha\in (0,1)\cup(1,\infty)$ as
\begin{equation}\label{eq:def_sre}
\wt{D}_\alpha(\rho\Vert \sigma):= \frac{1}{\alpha-1}\log_2 \Tr\left[\left(\sigma^{\frac{1-\alpha}{2\alpha}}\rho\sigma^{\frac{1-\alpha}{2\alpha}}\right)^\alpha \right] ,
\end{equation}
but it is set to $+\infty$ for $\alpha\in(1,\infty)$ if $\supp(\rho)\nsubseteq \supp(\sigma)$.

The sandwiched R\'enyi relative entropy obeys the following ``monotonicity in $\alpha$'' inequality \cite{MDSFT13}:
\begin{equation}\label{eq:mono_sre}
\wt{D}_\alpha(\rho\Vert\sigma)\leq \wt{D}_\beta(\rho\Vert\sigma) \text{ if }  \alpha\leq \beta, \text{ for } \alpha,\beta\in(0,1)\cup(1,\infty).
\end{equation}
The following inequality states that the sandwiched R\'enyi relative entropy $\wt{D}_\alpha(\rho\Vert\sigma)$ between states $\rho,\sigma$ is a particular generalized divergence for certain values of $\alpha$ \cite{FL13,Bei13}. For a quantum channel $\mc{N}$,
\begin{equation}
\wt{D}_\alpha(\rho\Vert\sigma)\geq \wt{D}_\alpha(\mc{N}(\rho)\Vert\mc{N}(\sigma)), \ \forall \alpha\in \[1/2,1\)\cup (1,\infty).
\end{equation}

In the limit $\alpha\to 1$, the sandwiched R\'enyi relative entropy $\wt{D}_\alpha(\rho\Vert\sigma)$ between quantum states $\rho,\sigma$ converges to the quantum relative entropy \cite{MDSFT13,WWY14}:
\begin{equation}\label{eq:mono_renyi}
\lim_{\alpha\to 1}\wt{D}_\alpha(\rho\Vert\sigma)=D(\rho\Vert\sigma),
\end{equation}
and quantum relative entropy~\cite{Ume62} between states is
\begin{equation}
D(\rho\Vert\sigma)\coloneqq \Tr[\rho\log_2(\rho-\sigma)]
\end{equation}
for $\supp(\rho)\subseteq \supp(\sigma)$ and otherwise it is $\infty$. 

In the limit $\alpha\to \infty$, the sandwiched R\'enyi relative entropy $\wt{D}_\alpha(\rho\Vert\sigma)$ converges to the max-relative entropy \cite{MDSFT13}, which is defined as \cite{D09,Dat09}
\begin{equation}\label{eq:max-rel}
D_{\max}(\rho\Vert\sigma)=\inf\{\lambda\in\mathbb{R}:\ \rho \leq 2^\lambda\sigma\},
\end{equation}
and if $\supp(\rho)\nsubseteq\supp(\sigma)$ then $D_{\max}(\rho\Vert\sigma)=\infty$. Whereas, in the limit $\alpha\to 1/2$, the sandwiched R\'enyi relative entropy $\wt{D}_\alpha(\rho\Vert\sigma)$ converges to $-\log_2 F(\rho,\sigma)$, where $F(\rho,\sigma)$ is the fidelity between $\rho, \sigma$ and defined as
\begin{equation}
F(\rho,\sigma)\coloneqq \left[\Tr\left[\sqrt{\sqrt{\sigma}\rho\sqrt{\sigma}}\right]\right]^2.
\end{equation}

Another generalized divergence  is the $\varepsilon$-hypothesis-testing divergence \cite{BD10,WR12},  defined as
\begin{equation}
D^\varepsilon_h\!\(\rho\Vert\sigma\):=-\log_2\inf_{\Lambda: 0\leq\Lambda\leq \bbm{1}}\{\Tr[\Lambda\sigma]:\  \Tr[\Lambda\rho]\geq 1-\varepsilon\},
\end{equation}
for $\varepsilon\in[0,1]$ and density operators $\rho,\sigma$.

The following inequality relates $D^\varepsilon_h(\rho\Vert \sigma)$ to $\wt{D}_{\alpha}(\rho\Vert\sigma)$ for density operators $\rho,\sigma$, $\alpha\in(1,\infty)$ and $\varepsilon\in(0,1)$ \cite{HP91,N01,ON00},\cite[Lemma~5]{CMW14}:
\begin{equation}\label{eq:hyp-sand-rel}
D^\varepsilon_h(\rho\Vert \sigma)\leq  \wt{D}_{\alpha}(\rho\Vert\sigma)+\frac{\alpha}{\alpha-1}\log\left(\frac{1}{1-\varepsilon}\right).
\end{equation}
The following inequality also holds~\cite{WR12}:
\begin{equation}
D^\varepsilon_h(\rho\Vert\sigma)\leq \frac{1}{1-\varepsilon}\left(D(\rho\Vert\sigma)+h_2(\varepsilon)\right),
\end{equation}
where $h_2(\varepsilon)\coloneqq -\varepsilon\log_2\varepsilon-(1-\varepsilon)\log_2 (1-\varepsilon)$ is the binary entropy function. 

In a specific case $\varepsilon$-hypothesis testing relative entropy can be calculated exactly.
\begin{lemma}\label{lem:exactD} If $\rho$ is pure and it is one of the eigenvectors of $\sigma$ , i.e., there exists decomposition $\sigma = p_0 \rho + \sum_{i=1} p_i \gamma_i^\perp$, with $\sum_{i=0}p_i=1$, $0\le p_i \le 1$, $p_0 \neq 0$ and states $\gamma_i^\perp$ orthogonal to $\rho$ then for any $\epsilon \in [0,1]$:
\begin{align}
	D_h^\varepsilon \left(\rho \Vert \sigma  \right) = -\log_2 \Tr \left[\Omega \sigma\right],
\end{align}
with $\Omega = (1-\varepsilon)\rho$.
\end{lemma}

\fi

\subsection{Conference key and private states}\label{sec:rev-ck-ps}
There are two usual approaches to studying secret key distillation. A direct approach starts by considering purifications of states where the purifying system is accessible to Eve, and all allied parties are allowed to perform local operations and public communication (LOPC). In this approach, we have Eve and $M$ allied parties. Another approach is by considering private states defined below, where all allied parties perform LOCC. We need not consider Eve explicitly in the paradigm of private states, and it is assumed that purifications of states are accessible to Eve. Both approaches are known to be equivalent \cite{HHHO05}. We discuss the equivalence of these two approaches in more detail in Section~\ref{sec:cka}
 
We now review the properties of conference key states discussed in Ref.~\cite{AH09}. Conference key states are a multipartite generalization of secret key shared between two parties.   

\begin{definition}\label{def:key-state}
A conference key state $\gamma^c_{\vv{K}E}$, with $|K_i|=K$ for all $i\in[M]\coloneqq\{1,…,M\}$, is defined as 
\begin{align}
& \mathcal{D}_{K_1}\otimes\mathcal{D}_{K_2}\otimes\cdots\otimes \mathcal{D}_{K_M}\left(\gamma^c_{\vv{K} E}\right) \coloneqq \nonumber \\ 
&\qquad \frac{1}{K}\sum_{k\in\msc{K}}\op{k}_{K_1}\otimes \op{k}_{K_2}\otimes\cdots \otimes \op{k}_{K_M}\otimes\sigma_E, \label{eq:key-state}
\end{align}
where $\sigma_E$ is a state of the system $E$, which is accessible to an eavesdropper Eve, $\mathcal{D}(\cdot)=\sum_{k\in\msc{K}}\op{k}(\cdot)\op{k}$ is a projective measurement channel, $\{\ket{k}_{K_i}\}_{k\in\msc{K}}$ forms an orthonormal basis for each $i\in[M]$. 
\end{definition}
A conference key state $\gamma^c_{\vv{K} E}$ has $\log_2K$ secret bits (key) that are readily accessible. 

A state $\rho_{\vv{K}E}$ is called an $\varepsilon$-approximate conference key state, for $\varepsilon\in[0,1]$, if there exists a conference key state $\gamma^c_{\vv{K} E}$ such that~\cite{AH09}
\begin{equation}
F\(\gamma^c_{\vv{K} E},\rho_{\vv{K}E}\)\geq 1-\varepsilon. 
\end{equation} 

\begin{definition}\label{def:multi-priv-state}
A state $\gamma_{\vv{SM}}$, with $|K_i|=K$ for all $i\in[M]$ is called a ($M$-partite) private state if and only if 
\begin{equation}\label{eq:multi-priv-state}
\gamma_{\vv{SK}}\coloneqq U^{\text{tw}}_{\vv{SK}} (\Phi^{\GHZ}_{\vv{K}}\otimes\omega_{\vv{S}}) (U^{\text{tw}}_{\vv{SK}})^\dag,
\end{equation}
where $U^{\text{tw}}_{\vv{SK}}\coloneqq \sum_{\vv{k}\in\msc{K}^{\times M}}\op{\vv{k}}_{\vv{K}}\otimes U^{\vv{k}}_{\vv{S}}$ is called a twisting unitary operator for some unitary operator $U^{\vv{k}}_{\vv{S}}$ and $\omega$ is some density operator~\cite{AH09}. 
\end{definition}
It should be noted that $\gamma_{\vv{SK}}$ has at least $\log_2K$ secret (key) bits (see \cite{HHHO09} for a discussion of when the private state has exactly $\log_2 K$ bits). Similar to a conference key state, a state $\rho_{\vv{SK}}$ is called an $\varepsilon$-approximate private state for $\varepsilon\in[0,1]$ if there exists a private state $\gamma_{\vv{SK}}$ such that~\cite{AH09} 
\begin{equation}
F\(\gamma_{\vv{SK}},\rho_{\vv{SK}}\)\geq 1-\varepsilon. 
\end{equation}

Any state extension (including purification) $\gamma_{\vv{SK}E}$ of such a private state \eqref{eq:multi-priv-state} necessarily has the following form~\cite{AH09}:
\begin{equation}
\gamma_{\vv{SK}E}\coloneqq U^{\text{tw}}_{\vv{SK}E} \(\Phi_{\vv{K}}\otimes\omega_{\vv{S}E}\) (U^{\text{tw}}_{\vv{SK}E})^\dag,
\end{equation}
where $\omega_{\vv{S}E}$ is a state extension of the density operator $\omega_{\vv{S}}$.

It follows from \cite[Theorem~IV.1]{AH09} that 
$F(\gamma^c_{\vv{K} E},\rho_{\vv{K}E})\geq 1-\varepsilon$
implies $F(\gamma_{\vv{SK}},\rho_{\vv{SK}})\geq 1-\varepsilon$, 
and the converse is also true, i.e., $F(\gamma_{\vv{SK}},\rho_{\vv{SK}})\geq 1-\varepsilon$ implies $F(\gamma^c_{\vv{K}E},\rho_{\vv{K}E})\geq 1-\varepsilon$.

It is known that all perfect private states have nonlocal correlations~\cite{ACPA10}.

\section{Entanglement and privacy test}\label{sec:MEPT}
this section introduces frameworks for the resource theories of multipartite entanglement for the multipartite quantum channels (see Refs.~\cite{D18thesis,DBW17,BDWW19,GS19} for the discussion on bipartite channels).
\subsection{Multipartite entanglement}\label{sec:ME}
Here we provide a short overview of the relevant definitions. For a detailed review of the topic, see Ref.~\cite{friis2019entanglement}. A pure $n$-partite state that can be written as a tensor product $\ket{\psi_1}\otimes\ket{\psi_2}\otimes...\otimes\ket{\psi_m}$ is called \textit{$m$-separable}. If $m<n$, there are partitions of the set of all the parties into two with respect to which the state is entangled. If $n=m$, the pure state is said to be fully separable. If there is no bipartition with respect to which the pure state is a product state, it is called genuinely $n$-partite entangled.

An arbitrary $n$-partite state is $m$-separable if it can be written as following convex composition:
\begin{equation}
\rho_{m-sep}=\sum_{x\in\msc{X}} p_X(x) \op{\psi^x_1}\otimes\op{\psi^x_2} \otimes...\otimes\op{\psi^x_m},
\end{equation}
where $p_X(x)$ is a probability distribution. 
The $m$-separable states form a convex set. Note, however, that the subsystems with respect to which the elements of the decomposition have to be product can differ. 

A  mixed $n$-partite state is considered \textit{genuinely multipartite entangled} (GME) if any decomposition into pure states contains at least one genuinely $n$-partite entangled pure state, i.e. the state is not biseparable. Let a free set $\tf{F}(:\vv{A}:)$ denote the set of all fully separable and biseparable states of system $\vv{A}$ for $\tf{F}=\FS$ and $\tf{F}=\BS$, respectively. Both the sets, $\FS$ and $\BS$, are convex. We note that while $\FS$ is preserved under LOCC operation and tensor product, $\BS$ is preserved under LOCC but not under tensor product, i.e., $\rho^{(x)}_{\vv{A^{(x)}}}\in\BS(:\vv{A^{(x)}}:)$ for $x\in[2]$ but $\rho^{(1)}\otimes\rho^{(2)}$ need not belong to $\BS(:\!\vv{A^{(1)}A^{(2)}}\!:)$. We refer to biseparable quantum states whose biseparability is preserved under tensor products, i.e. $\rho^{(x)}_{\vv{A^{(x)}}}\in\BS(:\vv{A^{(x)}}:)$ and $\rho^{(1)}\otimes...\otimes\rho^{(n)}\in\BS(:\!\vv{A^{(1)}\ldots A^{(n)}}\!:)$ for all $n\in\mathbb{N}$, as \textit{tensor-stable biseparable} states. 

\subsection{Entanglement measures}\label{sec:EM} 
It is pertinent to quantify the resourcefulness of states and channels. The bounds on the capacities that we obtain are in terms of these quantifiers. It is desirable for entanglement quantifiers to be non-negative, attain their minimum for the free states (and separable channels respectively), and be monotone under the action of LOCC.

\begin{definition}\label{def:gen-div-ent}
The generalized divergence of entanglement $\tf{E}_{\text{E}}$ or genuine mulipartite entanglement (GME) $\tf{E}_\text{GE}$ of an arbitrary state $\rho_{\vv{A}}$ is defined as~\cite{CPV19}
\begin{equation}\label{eq:def-gen-ent}
\tf{E}_{r}(:\!\vv{A}\!:)_{\rho}\coloneqq \inf_{\sigma\in\tf{F}(:\!\vv{A}\!:)}\tf{D}(\rho_{\vv{A}}\Vert\sigma_{\vv{A}}),
\end{equation} 
when $\tf{F}=\FS$ or $\tf{F}=\BS$ for $r=\text{E}$ or $r=\text{GE}$, respectively, where $\tf{D}(\rho\Vert\sigma)$ denotes the generalized divergence. 
\end{definition}

The following definition of entanglement measure of a multiplex channel generalizes the notion of entangling power of bipartite quantum channels~\cite{BHLS03} (see also \cite{KW17,DBW17,D18thesis}). 
\begin{definition}
Entangling power of a multiplex channel $\mc{N}_{\vv{A'}\vv{B}\to\vv{A}\vv{C}}$ with respect to entanglement measure $\tf{E}_{r}$ [Eq. ~\eqref{eq:def-gen-ent}] is defined as the maximum possible gain in the entanglement $\tf{E}_r$ when a quantum state when acted upon by the given channel $\mc{N}$, 
\begin{align}
&\tf{E}_{r}^{p}(\mc{N})\coloneqq\nonumber\\
&\qquad \sup_{\rho}\left[\tf{E}_{r}(:\!\vv{LA}\!:\!\vv{R}\!:\!\vv{PC}\!:)_{\mc{N}(\rho)} -\tf{E}_{r}(:\!\vv{LA'}\!:\!\vv{RB}\!:\!\vv{P}\!:)_{\rho}\right],
\end{align}
where optimization is over all possible input states $\rho_{\vv{LA'}\vv{RB}\vv{P}}$. 
\end{definition}

Another way to quantify entanglement measure of a multiplex channel is the following (see Ref.~\cite{D18thesis} for bidirectional channel).
\begin{definition}\label{def:gen-div-ent-c}
The generalized divergence of entanglement $\tf{E}_\text{E}(\mc{N})$ or GME $\tf{E}_{\text{GE}}(\mc{N})$ of a multiplex channel $\mc{N}_{\vv{A'}\vv{B}\to\vv{A}\vv{C}}$
\begin{equation}\label{eq:def-gen-ent-c}
\tf{E}_{r}(\mc{N})\coloneqq \sup_{\rho\in\FS(:\vv{LA'}:\vv{RB}:)}\tf{E}_{r}(:\!\vv{LA}\!:\!\vv{R}\!:\!\vv{C}\!:)_{\mc{N}(\rho)},
\end{equation} 
for $r=\text{E}$ or $r=\text{GE}$, respectively, where $\tf{E}_{r}(:\!\vv{A}\!:)_{\rho}$ is defined in \eqref{eq:def-gen-ent} and GME stands for genuinely multipartite entanglement.
\end{definition}

For $r=\text{E}$, the entanglement measure in \eqref{eq:def-gen-ent} is called $\varepsilon$-hypothesis testing relative entropy of entanglement $E^\varepsilon_{h,\E}$, max-relative entropy of entanglement $E_{\max,\E}$, sandwiched R\`{e}nyi relative entropy of entanglement $\wt{E}_{\alpha, \E}$, or relative entropy of entanglement $E_{\E}$ when the generalized divergence is the $\varepsilon$-hypothesis testing relative entropy, max-relative entropy, sandwiched R\'{enyi} relative entropy, or relative entropy, respectively. For $r=\text{GE}$, the entanglement measure in \eqref{eq:def-gen-ent} is called $\varepsilon$-hypothesis testing relative entropy of GME $E^\varepsilon_{h,\GE}$, max-relative entropy of GME $E_{\max,\GE}$, sandwiched R\`{e}nyi relative entropy of GME $\wt{E}_{\alpha, \GE}$, or relative entropy of GME when the generalized divergence $E_{\GE}$ is the $\varepsilon$-hypothesis testing relative entropy, max-relative entropy, sandwiched R\'{enyi} relative entropy, or relative entropy, respectively. We follow the same procedure for nomenclature of entanglement measures of channels. 

We note that the sets $\FS,\BS$ are convex. Using the data-processed triangle inequality~\cite{CM17} and the argument from the proof of \cite[Proposition~2]{DBW17}, we arrive at the following lemma. 
\begin{lemma}\label{lem:max-amor}
The entangling power of a multiplex channel $\mc{N}_{\vv{A'}\vv{B}\to\vv{A}\vv{C}}$ with respect to the max-relative entropy of entanglement $E_{\max,\E}$ is equal to the max-relative entropy of entanglement of the channel $\mc{N}$,
\begin{equation}
E_{\max,E}^{p}(\mc{N})=E_{\max,E}(\mc{N}).\label{eq:max-amor}
\end{equation}
\end{lemma}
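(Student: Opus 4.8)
The plan is to prove equality by establishing the two inequalities $E_{\max,E}^{p}(\mc{N}) \leq E_{\max,E}(\mc{N})$ and $E_{\max,E}^{p}(\mc{N}) \geq E_{\max,E}(\mc{N})$ separately. The second inequality is the easy direction: by definition of the entangling power as a supremum over \emph{all} input states $\rho_{\vv{LA'}\vv{RB}\vv{P}}$, we may restrict to fully separable inputs $\rho \in \FS(:\!\vv{LA'}\!:\!\vv{RB}\!:)$ (extending trivially to the $\vv{P}$ systems), for which $E_{\max,\E}(:\!\vv{LA'}\!:\!\vv{RB}\!:\!\vv{P}\!:)_\rho = 0$ since a fully separable state has max-relative entropy of entanglement zero. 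The remaining term $E_{\max,\E}(:\!\vv{LA}\!:\!\vv{R}\!:\!\vv{PC}\!:)_{\mc{N}(\rho)}$ then matches the definition of $E_{\max,E}(\mc{N})$ in \eqref{eq:def-gen-ent-c} up to inert local systems $\vv{P}$, which only enlarge the set over which one optimizes and cannot decrease the value; taking the supremum yields $E_{\max,E}^{p}(\mc{N}) \geq E_{\max,E}(\mc{N})$.

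For the hard direction $E_{\max,E}^{p}(\mc{N}) \leq E_{\max,E}(\mc{N})$, the key tool is the data-processed triangle inequality of Ref.~\cite{CM17} applied to the max-relative entropy, following the template of Ref.~\cite[Proposition~2]{DBW17}. Fix an arbitrary input $\rho_{\vv{LA'}\vv{RB}\vv{P}}$ and let $\omega$ be an optimal fully separable state for $E_{\max,\E}(:\!\vv{LA'}\!:\!\vv{RB}\!:\!\vv{P}\!:)_\rho$, so $D_{\max}(\rho\Vert\omega) = E_{\max,\E}(\rho)$. The data-processed triangle inequality gives, for the channel $\mc{N}$ acting on the $\vv{A'B}$ systems (with identity on $\vv{LRP}$),
\begin{equation}
E_{\max,\E}(:\!\vv{LA}\!:\!\vv{R}\!:\!\vv{PC}\!:)_{\mc{N}(\rho)} \leq D_{\max}(\mc{N}(\rho)\Vert\mc{N}(\omega)) + E_{\max,\E}(:\!\vv{LA}\!:\!\vv{R}\!:\!\vv{PC}\!:)_{\mc{N}(\omega)}.
\end{equation}
By monotonicity of $D_{\max}$ under the channel, $D_{\max}(\mc{N}(\rho)\Vert\mc{N}(\omega)) \leq D_{\max}(\rho\Vert\omega) = E_{\max,\E}(\rho)$, so rearranging yields that the entangling-power bracket is at most $E_{\max,\E}(:\!\vv{LA}\!:\!\vv{R}\!:\!\vv{PC}\!:)_{\mc{N}(\omega)}$ for a \emph{fully separable} $\omega$. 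Since $\omega$ is fully separable, the systems $\vv{P}$ remain uncorrelated with the rest after applying $\mc{N}$ and can be discarded without changing the value (or one folds them into the $\vv{L}$ registers), so this is bounded by the supremum over fully separable inputs, i.e.\ by $E_{\max,E}(\mc{N})$. Taking the supremum over $\rho$ completes the argument.

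The main obstacle I expect is the bookkeeping around the spectator systems $\vv{P}$ and the ancillary reference systems $\vv{L},\vv{R}$: one must verify that enlarging or shrinking these registers does not change $E_{\max,E}(\mc{N})$ — this needs a short ``it suffices to consider inputs with $\vv{L},\vv{R}$ isomorphic to $\vv{A'},\vv{B}$'' dimension-reduction lemma for the max-relative entropy of entanglement (purification / Schmidt-rank argument), plus the observation that for fully separable $\omega$ the $\vv{P}$ factor genuinely decouples. The other delicate point is checking that the data-processed triangle inequality of Ref.~\cite{CM17} applies in this multipartite setting with the relevant partition $:\!\vv{LA}\!:\!\vv{R}\!:\!\vv{PC}\!:$ — but since that inequality is stated at the level of a generalized divergence and a fixed channel, and $E_{\max,\E}$ is an infimum of $D_{\max}$ over the convex set $\FS$ which is closed under $\mc{N}$ (as $\FS$ is preserved under LOCC and in particular under tensoring local channels), the argument of Ref.~\cite[Proposition~2]{DBW17} goes through essentially verbatim with $\FS$ in place of the bipartite separable set.
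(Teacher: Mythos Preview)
Your proposal is correct and follows essentially the same approach as the paper: both directions are handled identically, with the hard inequality obtained via the data-processed triangle inequality of \cite{CM17} applied with an intermediate fully separable state, exactly in the spirit of \cite[Proposition~2]{DBW17}. Your decomposition of that step into ``triangle inequality plus monotonicity of $D_{\max}$'' is logically equivalent to the paper's direct invocation of the data-processed triangle inequality, and your remarks about the spectator systems $\vv{P}$ correctly flag a bookkeeping point that the paper also passes over quickly.
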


Using a recent result on relative entropies~\cite{fang2019chain}, we can also obtain a result for the relative entropy of entanglement. Let us first define the regularized relative entropy of entanglement of a multiplex channel $\mc{N}_{\vv{A'}\vv{B}\to\vv{A}\vv{C}}$ as
\begin{equation}
E^\infty_{R}(\mc{N}):=\inf_{\Lambda\in\LOCC}D^\infty(\mc{N}_{\vv{A'}\vv{B}\to\vv{A}\vv{C}}||\Lambda_{\vv{A'}\vv{B}\to\vv{A}\vv{C}}),
\end{equation}
where $D^\infty(\mc{N}||\mc{M}):=\lim_{n\to\infty}\frac{1}{n}D(\mc{N}^{\otimes n}||\mc{M}^{\otimes n})$ and 
\begin{equation}
D(\mc{N}||\mc{M}):=\max_{\phi_{\vv{LA'}\vv{RB}\vv{P}}}D(\mc{N}_{\vv{A'}\vv{B}\to\vv{A}\vv{C}}(\phi)||\mc{M}_{\vv{A'}\vv{B}\to\vv{A}\vv{C}}(\phi)),
\end{equation}
where $L\simeq A'$, $R\simeq B$ and $P\simeq C$. We can now show the following relation between the regularized relative entropy of entanglement and the relative entropy of entanglement.
\begin{lemma}\label{lem:ree-amor}
For finite dimensional Hilbert spaces the entangling power of a multiplex channel $\mc{N}_{\vv{A'}\vv{B}\to\vv{A}\vv{C}}$ with respect to the relative entropy of entanglement $E_{E}$ is less than or equal to the regularized relative entropy of entanglement of the channel $\mc{N}$,
\begin{equation}
E_{E}^{p}(\mc{N})\leq E^\infty_{E}(\mc{N}).
\end{equation}
\end{lemma}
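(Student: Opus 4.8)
The plan is to re-run the argument of Lemma~\ref{lem:max-amor}, but with the data-processed triangle inequality for $D_{\max}$ replaced by the chain rule for the quantum relative entropy proved in~\cite{fang2019chain}. That chain rule asserts that, for finite-dimensional quantum channels $\mc{N}$ and $\mc{M}$ with the same input and output systems and for arbitrary states $\rho,\sigma$ on a reference system tensored with the channel input, $D(\mc{N}(\rho)\Vert\mc{M}(\sigma))\leq D(\rho\Vert\sigma)+D^{\infty}(\mc{N}\Vert\mc{M})$, where $D^{\infty}$ is precisely the regularized channel divergence defined just above the statement. This is the single nontrivial external ingredient; everything else is bookkeeping about partitions and about how $\FS$ behaves under LOCC.

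Concretely, I would fix an arbitrary input state $\rho_{\vv{LA'}\vv{RB}\vv{P}}$, an arbitrary fully separable state $\sigma'\in\FS(:\!\vv{LA'}\!:\!\vv{RB}\!:\!\vv{P}\!:)$, and an arbitrary LOCC channel $\Lambda_{\vv{A'}\vv{B}\to\vv{A}\vv{C}}$. The first step is to observe that $\Lambda_{\vv{A'}\vv{B}\to\vv{A}\vv{C}}(\sigma')\in\FS(:\!\vv{LA}\!:\!\vv{R}\!:\!\vv{PC}\!:)$: the channel $\Lambda$ is LOCC with respect to the partition induced by the parties (Alice~$a$ maps $A'_a\!\to\!A_a$ while retaining $L_a$, Bob~$b$ discards $B_b$ while retaining $R_b$, Charlie~$c$ produces $C_c$ alongside $P_c$), and $\FS$ is preserved under LOCC as recalled in Section~\ref{sec:ME}; the only subtlety is that the partition on which separability holds changes shape across the channel. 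Consequently, by the definition of $E_{E}$ as an infimum over $\FS$, $E_{E}(:\!\vv{LA}\!:\!\vv{R}\!:\!\vv{PC}\!:)_{\mc{N}(\rho)}\leq D(\mc{N}_{\vv{A'}\vv{B}\to\vv{A}\vv{C}}(\rho)\Vert\Lambda_{\vv{A'}\vv{B}\to\vv{A}\vv{C}}(\sigma'))$.

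Next I would invoke the chain rule on the right-hand side, with channel input $\vv{A'}\vv{B}$, channel output $\vv{A}\vv{C}$, reference $\vv{L}\vv{R}\vv{P}$, and channels $\mc{N}$ and $\Lambda$, to get $D(\mc{N}(\rho)\Vert\Lambda(\sigma'))\leq D(\rho\Vert\sigma')+D^{\infty}(\mc{N}_{\vv{A'}\vv{B}\to\vv{A}\vv{C}}\Vert\Lambda_{\vv{A'}\vv{B}\to\vv{A}\vv{C}})$. Since $\sigma'$ and $\Lambda$ are arbitrary and the left-hand side involves neither, I can minimize the two terms on the right independently: $\inf_{\sigma'\in\FS}D(\rho\Vert\sigma')=E_{E}(:\!\vv{LA'}\!:\!\vv{RB}\!:\!\vv{P}\!:)_{\rho}$, and $\inf_{\Lambda\in\LOCC}D^{\infty}(\mc{N}\Vert\Lambda)=E^{\infty}_{R}(\mc{N})$. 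Hence $E_{E}(:\!\vv{LA}\!:\!\vv{R}\!:\!\vv{PC}\!:)_{\mc{N}(\rho)}-E_{E}(:\!\vv{LA'}\!:\!\vv{RB}\!:\!\vv{P}\!:)_{\rho}\leq E^{\infty}_{R}(\mc{N})$, and taking the supremum over all input states $\rho_{\vv{LA'}\vv{RB}\vv{P}}$ yields $E_{E}^{p}(\mc{N})\leq E^{\infty}_{R}(\mc{N})=E^{\infty}_{E}(\mc{N})$.

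The main obstacle is that the argument leans entirely on the chain rule of~\cite{fang2019chain}, a deep recent result valid only in finite dimensions -- which is exactly why that hypothesis appears in the statement -- and whose regularized form is what forces a regularized (rather than single-letter) quantity on the right-hand side, explaining the $\lim$ inside $E^{\infty}_{R}$. A minor additional check is that the reference-system convention used in the definition of $D(\mc{N}\Vert\mc{M})$ (namely $L\simeq A'$, $R\simeq B$, $P\simeq C$) furnishes a reference at least as large as the channel input, so that $D^{\infty}(\mc{N}\Vert\Lambda)$ as defined indeed dominates the term produced by the chain rule; since $\vv{L}\vv{R}$ alone is already isomorphic to $\vv{A'}\vv{B}$, this holds immediately and the extra idle $\vv{P}$ does not change the value.
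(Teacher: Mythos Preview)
Your proposal is correct and follows essentially the same route as the paper's own proof: fix $\rho$, a fully separable $\sigma'$, and an LOCC channel $\Lambda$, bound $E_E$ of the output by $D(\mc{N}(\rho)\Vert\Lambda(\sigma'))$, apply the Fang--Fawzi chain rule, and then minimize over $\sigma'$ and $\Lambda$ before taking the supremum over $\rho$. Your write-up is in fact somewhat more careful than the paper's, spelling out why $\Lambda(\sigma')\in\FS(:\!\vv{LA}\!:\!\vv{R}\!:\!\vv{PC}\!:)$ and checking the reference-system conventions needed for the chain rule.
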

\begin{proof}
Let $\rho_{\vv{LA'}\vv{RB}\vv{P}}$ be a state and let $\sigma'\in\FS(:\vv{LA'}:\vv{RB}:\vv{P}:)$. Let $\Lambda_{\vv{A'}\vv{B}\to\vv{A}\vv{C}}$ an LOCC channel. Then the following inequality holds
\begin{multline}
E_{E}(:\vv{LA}:\vv{R}:\vv{PC}:)_{\mc{N}(\rho)}\\ 
 \leq D\(\mc{N}_{\vv{A'}\vv{B}\to\vv{A}\vv{C}}(\rho_{\vv{LA'}\vv{RB}\vv{P}})\Vert \Lambda_{\vv{A'}\vv{B}\to\vv{A}\vv{C}}(\sigma'_{\vv{LA'}\vv{RB}\vv{P}}) \).
 \end{multline}
Applying the the chain rule from Ref.~\cite{fang2019chain}, we find that
\begin{align*}
&D\(\mc{N}_{\vv{A'}\vv{B}\to\vv{A}\vv{C}}(\rho_{\vv{LA'}\vv{RB}\vv{P}})\Vert \Lambda_{\vv{A'}\vv{B}\to\vv{A}\vv{C}}(\sigma'_{\vv{LA'}\vv{RB}\vv{P}}) \)\\
\leq&D\(\rho_{\vv{LA'}\vv{RB}\vv{P}}\Vert \sigma'_{\vv{LA'}\vv{RB}\vv{P}} \)\\ 
& +D^{\infty}\(\mc{N}_{\vv{A'}\vv{B}\to\vv{A}\vv{C}}\Vert \Lambda_{\vv{A'}\vv{B}\to\vv{A}\vv{C}}\).
\end{align*}
Since the above holds for arbitrary fully separable states $\sigma'_{\vv{LA'}\vv{RB}\vv{P}}$ and arbitrary LOCC channels $\Lambda_{\vv{A'}\vv{B}\to\vv{A}\vv{C}}$, we arrive at
\begin{multline}
E_{E}(:\vv{LA}:\vv{R}:\vv{PC}:)_{\mc{N}(\rho)}\\
 \leq  E_{E}(:\vv{LA'}:\vv{RB}:\vv{P}:)_{\rho}+ E^\infty_{E}(\mc{N}),
\end{multline}
finishing the proof.
\end{proof}

\begin{remark}
It suffices to optimize $E^\varepsilon_{h,\E}(\mc{N})$, $E^\varepsilon_{h,\GE}(\mc{N})$, $E_{\max,\E}(\mc{N})$, $E_{E,\E}(\mc{N})$ and $E_{\max,\GE}(\mc{N})$ of a multiplex channel $\mc{N}$ over all pure input states, i.e., $\rho\in\FS(:\vv{LA'}:\vv{RB}:)$ is a pure state in \eqref{eq:def-gen-ent-c} for $E^\varepsilon_{h,\E}(\mc{N}),E^\varepsilon_{h,\GE}(\mc{N}), E_{\max,\E}(\mc{N}), E_{E,\E}(\mc{N}), E_{\max,\GE}(\mc{N})$. This reduction follows from the quasi-convexity of the max-relative entropy~\cite{D09} and $\varepsilon$-hypothesis testing relative entropy~\cite{TT13}, as well as the convexity of the relative entropy of entanglement~\cite{VP98}. Namely, the maximum of a (quasi)-convex function over a convex set will be attained on a boundary point. The boundary points of the set of fully separable density matrices are given by the fully separable pure states. 
\end{remark}

\subsection{Multipartite privacy test}
A $\gamma$-privacy test corresponding to $\gamma_{\vv{SK}}$ is defined as the dichotomic measurement~\cite{WTB16}
$\{\Pi^{\gamma}_{\vv{SK}},\bbm{1}-\Pi^{\gamma}_{\vv{SK}}\}$, where $\Pi^{\gamma}_{\vv{SK}}\coloneqq U^{\text{tw}}_{\vv{SK}} (\Phi_{\vv{K}}\otimes\bbm{1}_{\vv{S}}) (U^{\text{tw}}_{\vv{SK}})^\dag$. 

Using properties of fidelity and form of test measurement, we arrive at following proposition. 
\begin{proposition}\label{thm:priv-test}
If a state $\rho_{\vv{SK}}$ is $\varepsilon$-approximate to $\gamma_{\vv{SK}}$, i.e., $F(\rho_{\vv{SK}},\gamma_{\vv{SK}})\geq 1-\varepsilon$ then $\rho_{\vv{KS}}$ passes $\gamma$-privacy test with success probability $ 1-\varepsilon$, i.e., 
\begin{equation}
\Tr[\Pi^{\gamma}_{\vv{SK}}\rho_{\vv{SK}}] \geq 1-\varepsilon.
\end{equation} 
\end{proposition}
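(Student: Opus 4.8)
The plan is to reduce the statement to the single structural fact that $\gamma_{\vv{SK}}$ is supported inside the range of the projector $\Pi^{\gamma}_{\vv{SK}}$, and then to transport the fidelity hypothesis through the privacy-test measurement using monotonicity of fidelity under quantum channels.

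First I would verify that $\Tr[\Pi^{\gamma}_{\vv{SK}}\gamma_{\vv{SK}}]=1$. Writing out $\Pi^{\gamma}_{\vv{SK}}=U^{\text{tw}}_{\vv{SK}}(\Phi_{\vv{K}}\otimes\bbm{1}_{\vv{S}})(U^{\text{tw}}_{\vv{SK}})^\dag$ together with the definition $\gamma_{\vv{SK}}=U^{\text{tw}}_{\vv{SK}}(\Phi^{\GHZ}_{\vv{K}}\otimes\omega_{\vv{S}})(U^{\text{tw}}_{\vv{SK}})^\dag$, and cancelling the twisting unitaries using $ (U^{\text{tw}}_{\vv{SK}})^\dag U^{\text{tw}}_{\vv{SK}}=\bbm{1}$, the product collapses to $U^{\text{tw}}_{\vv{SK}}\big((\Phi_{\vv{K}}\Phi^{\GHZ}_{\vv{K}})\otimes\omega_{\vv{S}}\big)(U^{\text{tw}}_{\vv{SK}})^\dag$. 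Since $\Phi_{\vv{K}}$ is the rank-one projector onto the GHZ state that also defines $\Phi^{\GHZ}_{\vv{K}}$, idempotency gives $\Phi_{\vv{K}}\Phi^{\GHZ}_{\vv{K}}=\Phi^{\GHZ}_{\vv{K}}$, so $\Pi^{\gamma}_{\vv{SK}}\gamma_{\vv{SK}}=\gamma_{\vv{SK}}$ and the trace is $1$.

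Next I would introduce the classical-outcome channel $\mc{M}(X)\coloneqq \Tr[\Pi^{\gamma}_{\vv{SK}}X]\op{0}+\Tr[(\bbm{1}-\Pi^{\gamma}_{\vv{SK}})X]\op{1}$ associated with the dichotomic privacy test, and apply monotonicity of fidelity under channels to obtain $F(\mc{M}(\rho_{\vv{SK}}),\mc{M}(\gamma_{\vv{SK}}))\geq F(\rho_{\vv{SK}},\gamma_{\vv{SK}})\geq 1-\varepsilon$. By the previous step $\mc{M}(\gamma_{\vv{SK}})=\op{0}$ is pure, so, with the paper's convention that $F$ is the squared fidelity, the left-hand side equals $\bra{0}\mc{M}(\rho_{\vv{SK}})\ket{0}=\Tr[\Pi^{\gamma}_{\vv{SK}}\rho_{\vv{SK}}]$, which is exactly the claim. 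Equivalently, one can quote the two-outcome fidelity inequality $F(\rho,\gamma)\leq\big(\sqrt{\Tr[\Pi\rho]\Tr[\Pi\gamma]}+\sqrt{\Tr[(\bbm{1}-\Pi)\rho]\Tr[(\bbm{1}-\Pi)\gamma]}\big)^2$ with $\Pi=\Pi^{\gamma}_{\vv{SK}}$, which collapses to $F(\rho_{\vv{SK}},\gamma_{\vv{SK}})\leq\Tr[\Pi^{\gamma}_{\vv{SK}}\rho_{\vv{SK}}]$ because $\Tr[\Pi^{\gamma}_{\vv{SK}}\gamma_{\vv{SK}}]=1$.

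There is no serious obstacle here; this is the multipartite transcription of the bipartite privacy-test argument of Ref.~\cite{WTB16}. The only points to watch are getting the direction of the fidelity monotonicity right — fidelity \emph{increases} under a channel, unlike trace distance — and keeping track of the squared-fidelity convention so that fidelity against the pure state $\op{0}$ is literally the acceptance probability. The genuinely new input for the multipartite setting is merely that the twisting unitary $U^{\text{tw}}_{\vv{SK}}$ still factors through and that the GHZ projector $\Phi^{\GHZ}_{\vv{K}}$ is idempotent, both of which are immediate.
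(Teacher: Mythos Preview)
Your proof is correct and follows essentially the same idea as the paper: both arguments reduce to the monotonicity of fidelity under a quantum channel together with the fact that $\gamma_{\vv{SK}}$ lies in the range of $\Pi^{\gamma}_{\vv{SK}}$. The only cosmetic difference is that the paper applies monotonicity to the partial trace $\Tr_{\vv{S}}$ after untwisting (obtaining $\Tr[\Pi^{\gamma}\rho]=F(\Phi^{\GHZ}_{\vv{K}},\Tr_{\vv{S}}[(U^{\text{tw}})^\dag\rho\, U^{\text{tw}}])\geq F(\gamma,\rho)$), whereas you apply it directly to the two-outcome measurement channel; your route is arguably a touch more direct.
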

\begin{proof}
\begin{align}
& \Tr[\Pi^{\gamma}_{\vv{SK}}\rho_{\vv{SK}}]\nonumber \\
& \quad = \bra{\Phi^{\GHZ}}_{\vv{K}}\Tr_{\vv{S}}[(U^{\text{tw}}_{\vv{SK}})^\dag \rho_{\vv{SK}}U^{\text{tw}}_{\vv{SK}}]\ket{\Phi^{\GHZ}}_{\vv{K}}\\
&\quad = F\(\Phi^{\GHZ}_{\vv{K}},\Tr_{\vv{S}}[(U^{\text{tw}}_{\vv{SK}})^\dag \rho_{\vv{SK}}U^{\text{tw}}_{K_1\ldots K_MS_1\ldots S_M}]\)\\
&\quad \geq F\(\Phi^{\GHZ}_{\vv{K}}\otimes\omega_{\vv{S}},(U^{\text{tw}}_{\vv{SK}})^\dag \rho_{\vv{SK}}U^{\text{tw}}_{\vv{SK}}\)\\
&\quad = F\(U^{\text{tw}}_{\vv{SK}} \Phi^{\GHZ}_{\vv{K}}\otimes\omega_{\vv{S}}(U^{\text{tw}}_{\vv{SK}})^\dag, \rho_{\vv{SK}}\)\\
& \quad =F(\gamma_{\vv{SK}},\rho_{\vv{SK}})\geq 1-\varepsilon .
\end{align}
\end{proof}

We employ proof arguments similar to bipartite case of Eq. (298) in Ref.~\cite{HHHO09} to arrive at the following theorem, which implies that all private states are necessarily GME states. This is a strict generalization of Eq. (281)~in Ref. \cite{HHHO09}, as a direct generalization would be the same statement for fully separable states instead of biseparable states (cf. Ref. ~\cite{AH09}). See Appendix~\ref{app:priv-t} for the proof.
\begin{theorem}\label{thm:priv-gme}
A biseparable state $\sigma_{\vv{SK}}\in\BS(:\!\vv{SK}\!:)$ can never pass any $\gamma$-privacy test with probability greater $1/K$, i.e.,
\begin{align}
    \Tr[\Pi^{\gamma}_{\vv{SK}}\sigma_{\vv{SK}}]\leq \frac{1}{K}.
\end{align} 
\end{theorem}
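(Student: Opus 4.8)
The plan is to generalize the bipartite estimate of \cite[Eq.~(281)]{HHHO09} from two parties and fully separable states to $M$ parties and biseparable states. Since the passing probability $\sigma_{\vv{SK}}\mapsto\Tr[\Pi^{\gamma}_{\vv{SK}}\sigma_{\vv{SK}}]$ is affine and $\BS(:\!\vv{SK}\!:)$ is the convex hull of pure states that factorize across some bipartition of the $M$ trusted parties, it suffices to prove the bound for such a pure state. Fix a bipartition; relabelling the parties, a generic such state reads $\ket{\psi}_{\vv{SK}}=\ket{\phi}_{S_1K_1\cdots S_jK_j}\otimes\ket{\chi}_{S_{j+1}K_{j+1}\cdots S_MK_M}$ with $\ket{\phi},\ket{\chi}$ normalized. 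The one structural point I would keep front and centre is that biseparability cuts \emph{parties}: party $\tf{X}_i$ holds the whole pair $S_iK_i$, so the key register $K_i$ always sits on the same side of the cut as its shield $S_i$.

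Next I would unfold the test probability using the controlled form of the twisting. Write $\ket{\Phi^{\GHZ}}_{\vv{K}}=\tfrac{1}{\sqrt{K}}\sum_{k\in\msc{K}}\ket{k}_{K_1}\otimes\cdots\otimes\ket{k}_{K_M}$ and recall $\Pi^{\gamma}_{\vv{SK}}=U^{\text{tw}}_{\vv{SK}}(\Phi_{\vv{K}}\otimes\bbm{1}_{\vv{S}})(U^{\text{tw}}_{\vv{SK}})^{\dag}$ with $U^{\text{tw}}_{\vv{SK}}=\sum_{\vv{k}}\op{\vv{k}}_{\vv{K}}\otimes U^{\vv{k}}_{\vv{S}}$ controlled on the key registers. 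Because $\bra{\Phi^{\GHZ}}_{\vv{K}}$ is supported only on the ``diagonal'' strings $\vv{k}=(k,\ldots,k)$, only the twists $W_k\coloneqq(U^{(k,\ldots,k)}_{\vv{S}})^{\dag}$ survive the projection. Introducing the (unnormalized) conditional shield vectors $\ket{\phi_k}\coloneqq(\bra{k}_{K_1}\cdots\bra{k}_{K_j})\ket{\phi}$ on $S_1\cdots S_j$ and $\ket{\chi_k}\coloneqq(\bra{k}_{K_{j+1}}\cdots\bra{k}_{K_M})\ket{\chi}$ on $S_{j+1}\cdots S_M$, a short computation gives
\begin{equation*}
\bra{\psi}\Pi^{\gamma}_{\vv{SK}}\ket{\psi}=\frac{1}{K}\left\Vert\,\sum_{k\in\msc{K}}W_k\big(\ket{\phi_k}\otimes\ket{\chi_k}\big)\right\Vert^{2}.
\end{equation*}

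Finally I would bound the right-hand side by elementary inequalities. By the triangle inequality and unitarity of each $W_k$, $\bra{\psi}\Pi^{\gamma}_{\vv{SK}}\ket{\psi}\le\frac{1}{K}\big(\sum_{k}\|\phi_k\|\,\|\chi_k\|\big)^{2}$, using $\|W_k(\ket{\phi_k}\otimes\ket{\chi_k})\|=\|\phi_k\|\,\|\chi_k\|$. A Cauchy--Schwarz over $k$ then yields $\sum_{k}\|\phi_k\|\,\|\chi_k\|\le\sqrt{\sum_{k}\|\phi_k\|^{2}}\,\sqrt{\sum_{k}\|\chi_k\|^{2}}$, and each factor is at most $1$: since $\sum_{k\in\msc{K}}\op{k}_{K_1}\otimes\cdots\otimes\op{k}_{K_j}$ is an orthogonal projector (the vectors $\ket{k}^{\otimes j}$ being orthonormal), $\sum_{k}\|\phi_k\|^{2}=\bra{\phi}\big(\sum_{k}\op{k}_{K_1}\otimes\cdots\otimes\op{k}_{K_j}\big)\ket{\phi}\le\langle\phi|\phi\rangle=1$, and likewise for $\chi$. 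Hence $\bra{\psi}\Pi^{\gamma}_{\vv{SK}}\ket{\psi}\le 1/K$ for every pure biseparable state and, by linearity, $\Tr[\Pi^{\gamma}_{\vv{SK}}\sigma_{\vv{SK}}]\le 1/K$ for all $\sigma_{\vv{SK}}\in\BS(:\!\vv{SK}\!:)$.

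I do not anticipate any deep obstacle. The one trap to avoid is the ``undo the twisting, then invoke biseparability'' strategy that works trivially for product \emph{states}: here $(U^{\text{tw}})^{\dag}\psi\,U^{\text{tw}}$ need not be biseparable, because twisting is a global unitary on the shields controlled by \emph{all} key registers. The remedy, as above, is to compute directly and let the party-wise cut split the shield into $\ket{\phi_k}$ and $\ket{\chi_k}$ living on complementary factors; the rest is two applications of Cauchy--Schwarz (or the triangle inequality) plus the orthonormality of $\{\ket{k}^{\otimes j}\}$. The remaining work is therefore purely bookkeeping --- tracking which twists survive the GHZ projection and keeping each $K_i$ attached to $S_i$.
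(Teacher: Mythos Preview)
Your proof is correct and follows essentially the same route as the paper's: reduce to a pure state factorizing across a bipartition, compute the test probability as $\tfrac{1}{K}\bigl\lVert\sum_k W_k(\ket{\phi_k}\otimes\ket{\chi_k})\bigr\rVert^{2}$ (the paper writes this out as the double sum $\tfrac{1}{K}\sum_{i,k}\langle\zeta_i|\zeta_k\rangle$), then bound via the triangle inequality and Cauchy--Schwarz. The only cosmetic differences are that the paper separates off normalized shield vectors $\ket{\tilde\phi_{i^m}}$ with explicit amplitudes $\tilde\alpha_{i^m}$ and phrases the final Cauchy--Schwarz step as the classical fidelity bound $\sum_i\sqrt{\hat p_i\hat q_i}\le 1$ after padding the sub-normalized weights to probability distributions, whereas you keep the unnormalized $\ket{\phi_k}$ and invoke the projector inequality $\sum_k\lVert\phi_k\rVert^2\le 1$ directly; your packaging is arguably tidier but the two arguments are the same.
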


\section{Conference key agreement protocol}\label{sec:cka}
In this section, we give a formal description of a secret key agreement protocol for multiple trusted parties, i.e., a conference key agreement protocol. 

We consider an LOCC-assisted secret key agreement protocol among $M$ trusted allies $\{\tf{X}_i\}_{i=1}^M$ over a multiplex quantum channel $\mc{N}_{\vv{A'}\vv{B}\to\vv{A}\vv{C}}$ where each pair $A_a', A_a$ is held by trusted party $\tf{A}_a$ and each $B_b,C_c$ are held by trusted parties $\tf{B}_b,\tf{C}_c$, respectively. Environment part $E$ of an isometric extension $U^\mc{N}_{\vv{A'}\vv{B}\to\vv{A}\vv{C}E}$ of the channel $\mc{N}$ is accessible to Eve along with all classical information communicated among $\tf{X}_i$ while performing LOCC. All other quantum systems that are locally available to $\tf{X}_i$ are said to be secure from Eve, i.e., even if local operations during LOCC are noisy, purifying quantum systems are still within labs of trusted allies, which are off-limits for Eve. This assumption is justifiable because $\tf{X}_i$'s can always abandon performing local operations that would leak information to Eve. 
In an LOCC-assisted protocol, the uses of the multiplex channel $\mc{N}$ are interleaved with LOCC channels. 

In the first round, all $\tf{X}_i$ perform LOCC $\mc{L}^1$ to generate a state $\rho_1\in\FS(:\!\vv{L^{(1)}A^{(1)'}}:\!\vv{R^{(1)}B^{(1)}}\!:\!\vv{P^{(1)}})$. All $\tf{A}_a$ and $\tf{B}_b$ input respective systems to multiplex channel $\mc{N}^1_{\vv{A^{(1)'}}\vv{B^{(1)}}\to \vv{C}^{(1)}}$, and let $\tau_1\coloneqq \mc{N}^1(\rho)$ be the output state after the first use $\mc{N}^{1}$ of multiplex channel. In the second round, an LOCC $\mc{L}^{2}$ is performed on $\tau_1$ and then second use $\mc{N}^{2}$ of multiplex channel is employed on $\rho_2\coloneqq \mc{L}^{2}(\tau_1)$. In the third round, an LOCC $\mc{L}^{3}$ is performed on $\tau_2\coloneqq \mc{N}^2(\rho_2)$ and then the third use $\mc{N}^{3}$ of multiplex channel is employed on $\rho_3\coloneqq \mc{L}^{3}(\tau_2)$. Successively, we continue this procedure for $n$ rounds, where an $\mc{L}$ acts on the output state of previous round, after which multiplex channel is performed on the resultant state. Finally, after $n^{\text{th}}$ round, an LOCC $\mc{L}^{n+1}$ is performed as a decoding channel, which generates the final state $\omega_{\vv{SK}}$.

It can be concluded from the equivalence between private states, and CK states that any protocol of the above form can be purified, i.e., by considering isometric extensions of all channels (LOCC and $\mc{N}$) (the proof arguments are the same as for the purified protocol for LOCC-assisted secret-key-agreement~\cite{DBW17}). At the end of the purified protocol, Eve posses all the environment systems $E^n$ from isometric extension $U^\mc{N}$ of each use of multiplex channel $\mc{N}$ along with coherent copies $Y^{n+1}$ of the classical data exchanged among trusted parties $\tf{X}_i$ during performances of $n+1$ LOCC channels. Whereas, each trusted party $\tf{X}_i$ posses the key system $K_i$ and the shield system $S_i$, which consists of all local reference systems, after the action of decoder. The state at the end of the protocol is a pure state $\omega_{\vv{SK}Y^{n+1}E^n}$ with $F(\gamma_{\vv{SK}},\omega_{\vv{SK}})\geq 1-\varepsilon$. Such a protocol is called an $(n,K,\varepsilon)$ LOCC-assisted secret key agreement protocol. The rate $P$ of a given $(n,K,\varepsilon)$ protocol is equal to the number of conference (secret) bits generated per channel use:
\begin{equation}
P\coloneqq \frac{1}{n}\log_2K. 
\end{equation}

A rate $P$ is achievable if for $\varepsilon\in(0,1),\delta>0$, and sufficiently large $n$, there exists an $(n,2^{n(P-\delta)},\varepsilon)$ LOCC-assisted secret key agreement protocol. The LOCC-assisted secret-key-agreement capacity $\hat{P}_{\LOCC}(\mc{N})$ of a multiplex quantum channel $\mc{N}$ is defined as the supremum of all achievable rates.

A rate $P$ is called a strong converse rate for LOCC-assisted secret key agreement if for all $\varepsilon\in[0,1), \delta>0$, and sufficiently large $n$, there does not exist an $(n, 2^{n(P+\delta)},\varepsilon)$ LOCC-assisted secret key agreement protocol. The strong converse LOCC-assisted secret-key-agreement capacity $\widetilde{P}_{\LOCC}(\mc{N})$ is defined as the infimum of all strong converse rates. 

The following inequality is a direct consequence of the definitions: 
\begin{equation}
\hat{P}_{\LOCC}(\mc{N})\leq \widetilde{P}_{\LOCC}(\mc{N}).
\end{equation}

We can also consider the whole development discussed above for conference key agreement assisted only with classical preprocessing and postprocessing (cppp) communication, i.e., all parties are allowed only two LOCC channels, one for encoding and the other for decoding. A $(n,K,\varepsilon)$ cppp-assisted secret key agreement protocol over $\mc{N}$ is same as a $(1,K,\varepsilon)$ LOCC-assisted secret key agreement protocol over channel $\mc{N}^{\otimes n}$, and for $n=1$ both protocol are the same. The cppp-assisted secret-key-agreement capacity $\hat{P}_{\text{cppp}}$ of the channel $\mc{N}$ is always less than or equal to $\hat{P}_{\LOCC}$,
\begin{equation}
\hat{P}_{\text{cppp}}(\mc{N})\leq \hat{P}_{\LOCC}(\mc{N}). 
\end{equation}
Let $
\hat{P}^{\mc{N}}_{\textnormal{cppp}}(n,\varepsilon)$ be the maximum rate such that $(n,2^{nP},\varepsilon)$ cppp-assisted secret key agreement is achievable for any given $\mc{N}$. 

\begin{remark} 
It should be noted that the maximum rate at which secret key can be distilled using LOCC- or cppp-assisted protocol over a multiplex channel $\mc{N}$ is never less than the maximum rate at which GHZ state can be distilled using LOCC- or cppp-assisted protocol over given channel $\mc{N}$, respectively. This statement holds because GHZ state is a special private state from which secret bits are readily accessible to trusted allies. 
\end{remark}

\begin{remark}
Different physical constraints can be invoked in communication protocols to define constrained protocols and associated capacities. For instance, we can invoke energy constraints on input states and detectors to get energy-constrained protocols and respective capacities (cf.~\cite{GES16,WQ18}).  
\end{remark}

\subsection{Privacy from a single-use of a multiplex channel}
Let $\hat{P}^{\mc{N}}_{\textnormal{cppp}}(n,\varepsilon)$ denote the maximum rate $P$ such that $(n,K,\varepsilon)$ conference key agreement protocol is achievable for any $\mc{N}$ using cppp. The following bound holds for the one-shot secret-key-agreement rate of a multiplex quantum channel $\mc{N}$ (see Appendix~\ref{app:one-shot-cppp} for the proof).  

\begin{theorem}\label{thm:one-shot-cppp}
For any fixed $\varepsilon\in(0,1)$, the achievable region of cppp-assisted secret key agreement over a single use of multiplex channel $\mc{N}_{\vv{A'}\vv{B}\to\vv{A}\vv{C}}$ satisfies 
\begin{equation}\label{eq:one-shot-cppp}
\hat{P}_{\textnormal{cppp}}^{\mc{N}}(1,\varepsilon)\leq E^\varepsilon_{h,\GE}(\mc{N}),
\end{equation} 
where 
\begin{equation}
E^{\varepsilon}_{h,\GE}(\mc{N})  \coloneqq \sup_{\psi\in\FS(:\vv{LA'}:\vv{RB}:)}\inf_{\sigma\in\BS(:\vv{LA}:\vv{R}:\vv{C}:)} D^\varepsilon_h(\mc{N}(\psi)\Vert\sigma)
\end{equation}
is the $\varepsilon$-hypothesis testing relative entropy of genuine entanglement of the multiplex channel $\mc{N}$. It suffices to optimize over pure input states $\psi\in\FS(:\!\vv{LA'}\!:\!\vv{RB}\!:)$.  
\end{theorem}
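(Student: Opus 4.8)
The plan is to adapt the one-shot ``privacy test'' argument of Refs.~\cite{HHHO09,WTB16,CM17} to the multiplex setting, the crucial new ingredient being the fact, established in Theorem~\ref{thm:priv-gme}, that a biseparable state fails a $\gamma$-privacy test with probability at most $1/K$. First I would spell out the structure of an arbitrary $(1,K,\varepsilon)$ cppp-assisted protocol over $\mc{N}_{\vv{A'}\vv{B}\to\vv{A}\vv{C}}$: it consists of an encoding LOCC channel $\mc{L}^1$ among the $M$ parties producing a fully separable state $\rho\in\FS(:\!\vv{LA'}\!:\!\vv{RB}\!:\!\vv{P}\!:)$, a single use of $\mc{N}$ giving $\tau\coloneqq\mc{N}_{\vv{A'}\vv{B}\to\vv{A}\vv{C}}(\rho)$, and a decoding LOCC channel $\mc{L}^2$ producing $\omega_{\vv{SK}}\coloneqq\mc{L}^2(\tau)$ with $F(\gamma_{\vv{SK}},\omega_{\vv{SK}})\geq 1-\varepsilon$ for some $M$-partite private state $\gamma_{\vv{SK}}$ of key dimension $K$. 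Since $n=1$ here we have $P=\log_2 K$, so the goal is exactly to show $\log_2 K\leq E^\varepsilon_{h,\GE}(\mc{N})$ for every such protocol.

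The heart of the proof is a short chain of inequalities. By Proposition~\ref{thm:priv-test}, $F(\gamma_{\vv{SK}},\omega_{\vv{SK}})\geq 1-\varepsilon$ implies $\Tr[\Pi^{\gamma}_{\vv{SK}}\omega_{\vv{SK}}]\geq 1-\varepsilon$, so the projector $\Pi^{\gamma}_{\vv{SK}}$ is a feasible operator in the optimization defining $D^\varepsilon_h(\omega_{\vv{SK}}\Vert\sigma)$ for every state $\sigma$, whence $D^\varepsilon_h(\omega_{\vv{SK}}\Vert\sigma)\geq -\log_2\Tr[\Pi^{\gamma}_{\vv{SK}}\sigma]$. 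Restricting $\sigma$ to $\BS(:\!\vv{SK}\!:)$ and invoking Theorem~\ref{thm:priv-gme}, which gives $\Tr[\Pi^{\gamma}_{\vv{SK}}\sigma]\leq 1/K$, we obtain $D^\varepsilon_h(\omega_{\vv{SK}}\Vert\sigma)\geq\log_2 K$ for all biseparable $\sigma$, i.e. $\log_2 K\leq E^\varepsilon_{h,\GE}(:\!\vv{SK}\!:)_{\omega}$. Next, because $D^\varepsilon_h$ is a generalized divergence and $\BS$ is closed under LOCC, $E^\varepsilon_{h,\GE}$ is monotone non-increasing under LOCC channels; applying this to the decoder $\mc{L}^2$ (which I would observe is LOCC with respect to the $M$-party partition, each $S_i$ being built from party $\tf{X}_i$'s local registers) gives $E^\varepsilon_{h,\GE}(:\!\vv{SK}\!:)_{\omega}\leq E^\varepsilon_{h,\GE}$ of $\tau$ on the appropriate output partition. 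Finally, since $\rho$ is fully separable, $\tau=\mc{N}(\rho)$ is a feasible point of the supremum defining $E^\varepsilon_{h,\GE}(\mc{N})$ (modulo the receiver-side references $\vv{P}$, addressed below), so $E^\varepsilon_{h,\GE}$ of $\tau$ is at most $E^\varepsilon_{h,\GE}(\mc{N})$; concatenating the three bounds yields $\log_2 K\leq E^\varepsilon_{h,\GE}(\mc{N})$, and taking the supremum over achievable protocols finishes the proof.

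Two bookkeeping points need care, and I expect these — rather than the core inequality, whose hard part is already carried by Theorem~\ref{thm:priv-gme} — to be where the remaining work lies. First, the passive receiver-side registers $\vv{P}$ do not appear in the channel quantity: since $\mc{N}$ does not act on $\vv{P}$ and $\rho$ is fully separable, conditioned on the shared classical randomness generated by $\mc{L}^1$ each party's register is in a product state, so the $\vv{P}$ systems may be regenerated by the decoder rather than the encoder; thus ``$\mc{N}$ followed by ($\mc{L}^2$ together with regeneration of $\vv{P}$)'' is an LOCC-processed single use of $\mc{N}$ on a fully separable input over $:\!\vv{LA'}\!:\!\vv{RB}\!:$ only, and the monotonicity step goes through against the quantity on $:\!\vv{LA}\!:\!\vv{R}\!:\!\vv{C}\!:$. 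Second, for the reduction to pure inputs claimed in the theorem, one writes an arbitrary fully separable $\rho$ as a convex mixture of pure product states and uses quasi-convexity of $D^\varepsilon_h$ — hence of $E^\varepsilon_{h,\GE}$, as in the Remark following Lemma~\ref{lem:ree-amor} — to pass to a single pure product term without decreasing $E^\varepsilon_{h,\GE}(:\!\vv{LA}\!:\!\vv{R}\!:\!\vv{C}\!:)_{\mc{N}(\cdot)}$; a Schmidt decomposition of each party's pure factor then lets one take $L_a\simeq A_a'$ and $R_b\simeq B_b$, so the optimization defining $E^\varepsilon_{h,\GE}(\mc{N})$ may indeed be restricted to pure states.
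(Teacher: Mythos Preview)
Your proposal is correct and follows essentially the same approach as the paper's proof: both combine Proposition~\ref{thm:priv-test} and Theorem~\ref{thm:priv-gme} to get $\log_2 K\leq E^\varepsilon_{h,\GE}(:\!\vv{SK}\!:)_\omega$, then use data-processing/LOCC-monotonicity to pull back through the decoder, and finally quasi-convexity of $D^\varepsilon_h$ to reduce to pure fully separable inputs. The only notable difference is bookkeeping for the passive receiver references $\vv{P}$: the paper first passes to a pure product input and then removes $\vv{P}$ via the tensor-stability property $\mathbf{D}(\rho\otimes\tau\Vert\sigma\otimes\tau)=\mathbf{D}(\rho\Vert\sigma)$ with a matching choice $\sigma_{\vv{P}}=\psi_{\vv{P}}$, whereas you absorb the generation of $\vv{P}$ into the decoding LOCC; both are valid and lead to the same bound.
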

We can conclude from the above theorem that 
\begin{equation}\label{eq:n-cppp}
\hat{P}^{\mc{N}}_{\textnormal{cppp}}(n,\varepsilon) \leq \frac{1}{n}E^\varepsilon_{h,\GE}(\mc{N}^{\otimes n}),
\end{equation}
which leads to the following corollaries.

\begin{corollary}
A weak converse bound on the cppp-assisted secret-key-agreement capacity of a multiplex channel $\mc{N}$ is given by
\begin{align}
    \hat{P}_{\textnormal{cppp}}(\mc{N})& = \inf_{\varepsilon\in (0,1)}\liminf_{n\to\infty}\hat{P}^{\mc{N}}_{\textnormal{cppp}}(n,\varepsilon)\\ 
    &\leq E^{\infty}_{\GE}(\mc{N}).
\end{align}
\end{corollary}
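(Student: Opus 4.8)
The plan is to turn the amortized one-shot bound $\hat{P}^{\mc{N}}_{\textnormal{cppp}}(n,\varepsilon)\leq\frac{1}{n}E^\varepsilon_{h,\GE}(\mc{N}^{\otimes n})$ of Eq.~\eqref{eq:n-cppp} into a statement about the regularized relative entropy of GME by sending first $n\to\infty$ and then $\varepsilon\to0$. The opening equality $\hat{P}_{\textnormal{cppp}}(\mc{N})=\inf_{\varepsilon\in(0,1)}\liminf_{n\to\infty}\hat{P}^{\mc{N}}_{\textnormal{cppp}}(n,\varepsilon)$ is just a matter of unwinding definitions: a rate $P$ is achievable iff for every $\varepsilon\in(0,1)$ and $\delta>0$ one has $\hat{P}^{\mc{N}}_{\textnormal{cppp}}(n,\varepsilon)\geq P-\delta$ for all large $n$, equivalently iff $\inf_{\varepsilon}\liminf_{n}\hat{P}^{\mc{N}}_{\textnormal{cppp}}(n,\varepsilon)\geq P$, and the capacity is the supremum of achievable rates.

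For the inequality, the key move is to replace the hypothesis-testing divergence by the relative entropy using the bound $D^\varepsilon_h(\rho\Vert\sigma)\leq\frac{1}{1-\varepsilon}\big(D(\rho\Vert\sigma)+h_2(\varepsilon)\big)$ recalled in Section~\ref{sec:pre}. Inserting it inside the $\sup_{\psi}\inf_{\sigma}$ defining $E^\varepsilon_{h,\GE}(\mc{N}^{\otimes n})$ — the multiplicative factor $\frac{1}{1-\varepsilon}$ and the additive constant $\frac{h_2(\varepsilon)}{1-\varepsilon}$ pass through both the infimum over biseparable $\sigma$ and the supremum over fully separable pure inputs $\psi$ — gives
\begin{equation}
E^\varepsilon_{h,\GE}(\mc{N}^{\otimes n})\leq\frac{1}{1-\varepsilon}\big(E_{\GE}(\mc{N}^{\otimes n})+h_2(\varepsilon)\big),
\end{equation}
with $E_{\GE}(\mc{N}^{\otimes n})$ the relative entropy of GME of $\mc{N}^{\otimes n}$ (Definition~\ref{def:gen-div-ent-c}). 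Combining with Eq.~\eqref{eq:n-cppp} yields $\hat{P}^{\mc{N}}_{\textnormal{cppp}}(n,\varepsilon)\leq\frac{1}{1-\varepsilon}\big(\frac{1}{n}E_{\GE}(\mc{N}^{\otimes n})+\frac{h_2(\varepsilon)}{n}\big)$.

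Next I would take $\liminf_{n\to\infty}$. Since $h_2(\varepsilon)/n\to0$ and $\frac{1}{1-\varepsilon}$ is a positive constant, the right-hand side passes to $\frac{1}{1-\varepsilon}\liminf_{n\to\infty}\frac{1}{n}E_{\GE}(\mc{N}^{\otimes n})\leq\frac{1}{1-\varepsilon}E^\infty_{\GE}(\mc{N})$, where $E^\infty_{\GE}(\mc{N})$ denotes the regularization $\lim_{n\to\infty}\frac{1}{n}E_{\GE}(\mc{N}^{\otimes n})$. Finally, taking $\inf_{\varepsilon\in(0,1)}$ and using $E^\infty_{\GE}(\mc{N})\geq0$ (nonnegativity of the relative entropy) together with $\inf_{\varepsilon\in(0,1)}\frac{1}{1-\varepsilon}=1$ produces $\hat{P}_{\textnormal{cppp}}(\mc{N})=\inf_{\varepsilon}\liminf_{n}\hat{P}^{\mc{N}}_{\textnormal{cppp}}(n,\varepsilon)\leq E^\infty_{\GE}(\mc{N})$, as claimed.

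No deep technical obstacle arises — the substance is already carried by Theorem~\ref{thm:one-shot-cppp} and Theorem~\ref{thm:priv-gme} — but one point warrants care: the set $\BS$ of biseparable states is \emph{not} closed under tensor products, so $E_{\GE}$ is not manifestly superadditive over $\mc{N}\mapsto\mc{N}^{\otimes n}$ and the limit defining $E^\infty_{\GE}(\mc{N})$ is not automatically known to exist. For a weak converse this is harmless: one may read $E^\infty_{\GE}(\mc{N})$ as $\limsup_{n}\frac{1}{n}E_{\GE}(\mc{N}^{\otimes n})$, which upper bounds the $\liminf$ used above, and upgrade to a genuine limit by Fekete's lemma whenever superadditivity can be established separately. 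It is also worth recording, via the remark following Lemma~\ref{lem:ree-amor}, that each $E_{\GE}(\mc{N}^{\otimes n})$ may be optimized over pure fully separable inputs, so the quantities being regularized are well posed; the remaining steps (monotonicity of $\liminf$, $\liminf$ of a sum with a vanishing summand, monotonicity of $\varepsilon\mapsto 1/(1-\varepsilon)$) are routine.
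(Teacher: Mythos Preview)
Your argument is correct and is exactly the route the paper intends: the corollary is stated as an immediate consequence of Eq.~\eqref{eq:n-cppp}, and the passage from $E^\varepsilon_{h,\GE}$ to $E_{\GE}$ via $D^\varepsilon_h(\rho\Vert\sigma)\leq\frac{1}{1-\varepsilon}\big(D(\rho\Vert\sigma)+h_2(\varepsilon)\big)$ followed by the limits $n\to\infty$, $\varepsilon\to0$ is the standard completion the paper leaves implicit. Your caveat about the non-tensor-stability of $\BS$ and the attendant need to read the regularization as a $\limsup$ is a genuine point that the paper does not address explicitly; it does not affect the validity of the weak converse.
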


\begin{corollary}
    Consider a class of multiplex channels $\mc{N}_{\vv{A'}\vv{B}\to\vv{A}\vv{C}}$ such that for all pure input states $\psi\in\FS(\vv{LA'}:\vv{RB}:\vv{P})$, the output states $\mc{N}(\psi)$ are tensor-stable biseparable states with respect to the partition $\vv{LA}:\vv{RB}:\vv{PC}$. The cppp-assisted secret-key-agreement capacities for such class of multiplex channels are zero. 
\end{corollary}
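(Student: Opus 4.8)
The plan is to specialize the one–shot converse of Theorem~\ref{thm:one-shot-cppp} to the tensor powers $\mc{N}^{\otimes n}$ and then let $n\to\infty$. By Eq.~\eqref{eq:n-cppp} we have $\hat{P}^{\mc{N}}_{\textnormal{cppp}}(n,\varepsilon)\leq \tfrac{1}{n}E^\varepsilon_{h,\GE}(\mc{N}^{\otimes n})$ for every $n$ and every $\varepsilon\in(0,1)$, and, as recorded in the preceding corollary, $\hat{P}_{\textnormal{cppp}}(\mc{N})=\inf_{\varepsilon\in(0,1)}\liminf_{n\to\infty}\hat{P}^{\mc{N}}_{\textnormal{cppp}}(n,\varepsilon)$. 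Hence it suffices to show that $E^\varepsilon_{h,\GE}(\mc{N}^{\otimes n})$ stays bounded in $n$; I would in fact show $E^\varepsilon_{h,\GE}(\mc{N}^{\otimes n})\leq \log_2\tfrac{1}{1-\varepsilon}$ for all $n$. (Equivalently, one may route through the first corollary, $\hat{P}_{\textnormal{cppp}}(\mc{N})\leq E^\infty_{\GE}(\mc{N})$, and show $E_{\GE}(\mc{N}^{\otimes n})=0$ for all $n$; in that case finite dimensionality is used as in Lemma~\ref{lem:ree-amor}, whereas the hypothesis-testing route above does not need it.)

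The key step is to prove that, under the stated hypothesis, $\mc{N}^{\otimes n}$ also maps every pure fully separable input to a biseparable output. Granting this, the Remark following Lemma~\ref{lem:ree-amor} lets us write $E^\varepsilon_{h,\GE}(\mc{N}^{\otimes n})=\sup_{\psi}\inf_{\sigma\in\BS}D^\varepsilon_h(\mc{N}^{\otimes n}(\psi)\Vert\sigma)$ with $\psi\in\FS(:\!\vv{LA'}\!:\!\vv{RB}\!:)$ pure; choosing the feasible point $\sigma=\mc{N}^{\otimes n}(\psi)\in\BS$ and using $D^\varepsilon_h(\rho\Vert\rho)=-\log_2(1-\varepsilon)$ (Lemma~\ref{lem:exactD}) yields the claimed bound, so that $\hat{P}^{\mc{N}}_{\textnormal{cppp}}(n,\varepsilon)\leq \tfrac{1}{n}\log_2\tfrac{1}{1-\varepsilon}\to 0$, and therefore $\hat{P}_{\textnormal{cppp}}(\mc{N})=0$. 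To establish the key step I would first dispatch inputs that are product across the $n$ copies: there $\mc{N}^{\otimes n}\big(\bigotimes_j\psi^{(j)}\big)=\bigotimes_j\mc{N}(\psi^{(j)})$ is a tensor product of single-use outputs, each of which is tensor-stable biseparable by hypothesis, hence the product is biseparable — this is exactly where the \emph{tensor-stability} is needed, since a plain tensor product of biseparable states need not be biseparable (cf.\ the example in Section~\ref{sec:ME}). For a general pure fully separable input $\psi=\bigotimes_i\ket{\phi_i}$ (product over the $M$ parties, but possibly entangled across the copies held by one party) I would process the copies one at a time, writing $\mc{N}^{\otimes n}=\mc{N}^{(n)}\circ\cdots\circ\mc{N}^{(1)}$, observe that after $\mc{N}^{(1)}$ the state is (tensor-stable) biseparable for the $M$-partition, and propagate biseparability through the remaining uses $\mc{N}^{(j)}$ by treating the still-unused input systems and retained references as parts of the parties' local registers and purifying them party-locally so that each $\mc{N}^{(j)}$ again sees a pure fully separable input.

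The main obstacle is precisely this propagation step. Each $\mc{N}^{(j)}$ acts on input systems belonging to several of the $M$ parties at once, so it is \emph{not} a local operation for the $M$-partition and could in principle convert a biseparable state into a genuinely multipartite entangled one; and when $\psi$ is entangled across a single party's copies, the output of the first $\mc{N}^{(1)}$-stage can have that party's copy-$\geq 2$ input systems entangled with other parties' systems, which blocks the naive reduction to the product-across-copies case. Making the argument go through requires choosing the local purifying references at each stage carefully enough that the hypothesis — tensor-stable biseparability of the output for \emph{every} pure fully separable single-use input — together with closure of $\BS$ under local operations and stability of the single-use output family under tensor products, suffices to keep the running state biseparable. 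Everything else (the reduction via Theorem~\ref{thm:one-shot-cppp}, the identity $D^\varepsilon_h(\rho\Vert\rho)=-\log_2(1-\varepsilon)$, and the limit $n\to\infty$) is routine.
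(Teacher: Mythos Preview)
Your reduction to showing that $\mc{N}^{\otimes n}$ sends every pure fully separable input to a biseparable output is the right target, and you are also right that the copy-by-copy ``propagation'' route is genuinely blocked: after one use the running state is only biseparable, not fully separable, so its party-local purification need not be a product state across the $M$ parties, and the single-use hypothesis no longer applies to the next $\mc{N}^{(j)}$. (Your product-across-copies case is also not fully justified: the paper's notion of tensor-stable biseparability guarantees that $\rho^{\otimes k}$ stays biseparable for a \emph{fixed} $\rho$, but does not obviously force $\rho_1\otimes\cdots\otimes\rho_n$ to be biseparable for \emph{different} tensor-stable biseparable $\rho_j$.)

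The clean way around both obstacles is to pivot through the Choi state rather than through the sequential uses. The input $\Phi^+_{\vv{L}|\vv{A'}}\otimes\Phi^+_{\vv{R}|\vv{B}}$ is pure and fully separable across the $M$ parties, so by hypothesis the Choi state $J=\mc{N}(\Phi^+)$ is tensor-stable biseparable; hence $J^{\otimes n}\in\BS$ for every $n$. Now any pure $\psi\in\FS(:\vv{L^nA'^n}:\vv{R^nB^n}:\vv{P^n}:)$ factors as $\bigotimes_a\psi_a\otimes\bigotimes_b\psi_b\otimes\bigotimes_c\psi_c$, and each sender's factor can be written $\ket{\psi_a}=(M_a\otimes\bbm{1})\ket{\Phi^+}_{L_a'^{\,n}|A_a'^{\,n}}$ (similarly $N_b$ for the $\tf{B}$-parties), with $M_a,N_b$ acting only on that party's reference registers. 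Since these commute with the channel, one obtains
\[
\mc{N}^{\otimes n}(\psi)=\Big(\bigotimes_a M_a(\cdot)M_a^\dagger\otimes\bigotimes_b N_b(\cdot)N_b^\dagger\Big)\big(J^{\otimes n}\big)\ \otimes\ \bigotimes_c\psi_c,
\]
i.e.\ the output is produced from $J^{\otimes n}$ by party-local completely positive maps and by appending local pure states for the $\tf{C}$-parties. Both operations preserve biseparability, so $\mc{N}^{\otimes n}(\psi)\in\BS$ for all such $\psi$ and all $n$. The remainder of your plan (take $\sigma=\mc{N}^{\otimes n}(\psi)$ in the infimum, giving $E^\varepsilon_{h,\GE}(\mc{N}^{\otimes n})\leq\log_2\tfrac{1}{1-\varepsilon}$, then let $n\to\infty$) goes through unchanged. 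Note that this argument actually uses only that the single state $J$ is tensor-stable biseparable, so the corollary's hypothesis is stronger than what is needed.
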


\subsection{Strong converse bounds on LOCC-assisted private capacity of multiplex channel} 
We now derive converse and strong converse bounds on LOCC-assisted secret key agreement protocol over a multiplex channel $\mc{N}$.

Whereas, for LOCC-assisted secret key agreement protocol, by employing Theorem~\ref{thm:priv-gme}, generalizing proof arguments of Ref.~\cite[Theorem~2]{DBW17} (see also \cite{CM17}) to multiplex scenario, we get the following converse bound (proof in Appendix~\ref{app:emax-key-converse}). 
\begin{theorem}\label{thm:emax-key-converse} For a fixed $n,\ K\in\mathbb{N},\ \varepsilon\in(0,1)$, the following bound holds for an $(n,K,\varepsilon)$ protocol for LOCC-assisted secret key agreement over a multiplex $\mc{N}_{\vv{A'}\vv{B}\to\vv{A}\vv{C}}$: 
\begin{equation}\label{eq:emax-key-converse} 
\frac{1}{n}\log_2K\leq E_{\max,E}(\mc{N})+ \frac{1}{n}\log_2\left(\frac{1}{1-\varepsilon}\right), 
\end{equation} 
where the max-relative entropy of entanglement $E_{\max,\E}(\mc{N})$ of the multiplex channel $\mc{N}$ is 
\begin{equation*} 
E_{\max,\E}(\mc{N}) \coloneqq \sup_{\psi\in\FS(:\vv{LA'}:\vv{RB}:)}\inf_{\sigma\in\FS(:\vv{LA}:\vv{R}:\vv{C}:)} D_{\max}(\mc{N}(\psi)\Vert\sigma)
\end{equation*} 
and it suffices to optimize over pure states $\psi$.
\end{theorem}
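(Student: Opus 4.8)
The plan is to adapt the one-shot argument of Theorem~\ref{thm:one-shot-cppp} to the $n$-round LOCC-assisted setting, replacing the hypothesis testing divergence bound by a bound in terms of $D_{\max}$ and carefully tracking how the separability structure of the intermediate states is preserved across the alternating applications of LOCC channels and the channel $\mc{N}$. I would begin by fixing an arbitrary $(n,K,\varepsilon)$ LOCC-assisted secret key agreement protocol, with intermediate states $\rho_i = \mc{L}^i(\tau_{i-1})$ and $\tau_i = \mc{N}^i(\rho_i)$, ending in $\omega_{\vv{SK}}$ with $F(\gamma_{\vv{SK}},\omega_{\vv{SK}})\ge 1-\varepsilon$. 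By Proposition~\ref{thm:priv-test} there is a $\gamma$-privacy test $\Pi^\gamma$ with $\Tr[\Pi^\gamma\omega_{\vv{SK}}]\ge 1-\varepsilon$, and by Theorem~\ref{thm:priv-gme} any biseparable $\sigma_{\vv{SK}}$ satisfies $\Tr[\Pi^\gamma\sigma_{\vv{SK}}]\le 1/K$.

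The core of the argument is a telescoping/amortization step. For each round $i$, I want to compare $\tau_i$ to a suitably chosen fully separable state obtained by feeding a fully separable state through $\mc{N}^i$, and bound the $D_{\max}$-distance accumulated. The key inequalities are: (i) $D_{\max}$ is monotone under the decoder $\mc{L}^{n+1}$ and under all the intermediate LOCC channels $\mc{L}^i$; (ii) fully separable states are mapped to fully separable states by LOCC; (iii) the amortized bound $E_{\max,E}^p(\mc{N}) = E_{\max,E}(\mc{N})$ from Lemma~\ref{lem:max-amor}, which says the channel can increase $E_{\max,E}$ (relative to fully separable states) by at most $E_{\max,E}(\mc{N})$ per use. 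Chaining these over the $n$ uses of $\mc{N}$, interleaved by LOCC which cannot increase $E_{\max,E}$, gives $E_{\max,E}(:\!\vv{SK}\!:)_{\omega}\le n\,E_{\max,E}(\mc{N})$; here I use that $\rho_1\in\FS$ has $E_{\max,E}=0$, and that each LOCC round preserves membership in (or proximity to) the fully separable set appropriately. One has to be slightly careful that the relevant ``free set'' at the output is biseparable rather than fully separable — but as in the proof of Theorem~\ref{thm:one-shot-cppp}, we can produce a biseparable $\sigma_{\vv{SK}}$ as the image under $\mc{L}^{n+1}$ of a fully separable state, so the bound on $E_{\max,E}$ relative to $\FS$ furnishes a feasible $\sigma$ for the privacy-test estimate.

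Then I combine the pieces: using the operational characterization $D_{\max}(\omega\Vert\sigma)\ge \log_2\big(\Tr[\Pi^\gamma\omega]/\Tr[\Pi^\gamma\sigma]\big)$ for the privacy-test measurement (this is just $\omega \le 2^{D_{\max}(\omega\Vert\sigma)}\sigma$ tested against $\Pi^\gamma$), together with $\Tr[\Pi^\gamma\omega_{\vv{SK}}]\ge 1-\varepsilon$ and $\Tr[\Pi^\gamma\sigma_{\vv{SK}}]\le 1/K$, yields $\log_2 K \le D_{\max}(\omega_{\vv{SK}}\Vert\sigma_{\vv{SK}}) + \log_2\frac{1}{1-\varepsilon}$. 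Optimizing $\sigma$ over biseparable states and invoking the amortized bound gives $\log_2 K \le n\, E_{\max,E}(\mc{N}) + \log_2\frac{1}{1-\varepsilon}$, which is the claim after dividing by $n$. The reduction to pure input states follows from quasi-convexity of $D_{\max}$ exactly as in the Remark after Lemma~\ref{lem:ree-amor}.

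The main obstacle I anticipate is the bookkeeping in the telescoping step — specifically, making precise that after each LOCC round the state $\rho_i$ can be compared to a fully separable state $\sigma_i'$ such that the running quantity $E_{\max,E}(\text{current partition})_{\tau_i}$ grows by at most $E_{\max,E}(\mc{N})$ per channel use and not at all under LOCC. This requires correctly identifying which systems are in which party's hands at each stage (the partitions $:\!\vv{LA}\!:\!\vv{R}\!:\!\vv{C}\!:$ vs. $:\!\vv{LA'}\!:\!\vv{RB}\!:$ change as senders send and receivers receive) and checking that Lemma~\ref{lem:max-amor} applies with those partitions; the data-processed triangle inequality of Ref.~\cite{CM17} does the real work, but assembling it into a clean induction over $n$ rounds, as in Ref.~\cite[Theorem~2]{DBW17}, is where the care is needed. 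The strong-converse feature (that the bound holds for all $\varepsilon<1$ with only an additive $\frac1n\log_2\frac{1}{1-\varepsilon}$ correction) comes for free from using $D_{\max}$ rather than a hypothesis-testing quantity, since no $\varepsilon$-dependent blow-up enters the amortization.
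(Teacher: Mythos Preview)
Your proposal is correct and follows essentially the same approach as the paper: privacy test plus Theorem~\ref{thm:priv-gme} to get $\log_2 K \le D_{\max}(\omega_{\vv{SK}}\Vert\sigma_{\vv{SK}}) + \log_2\tfrac{1}{1-\varepsilon}$ for any fully separable $\sigma$, followed by the telescoping/amortization argument over the $n$ rounds using LOCC monotonicity of $E_{\max,\E}$ and Lemma~\ref{lem:max-amor}. The only cosmetic difference is that the paper routes through $D^\varepsilon_h$ and then invokes inequality~\eqref{eq:hyp-sand-rel} in the limit $\alpha\to\infty$, whereas you test the operator inequality $\omega\le 2^{D_{\max}(\omega\Vert\sigma)}\sigma$ directly against $\Pi^\gamma$; both yield the same bound, and your concern about biseparable versus fully separable is moot since the paper simply takes $\sigma\in\FS(:\vv{SK}:)\subset\BS(:\vv{SK}:)$ throughout.
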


\begin{remark}
The bound in \eqref{eq:emax-key-converse} can also be rewritten as 
\begin{equation}
1-\varepsilon \leq 2^{-n (P-E_{\max,\E}(\mc{N}))},
\end{equation}
where we have $P=\frac{1}{n}\log_2 K$. Thus, if the secret-key-agreement rate $P$ is strictly greater than the max-relative entropy of entanglement $E_{\max,\E}(\mc{N})$ of the (multiplex) channel $\mc{N}$, then the fidelity of the distillation $(1-\varepsilon)$ decays exponentially fast to zero in the number of channel uses. 
\end{remark} 

An immediate corollary of the above remark is the following strong converse statement. 
\begin{corollary}
The strong converse LOCC-assisted secret-key-agreement capacity of a multiplex channel $\mc{N}$ is bounded from above by its max-relative entropy of entanglement:
\begin{equation}
\wt{P}_{\LOCC}(\mc{N})\leq E_{\max,E}(\mc{N}).
\end{equation}
\end{corollary}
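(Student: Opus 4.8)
The plan is to obtain the corollary directly from Theorem~\ref{thm:emax-key-converse}, in the reformulated form given in the remark immediately preceding this statement. First I would recall the relevant definition: a rate $P$ is a strong converse rate for LOCC-assisted secret key agreement if for every $\varepsilon\in[0,1)$ and every $\delta>0$ there is no $(n,2^{n(P+\delta)},\varepsilon)$ LOCC-assisted secret key agreement protocol once $n$ is sufficiently large, and $\wt{P}_{\LOCC}(\mc{N})$ is the infimum of all strong converse rates. Hence it suffices to show that $P=E_{\max,\E}(\mc{N})$ is itself a strong converse rate.

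I would prove this by contradiction. Fix $\varepsilon\in[0,1)$ and $\delta>0$, and suppose that for arbitrarily large $n$ there exists an $(n,K,\varepsilon)$ protocol with $K=2^{n(E_{\max,\E}(\mc{N})+\delta)}$. Substituting $\tfrac1n\log_2 K = E_{\max,\E}(\mc{N})+\delta$ into the bound \eqref{eq:emax-key-converse} of Theorem~\ref{thm:emax-key-converse} gives
\begin{equation}
E_{\max,\E}(\mc{N})+\delta \;\le\; E_{\max,\E}(\mc{N}) + \frac1n\log_2\!\left(\frac{1}{1-\varepsilon}\right),
\end{equation}
equivalently $1-\varepsilon \le 2^{-n\delta}$ (and for $\varepsilon=0$ the bound already forces $\delta\le 0$). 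Since $\varepsilon<1$ and $\delta>0$ are fixed, the right-hand side tends to $0$ as $n\to\infty$ while $1-\varepsilon>0$, a contradiction for all large $n$. Thus no such protocol exists for large $n$, so $E_{\max,\E}(\mc{N})$ meets the definition of a strong converse rate, and taking the infimum over strong converse rates yields $\wt{P}_{\LOCC}(\mc{N})\le E_{\max,\E}(\mc{N})$.

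I do not expect any genuine obstacle here: all the substantive work is already contained in Theorem~\ref{thm:emax-key-converse}, which in turn rests on the privacy test (Proposition~\ref{thm:priv-test}), the bound on biseparable states passing a $\gamma$-privacy test (Theorem~\ref{thm:priv-gme}), the $\alpha\to\infty$ limit of inequality~\eqref{eq:hyp-sand-rel}, the LOCC-monotonicity and vanishing-on-fully-separable-states of $E_{\max,\E}$, and the amortization identity of Lemma~\ref{lem:max-amor}. The corollary is precisely the statement that the additive correction $\tfrac1n\log_2\tfrac{1}{1-\varepsilon}$ is asymptotically negligible, which is exactly what ``strong converse'' encodes. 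The only point requiring a little care is matching the quantifier structure: the definition of a strong converse rate demands failure of the protocol for \emph{all} $\varepsilon\in[0,1)$, and this is supplied automatically by the fact that the bound in Theorem~\ref{thm:emax-key-converse} holds for arbitrary $\varepsilon$, the regime $\varepsilon\to1$ being harmless since $1-\varepsilon\le 2^{-n\delta}$ becomes impossible for large $n$ no matter how close $\varepsilon$ is to $1$.
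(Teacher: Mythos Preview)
Your proposal is correct and takes essentially the same approach as the paper: the corollary is presented there as an immediate consequence of the remark that rewrites \eqref{eq:emax-key-converse} as $1-\varepsilon \le 2^{-n(P-E_{\max,\E}(\mc{N}))}$, and you simply spell out the quantifier-matching against the definition of a strong converse rate. The only cosmetic point is that Theorem~\ref{thm:emax-key-converse} is stated for $\varepsilon\in(0,1)$, so the $\varepsilon=0$ case is handled by noting that an $(n,K,0)$ protocol is automatically an $(n,K,\varepsilon)$ protocol for every $\varepsilon>0$, which you implicitly use.
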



We also have another upper bound on the private capacity of a multiplex channel $\mc{N}_{\vv{A'}\vv{B}\to\vv{A}\vv{C}}$ with finite-dimensional input and output systems in terms of the regularized relative entropy instead of the max-relative entropy (proof in Appendix~\ref{app:ree-key-converse}). 
\begin{theorem}\label{thm:ree-key-converse}  
For finite Hilbert space dimensions the asymptotic LOCC-assisted secret-key-agreement capacity of a  multiplex channel $\mc{N}_{\vv{A'}\vv{B}\to\vv{A}\vv{C}}$ is bounded by its regularized relative entropy of entanglement:
\begin{equation}\label{eq:ree-key-converse} 
\widetilde{\mc{P}}_{\LOCC}(\mc{N})\leq E_{E}^\infty(\mc{N}).
\end{equation} 
\end{theorem}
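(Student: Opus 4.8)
The plan is to follow the proof of Theorem~\ref{thm:emax-key-converse} almost verbatim, but to replace the max-relative entropy of entanglement by the relative entropy of entanglement and to invoke the amortization bound of Lemma~\ref{lem:ree-amor} in place of Lemma~\ref{lem:max-amor}; the chain rule of~\cite{fang2019chain} behind Lemma~\ref{lem:ree-amor} is precisely why finite dimensions are assumed. Concretely, fix an $(n,K,\varepsilon)$ LOCC-assisted secret key agreement protocol with output state $\omega_{\vv{SK}}$ satisfying $F(\omega_{\vv{SK}},\gamma_{\vv{SK}})\ge 1-\varepsilon$ for some $M$-partite private state $\gamma_{\vv{SK}}$ carrying $\log_2 K$ key bits. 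By Proposition~\ref{thm:priv-test} the $\gamma$-privacy test accepts $\omega_{\vv{SK}}$ with probability at least $1-\varepsilon$, and by Theorem~\ref{thm:priv-gme} it accepts every $\sigma_{\vv{SK}}\in\BS(:\vv{SK}:)$, hence every fully separable one, with probability at most $1/K$; thus $\Pi^{\gamma}_{\vv{SK}}$ is feasible in the hypothesis-testing optimization and $\log_2 K\le D^\varepsilon_h(\omega_{\vv{SK}}\Vert\sigma_{\vv{SK}})$ for every $\sigma_{\vv{SK}}\in\FS(:\vv{SK}:)$, i.e. $\log_2 K\le E^\varepsilon_{h,\E}(:\vv{SK}:)_{\omega}$.

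Next I would telescope $E_E$ over the rounds of the protocol exactly as in the amortization chain of Theorem~\ref{thm:emax-key-converse} with $D_{\max}$ replaced by $D$: using that $E_E$ is monotone under LOCC and vanishes on fully separable states, that the decoder and all interleaving channels are LOCC, that $\rho_1$ is fully separable, and that each use of $\mc{N}$ raises $E_E$ by at most the entangling power $E_E^{p}(\mc{N})$, one gets $E_E(:\vv{SK}:)_{\omega}\le E_E(:\vv{LA}:\vv{R}:\vv{PC}:)_{\tau_n}\le n\,E_E^{p}(\mc{N})\le n\,E_E^{\infty}(\mc{N})$, the last step being Lemma~\ref{lem:ree-amor}. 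Combining with the first paragraph and converting the hypothesis-testing quantity via the crude bound $D^\varepsilon_h(\rho\Vert\sigma)\le\frac{1}{1-\varepsilon}(D(\rho\Vert\sigma)+h_2(\varepsilon))$ of~\cite{WR12}, then infimizing over $\sigma_{\vv{SK}}\in\FS(:\vv{SK}:)$, already yields $\frac1n\log_2 K\le\frac{1}{1-\varepsilon}\big(E_E^{\infty}(\mc{N})+\frac1n h_2(\varepsilon)\big)$, and hence the weak converse $\widehat{\mc{P}}_{\LOCC}(\mc{N})\le E_E^{\infty}(\mc{N})$ upon sending $n\to\infty$ and then $\varepsilon\to0$.

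The main obstacle is upgrading this to the \emph{strong} converse as stated, since the prefactor $1/(1-\varepsilon)$ does not make the fidelity decay when $P>E_E^{\infty}(\mc{N})$. For that I would use instead the relation $D^\varepsilon_h(\rho\Vert\sigma)\le\wt{D}_\alpha(\rho\Vert\sigma)+\frac{\alpha}{\alpha-1}\log_2\frac{1}{1-\varepsilon}$ from~\eqref{eq:hyp-sand-rel} for $\alpha\in(1,\infty)$, so that the error enters only additively and $1-\varepsilon$ is forced to decay exponentially whenever $P$ exceeds a channel quantity; one then needs an amortization inequality for $\wt{E}_{\alpha,\E}$ in the spirit of Lemma~\ref{lem:ree-amor}. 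This is the delicate point: the sandwiched R\'enyi relative entropy has no convenient chain rule, so the standard remedy (following Fang--Fawzi) is to dominate $\wt{D}_\alpha$ by the geometric R\'enyi divergence $\wh{D}_\alpha$, which does obey one, telescope the resulting $\wh{E}_{\alpha,\E}$ to bound $\wh{E}_{\alpha,\E}(:\vv{SK}:)_{\omega}$ by $n$ times a regularized channel quantity, and finally take $\alpha\downarrow1$ jointly with the $n$-regularization, showing that in finite dimension these limits collapse to $E_E^{\infty}(\mc{N})$. Establishing that collapse -- and justifying the interchange of the $\alpha\downarrow1$ and $n\to\infty$ limits -- is the step I expect to require the most care.
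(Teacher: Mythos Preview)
Your weak-converse argument matches the paper's proof exactly: privacy test, $D^\varepsilon_h$ bound, the crude estimate from~\cite{WR12}, and the telescoping of $E_{\E}$ together with Lemma~\ref{lem:ree-amor}.

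For the strong converse, however, you are working much harder than the paper does. You propose to amortize $\wt{E}_{\alpha,\E}$ itself, which forces you into the geometric R\'enyi route and the delicate interchange of limits you flag. The paper avoids all of this. It keeps the amortization at the level of the ordinary relative entropy of entanglement---i.e.\ it only ever uses $E_{\E}(:\vv{SK}:)_{\omega}\le n\,E_{\E}^\infty(\mc{N})$, which you already have---and applies the R\'enyi bound only to the \emph{final state} $\omega_{\vv{SK}}$: from~\eqref{eq:hyp-sand-rel} one gets
\[
\varepsilon \ge 1-2^{-n\left(\tfrac{\alpha-1}{\alpha}\right)\left(\tfrac{1}{n}\log_2 K-\tfrac{1}{n}\wt{E}_{\alpha,\E}(:\vv{SK}:)_{\omega}\right)}.
\]
If the rate exceeds $E_{\E}^\infty(\mc{N})$, then by the amortization it exceeds $\tfrac{1}{n}E_{\E}(:\vv{SK}:)_{\omega}$; since $\wt{E}_{\alpha,\E}(:\vv{SK}:)_{\omega}\to E_{\E}(:\vv{SK}:)_{\omega}$ as $\alpha\downarrow 1$, one can choose $\alpha>1$ making the exponent positive, and the error is driven to $1$. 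So no chain rule or amortization for $\wt{E}_{\alpha,\E}$ is needed; the only ingredient beyond what you already wrote is continuity of the sandwiched R\'enyi quantity in $\alpha$ at $\alpha=1$, applied to the output state. Your proposed detour through $\wh{D}_\alpha$ would likely work too, but it is considerably heavier machinery than the problem requires.
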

\subsection{Teleportation-simulable and Tele-covariant multiplex channels}\label{sec:tele-cov}
For a class of multipartite quantum channels obeying certain symmetries, such as teleportation-simulability~\cite{BBPS96}, the LOCC-assistance does not enhance secret-key-agreement capacity, and the original protocol can be reduced to a cppp-assisted secret key agreement protocol. This observation for secret communication between two parties over point-to-point teleportation-simulable channel was first made in Ref.~\cite{PLOB15}.

\begin{definition}
A multipartite quantum channel $\mc{N}_{\vv{A'}\vv{B}\to\vv{A}\vv{C}}$ is teleportation-simulable with associated resource state $\theta_{\vv{LA}\vv{R}\vv{C}}$, where $R_b\simeq B_b$ for all $b\in\msc{B}$ and $L_a\simeq A'_a$ for all $a\in\msc{A}$, if for all input states $\rho_{\vv{A'}\vv{B}}$ the following identity holds
\begin{equation}\label{eq:tele-sim}
\mc{N}_{\vv{A'}\vv{B}\to\vv{A}\vv{C}}(\rho_{\vv{A'}\vv{B}})=\mc{T}_{\vv{A'LA}\vv{BR}\vv{C}\to\vv{A}\vv{C}}(\rho_{\vv{A'}\vv{B}}\otimes\theta_{\vv{LA}\vv{R}\vv{C}})
\end{equation} 
for some LOCC channel $\mc{T}$ with input partition $:\vv{A'LA}:\vv{BR}:\vv{C}:$ and output partition $:\vv{A}:\vv{C}:$.
\end{definition}

\textit{Covariant channels}.--- For each $a\in\msc{A}$ and $b\in\msc{B}$, let $\msc{G}_a$ and $\msc{G}_b$ be finite groups of respective sizes $G_a$ and $G_b$ with respective unitary representations $g_a\rightarrow U_{A'_a}(g_a)$ and $g_b\rightarrow U_{B_b}(g_b)$ for all group elements $g_{a}$ and $g_{b}$. Let $W^{\vv{g}}_{A_a}$ and $W^{\vv{g}}_{C_c}$ be unitary representations for all $a\in\msc{A}$ and $c\in\msc{C}$, where $\vv{g}=\{g_a,g_b\}_{a,b}$. A multiplex quantum channel $\mc{N}_{\vv{A'}\vv{B}\to\vv{A}\vv{C}}$ is \textit{covariant} with respect to these representations if the following relation holds for all input states $\rho_{\vv{A'}\vv{B}}$ and group elements $g_a\in\msc{G}_a$ and $g_b\in\msc{G}_b$ for all $a\in\msc{A}$ and $b\in\msc{B}$: 
\begin{align}
&\mc{N}_{\vv{A'}\vv{B}\to\vv{A}\vv{C}}\(\(\bigotimes_{a\in\msc{A}}\mathcal{U}_{A'_a}^{g_a}\otimes\bigotimes_{b\in\msc{B}}\mathcal{U}_{B_b}^{g_b}\)(\rho_{\vv{A'}\vv{B}})\)\nonumber\\
&=\(\bigotimes_{a\in\msc{A}}\mathcal{W}_{A_a}^{\vv{g}}\otimes\bigotimes_{c\in\msc{C}}\mathcal{W}_{C_c}^{\vv{g}}\)\(\mc{N}_{\vv{A'}\vv{B}\to\vv{A}\vv{C}}(\rho_{\vv{A'}\vv{B}})\),
\end{align}
where we have used the notation $\mc{U}(\cdot)\coloneqq U(\cdot)U^{\dag}$ for unitaries $U$.

\begin{definition}\label{def:tele-cov}
A quantum channel $\mc{N}_{\vv{A'}\vv{B}\to\vv{A}\vv{C}}$ is called tele-covariant if it is covariant with respect to groups $\{\msc{G}_a\}_{a\in\msc{A}}$ and $\{\msc{G}_b\}_{b\in\msc{B}}$ that have representations as unitary one-designs, i.e., for all $a\in\msc{A}$ and $b\in\msc{B}$ as well as states $\rho_{A'_a}$ and $\rho_{B_b}$, it holds $\frac{1}{G_a}\sum_{g_a\in\msc{G}_a}\mathcal{U}_{A'_a}^{g_a}(\rho_{A'_a})=\bbm{1}/|A_a|$ and $\frac{1}{G_b}\sum_{g_b\in\msc{G}_b}\mathcal{U}_{B_b}^{g_b}(\rho_{B_b})=\bbm{1}/|B_b|$, respectively.
\end{definition}

The following observation follows from the definition of tele-covariant channels. 
\begin{remark}\label{rem:comp}
Tele-covariance of a channel is with respect to the groups and their unitary representations on the input and output Hilbert spaces of the channel. If associated unitary representations for the tele-covariant channels $\mc{N}^1$ and $\mc{N}^2$ are respectively same on the output Hilbert spaces of $\mc{N}^1$ that are also the input Hilbert spaces for $\mc{N}^2$, then the composition channel $\mc{N}=\mc{N}^2\circ\mc{N}^1$ is also tele-covariant. 

A quantum channel obtained by the tensor-product (super-operation ``~$\otimes$", which physically means parallel uses) of tele-covariant channels is also a tele-covariant channel.
\end{remark}

The following theorem generalizes the developments in Refs.~\cite{GC99,CDKL01,DBB08,DBW17} (see Appendix~\ref{app:tele-cov} for proof):
\begin{theorem}\label{thm:bicov}
If a multipartite channel $\mc{N}_{\vv{A'}\vv{B}\to\vv{A}\vv{C}}$
is tele-covariant, then it is teleportation-simulable with resource state~\eqref{eq:tele-sim} as its Choi state, i.e., $\theta_{\vv{LA}\vv{R}\vv{C}}=\mc{N}(\Phi^+_{\vv{L}\vv{R}|\vv{A'}\vv{B}})$. 
\end{theorem}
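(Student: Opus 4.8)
The plan is to reduce the claim to the standard teleportation argument applied to a single "super-system" and then exploit the group structure. First I would recall that the goal is to show that if $\mc{N}_{\vv{A'}\vv{B}\to\vv{A}\vv{C}}$ is tele-covariant, then for every input $\rho_{\vv{A'}\vv{B}}$ one can reproduce $\mc{N}(\rho)$ by teleporting each input system $A'_a$ (using $\tf{A}_a$'s half of the Choi state) and each $B_b$ (using $\tf{B}_b$'s half), applying the correction unitaries, and feeding the results through. Concretely, the resource state is $\theta_{\vv{LA}\vv{R}\vv{C}} = \mc{N}_{\vv{A'}\vv{B}\to\vv{A}\vv{C}}(\Phi^+_{\vv{L}\vv{R}|\vv{A'}\vv{B}})$, where $\Phi^+_{\vv{L}\vv{R}|\vv{A'}\vv{B}} = \bigotimes_{a}\Phi^+_{L_a|A'_a}\otimes\bigotimes_b\Phi^+_{R_b|B_b}$, and $\mc{T}$ consists of: (i) for each $a$, a Bell measurement on the true input $A'_a$ together with $L_a$; (ii) for each $b$, a Bell measurement on $B_b$ together with $R_b$; (iii) classical communication of all outcomes to the output parties $\tf{A}_a$ and $\tf{C}_c$; (iv) correction unitaries $W^{\vv{g}}$ on the outputs $A_a$ and $C_c$ determined by the collected outcomes.

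The key steps, in order, are as follows. Step 1: fix any input state $\rho_{\vv{A'}\vv{B}}$ and run the above LOCC protocol $\mc{T}$ on $\rho_{\vv{A'}\vv{B}}\otimes\theta_{\vv{LA}\vv{R}\vv{C}}$, conditioning on a fixed tuple of Bell-measurement outcomes $\vv{g} = \{g_a,g_b\}$. Step 2: use the standard teleportation identity — for a single maximally entangled pair, a Bell measurement with outcome $g$ on the input and one half of $\Phi^+$ transmits the state up to the Heisenberg–Weyl unitary $U(g)$ — applied independently to every $A'_a$-through-$L_a$ and every $B_b$-through-$R_b$ channel, so that, conditioned on $\vv{g}$, the post-measurement state (before corrections) is $\mc{N}_{\vv{A'}\vv{B}\to\vv{A}\vv{C}}\big(\bigotimes_a\mathcal{U}^{g_a}_{A'_a}\otimes\bigotimes_b\mathcal{U}^{g_b}_{B_b}(\rho_{\vv{A'}\vv{B}})\big)$ sitting on the output systems of $\theta$. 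Step 3: apply the covariance relation from Definition~\ref{def:tele-cov} to pull the input unitaries through $\mc{N}$, turning this conditional state into $\big(\bigotimes_a\mathcal{W}^{\vv{g}}_{A_a}\otimes\bigotimes_c\mathcal{W}^{\vv{g}}_{C_c}\big)\big(\mc{N}(\rho_{\vv{A'}\vv{B}})\big)$. Step 4: apply the corresponding inverse correction unitaries $(\mathcal{W}^{\vv{g}})^\dag$ on the outputs (part of $\mc{T}$), which removes the $\vv{g}$-dependence entirely, yielding $\mc{N}(\rho_{\vv{A'}\vv{B}})$ regardless of outcome. Step 5: average over all outcomes $\vv{g}$ with their (uniform) probabilities — since every branch now gives the same state, the average is $\mc{N}(\rho_{\vv{A'}\vv{B}})$, which establishes \eqref{eq:tele-sim}. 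Finally, verify that $\mc{T}$ — comprising local Bell measurements, classical broadcast of outcomes to the relevant output holders, and local correction unitaries — respects the required input partition $:\vv{A'LA}:\vv{BR}:\vv{C}:$ and output partition $:\vv{A}:\vv{C}:$, so it is a legitimate LOCC channel in the multiplex sense.

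The main obstacle I anticipate is bookkeeping rather than conceptual difficulty: one must be careful that the Bell measurement on $A'_a$ is performed jointly with the system $L_a$ that $\tf{A}_a$ holds (so it is genuinely local to party $\tf{A}_a$), that the outcome $g_b$ from $\tf{B}_b$'s Bell measurement must be classically communicated to the output parties $\tf{A}_a$ and $\tf{C}_c$ who hold the systems on which $W^{\vv{g}}$ acts, and that the correction unitary $W^{\vv{g}}_{A_a}$ (resp. $W^{\vv{g}}_{C_c}$) depends on the full tuple $\vv{g}$ — not just on local outcomes — which is exactly why classical communication among all parties is needed and why the simulation is LOCC-assisted rather than merely local. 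A second subtlety is that the covariance in Definition~\ref{def:tele-cov} is stated for the one-design groups $\msc{G}_a,\msc{G}_b$, so one should note (as in Remark~\ref{rem:comp} and the cited works \cite{GC99,CDKL01,DBB08,DBW17}) that a one-design representation contains, up to relabeling, the Heisenberg–Weyl operators needed to implement teleportation on each tensor factor; hence the outcome labels $g_a,g_b$ of the Bell measurements can be identified with group elements and the correction $W^{\vv{g}}$ is well-defined. With these identifications in place the argument is a direct multipartite generalization of the bidirectional case in \cite[Appendix]{DBW17}, and the details are deferred to Appendix~\ref{app:tele-cov}.
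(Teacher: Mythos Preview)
Your overall strategy---teleport each input through the corresponding leg of the Choi state, invoke covariance to push the correction unitaries to the output side, then undo them locally---is exactly the route the paper takes, and Steps~1--5 match the structure of the argument in Appendix~\ref{app:tele-cov}.

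There is, however, one technical imprecision in your Step~2 and in the ``second subtlety'' paragraph. You propose a \emph{standard Bell measurement} on each pair $(A'_a,L_a)$ and $(B_b,R_b)$, which would produce Heisenberg--Weyl correction unitaries, and then you try to identify these with the group elements $g_a,g_b$ by asserting that ``a one-design representation contains, up to relabeling, the Heisenberg--Weyl operators.'' That claim is not true in general: a finite unitary one-design need not contain (or be isomorphic to) the Heisenberg--Weyl group, and the covariance hypothesis in Definition~\ref{def:tele-cov} is stated only for the specific unitaries $U^{g_a}_{A'_a},U^{g_b}_{B_b}$---not for arbitrary Paulis. So a standard Bell measurement does not directly give corrections to which the covariance relation applies.

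The paper's fix is to use, instead of a Bell measurement, the group-adapted POVM with elements
\[
E^{g_a}_{A''_aL_a}=\frac{|A'_a|^2}{G_a}\,U^{g_a}_{A''_a}\Phi^+_{A''_aL_a}\bigl(U^{g_a}_{A''_a}\bigr)^{\dag},
\qquad
E^{g_b}_{B'_bR_b}=\frac{|B_b|^2}{G_b}\,U^{g_b}_{B'_b}\Phi^+_{B'_bR_b}\bigl(U^{g_b}_{B'_b}\bigr)^{\dag},
\]
which are valid POVMs precisely because $\{U^{g_a}\}$ and $\{U^{g_b}\}$ are one-designs (this is also where the inequality $|A'_a|^2\leq G_a$ enters). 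With this measurement the conditional post-measurement state carries the correction $U^{g_a},U^{g_b}$ rather than a Pauli, and now covariance applies verbatim to produce $W^{\vv{g}}$ on the outputs. Once you replace ``Bell measurement'' by this group POVM, your Steps~2--5 go through exactly as written and coincide with the paper's proof.
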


Following the approach in Refs.~\cite{PLOB15,HHHO09}, we obtain

\begin{theorem}\label{theo:telecov-weakconv-GE}
The LOCC-assisted secret-key-agreement capacity of a multiplex quantum channel $\mc{N}_{\vv{A'}\vv{B}\to\vv{A}\vv{C}}$ which is teleportation-simulable with resource state $\theta_{\vv{LA}\vv{R}\vv{C}}$ is upper bounded as 
\begin{equation}
\hat{P}_{\LOCC}(\mc{N})\leq E^{\infty}_{GE}(:\vv{LA}:\vv{R}:\vv{C}:)_{\theta}, 
\end{equation}
where $E^\infty(:\vv{A}:)_{\rho}$ is the regularized relative entropy of entanglement of state $\rho_{\vv{A}}$.
\end{theorem}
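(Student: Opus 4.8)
The plan is to follow the reduction-to-resource-state strategy of Refs.~\cite{PLOB15,HHHO09,WTB16}, now carried out with the free set $\BS$ of biseparable states in place of the fully separable states, so that the relevant monotone is the relative entropy of genuine multipartite entanglement and Theorem~\ref{thm:priv-gme} (not its fully-separable weakening) is what makes the argument go through.

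First I would use teleportation-simulability to collapse an arbitrary $(n,K,\varepsilon)$ LOCC-assisted protocol over $\mc{N}$ into a single LOCC channel acting on $n$ copies of the resource state $\theta_{\vv{LA}\vv{R}\vv{C}}$. In the $i$-th round one writes $\mc{N}^i(\rho_i)=\mc{T}(\rho_i\otimes\theta^{(i)})$ with a fresh copy $\theta^{(i)}$ and the LOCC channel $\mc{T}$ of input partition $:\!\vv{A'LA}\!:\!\vv{BR}\!:\!\vv{C}\!:$ and output partition $:\!\vv{A}\!:\!\vv{C}\!:$ supplied by the definition of teleportation-simulability. Supplying all $n$ copies $\theta^{(1)},\dots,\theta^{(n)}$ at the outset and absorbing the interleaving LOCC channels $\mc{L}^1,\dots,\mc{L}^{n+1}$ together with the $n$ instances of $\mc{T}$ into one LOCC channel $\hl{\mc{L}}$ among the $M$ trusted parties (tensoring an LOCC channel with the identity on the not-yet-used copies is again LOCC, and LOCC channels compose to an LOCC channel), one gets $\omega_{\vv{SK}}=\hl{\mc{L}}(\theta^{\otimes n})$, where $\theta^{\otimes n}$ is regarded as a state whose $M$-partite cut is the $n$-fold replica of $:\!\vv{LA}\!:\!\vv{R}\!:\!\vv{C}\!:$. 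For a tele-covariant $\mc{N}$ one may in addition invoke Theorem~\ref{thm:bicov} to take $\theta$ to be the Choi state.

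Next, because $F(\omega_{\vv{SK}},\gamma_{\vv{SK}})\geq 1-\varepsilon$ for a private state $\gamma$, Proposition~\ref{thm:priv-test} together with Theorem~\ref{thm:priv-gme} give $\log_2 K\leq D^\varepsilon_h(\omega_{\vv{SK}}\Vert\sigma_{\vv{SK}})$ for every $\sigma_{\vv{SK}}\in\BS(:\!\vv{SK}\!:)$, exactly as in the proofs of Theorems~\ref{thm:one-shot-cppp} and~\ref{thm:emax-key-converse}. Using $D^\varepsilon_h(\rho\Vert\sigma)\leq\frac{1}{1-\varepsilon}(D(\rho\Vert\sigma)+h_2(\varepsilon))$ and optimizing over biseparable $\sigma$ yields $\log_2 K\leq\frac{1}{1-\varepsilon}(E_{\GE}(:\!\vv{SK}\!:)_\omega+h_2(\varepsilon))$. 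Since $\BS$ is preserved under LOCC (Section~\ref{sec:ME}), the relative entropy of GME is an LOCC monotone, so $E_{\GE}(:\!\vv{SK}\!:)_\omega=E_{\GE}(:\!\vv{SK}\!:)_{\hl{\mc{L}}(\theta^{\otimes n})}\leq E_{\GE}(\theta^{\otimes n})$, the latter taken with respect to the $n$-fold replica of $:\!\vv{LA}\!:\!\vv{R}\!:\!\vv{C}\!:$. Dividing by $n$ and letting $n\to\infty$ and then $\varepsilon\to0$ (so that $h_2(\varepsilon)/n\to0$), every achievable rate $P$ obeys $P\leq\liminf_{n\to\infty}\frac{1}{n}E_{\GE}(\theta^{\otimes n})=E^{\infty}_{\GE}(:\!\vv{LA}\!:\!\vv{R}\!:\!\vv{C}\!:)_\theta$, which is the asserted bound.

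The step that needs genuine care is that, unlike the fully separable set, $\BS$ is \emph{not} closed under tensor products, so $E_{\GE}$ need not be subadditive and one truly cannot collapse $\frac{1}{n}E_{\GE}(\theta^{\otimes n})$ to the single-copy value $E_{\GE}(\theta)$; this is precisely why the bound must be stated in regularized form, and $E^{\infty}_{\GE}$ should accordingly be read as the regularized expression (a $\liminf$ suffices for the argument above, so one need not silently invoke Fekete's lemma for existence of the limit). The only other item requiring attention is the bookkeeping in the protocol-reduction step: one must confirm that commuting the $\mc{T}$-channels to the front is compatible with the multiplex input/output partitions, so that $\hl{\mc{L}}$ is a bona fide LOCC channel among the $M$ parties that outputs the key and shield systems $\vv{SK}$. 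All remaining inequalities are identical to those already used for Theorems~\ref{thm:one-shot-cppp} and~\ref{thm:ree-key-converse}.
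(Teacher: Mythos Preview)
Your proposal is correct and follows essentially the same reduction as the paper: collapse the adaptive protocol to a single LOCC map acting on $\theta^{\otimes n}$ via teleportation-simulability, then bound $\log_2 K$ by $E_{\GE}(\theta^{\otimes n})$ and regularize. The only difference is cosmetic: the paper extracts the $\log_2 K$ bound by untwisting, tracing out the shield, and invoking asymptotic continuity of the relative entropy together with Lemma~\ref{bisep:lemma7} (itself a corollary of Theorem~\ref{thm:priv-gme}), whereas you go through the privacy test to get $\log_2 K\leq D^\varepsilon_h(\omega\Vert\sigma)$ and then apply the Wang--Renner inequality $D^\varepsilon_h\leq(D+h_2(\varepsilon))/(1-\varepsilon)$ before using LOCC monotonicity of $E_{\GE}$. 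Both routes rest on Theorem~\ref{thm:priv-gme} and yield the same weak-converse constant; your variant is arguably slightly cleaner in that it recycles the machinery already assembled for Theorems~\ref{thm:one-shot-cppp} and~\ref{thm:ree-key-converse} rather than re-deriving a separate lemma.
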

For the proof see Appendix \ref{app:tele-cov}. Using the above theorem, we immediately get the following corollary.

\begin{corollary}
For a multiplex quantum channel $\mc{N}_{\vv{A'}\vv{B}\to\vv{A}\vv{C}}$, which is teleportation-simulable with a tensor-stable biseparable resource state it holds $\hat{P}_{\LOCC}(\mc{N})=0$.
\end{corollary}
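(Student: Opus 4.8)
The plan is to read this off from Theorem~\ref{theo:telecov-weakconv-GE}: since $\mc{N}_{\vv{A'}\vv{B}\to\vv{A}\vv{C}}$ is teleportation-simulable with resource state $\theta_{\vv{LA}\vv{R}\vv{C}}$, that theorem already supplies $\hat{P}_{\LOCC}(\mc{N})\leq E^{\infty}_{\GE}(:\vv{LA}:\vv{R}:\vv{C}:)_{\theta}$, where the right-hand side is the regularized relative entropy of GME, $E^{\infty}_{\GE}(:\vv{LA}:\vv{R}:\vv{C}:)_{\theta}=\lim_{n\to\infty}\frac{1}{n}E_{\GE}(:\vv{LA}:\vv{R}:\vv{C}:)_{\theta^{\otimes n}}$, with $E_{\GE}(:\vv{X}:)_{\rho}=\inf_{\sigma\in\BS(:\vv{X}:)}D(\rho\Vert\sigma)$. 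Hence the only thing left to establish is that this regularized quantity vanishes for a tensor-stable biseparable $\theta$.

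First I would unpack the hypothesis: by the definition of tensor-stable biseparability recalled in Section~\ref{sec:ME}, the $n$-fold tensor power $\theta^{\otimes n}$ is biseparable with respect to the natural tripartition $:\vv{LA}:\vv{R}:\vv{C}:$ for every $n\in\mathbb{N}$. Therefore $\theta^{\otimes n}$ itself is an admissible choice of $\sigma$ in the infimum defining $E_{\GE}(:\vv{LA}:\vv{R}:\vv{C}:)_{\theta^{\otimes n}}$, so that $0\leq E_{\GE}(:\vv{LA}:\vv{R}:\vv{C}:)_{\theta^{\otimes n}}\leq D(\theta^{\otimes n}\Vert\theta^{\otimes n})=0$ by non-negativity of the relative entropy. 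Thus $E_{\GE}(:\vv{LA}:\vv{R}:\vv{C}:)_{\theta^{\otimes n}}=0$ for all $n$, whence $E^{\infty}_{\GE}(:\vv{LA}:\vv{R}:\vv{C}:)_{\theta}=0$. Substituting this into the bound of Theorem~\ref{theo:telecov-weakconv-GE} gives $\hat{P}_{\LOCC}(\mc{N})\leq 0$, and since $\hat{P}_{\LOCC}(\mc{N})\geq 0$ trivially (the protocol that outputs a fixed one-symbol, hence perfectly secret, key achieves rate $0$), we conclude $\hat{P}_{\LOCC}(\mc{N})=0$.

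I do not expect any genuine obstacle here, because all the substantive work has been carried out in Theorem~\ref{theo:telecov-weakconv-GE} and, beneath it, in Theorem~\ref{thm:priv-gme} together with the teleportation-simulability reduction of Theorem~\ref{thm:bicov}. The single point that warrants a remark is why the hypothesis must be \emph{tensor-stable} biseparability rather than ordinary biseparability: as stressed in Section~\ref{sec:ME}, $\BS$ is not closed under tensor products, so for a generic biseparable $\theta$ the terms $E_{\GE}(\theta^{\otimes n})$ need not vanish and the regularization $E^{\infty}_{\GE}(\theta)$ could be strictly positive; tensor-stability is precisely the condition guaranteeing $\theta^{\otimes n}\in\BS$ for all $n$, which is what makes the regularized upper bound collapse to zero.
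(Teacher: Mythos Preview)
Your proposal is correct and follows precisely the route the paper intends: the corollary is stated as an immediate consequence of Theorem~\ref{theo:telecov-weakconv-GE}, and your argument unpacks exactly that, using tensor-stable biseparability to force $E_{\GE}(\theta^{\otimes n})=0$ for every $n$ and hence $E^{\infty}_{\GE}(\theta)=0$. One minor terminological slip: $:\vv{LA}:\vv{R}:\vv{C}:$ is not a tripartition but the full $M$-partition (each $L_aA_a$, each $R_b$, each $C_c$ is a separate part); this does not affect the argument.
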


Let us note that unlike in Refs.~\cite{PLOB15,HHHO09}, which deals with the bipartite relative entropy of entanglement, we do not trivially get a non-regularized bound, which is due to the fact that the definition of biseparability is not tensor-stable. If we consider the relative entropy of entanglement with respect to fully separable states, however, we can employ proof argument of Theorem~4 in Ref. \cite{DBW17} and arrive at the following theorem:

\begin{theorem}\label{theo:E-alpha_bound}
For a fixed $n,\ K\in\mathbb{N},\ \varepsilon\in(0,1)$, the following bound holds for an $(n,M,\varepsilon)$ protocol for LOCC-assisted secret key agreement over a multiplex teleportation-simulable quantum channel $\mc{N}_{\vv{A'}\vv{B}\to\vv{A}\vv{C}}$ with associated resource state $\theta_{\vv{LA}\vv{R}\vv{C}}$,  $\forall \alpha>1$,
\begin{equation}\label{eq:GE-tele-weak-converse}
\frac{1}{n} \log_2 K \leq  \widetilde{E}_{\alpha,E} (:\vv{LA}:\vv{R}:\vv{C}:)_{\theta} 
+\frac{\alpha}{n(\alpha-1)}\log_2 \!\(
\frac{1}{1-\varepsilon}\).
\end{equation} 
\end{theorem}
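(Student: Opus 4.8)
The plan is to adapt the proof of Ref.~\cite[Theorem~4]{DBW17} to the multiplex setting, combining the multipartite privacy test of Theorem~\ref{thm:priv-gme} with the teleportation-simulation reduction and the additivity of the sandwiched R\'enyi relative entropy on tensor products.

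First I would start from an arbitrary $(n,K,\varepsilon)$ LOCC-assisted secret key agreement protocol producing the final state $\omega_{\vv{SK}}$ with $F(\omega_{\vv{SK}},\gamma_{\vv{SK}})\geq 1-\varepsilon$ for some $M$-partite private state $\gamma_{\vv{SK}}$. By Proposition~\ref{thm:priv-test} the associated $\gamma$-privacy test obeys $\Tr[\Pi^{\gamma}_{\vv{SK}}\omega_{\vv{SK}}]\geq 1-\varepsilon$, while by Theorem~\ref{thm:priv-gme} (and since $\FS(:\!\vv{SK}\!:)\subseteq\BS(:\!\vv{SK}\!:)$) every $\sigma_{\vv{SK}}\in\FS(:\!\vv{SK}\!:)$ satisfies $\Tr[\Pi^{\gamma}_{\vv{SK}}\sigma_{\vv{SK}}]\leq 1/K$. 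Hence the privacy test is a feasible measurement for the hypothesis-testing divergence, so $\log_2 K\leq D^\varepsilon_h(\omega_{\vv{SK}}\Vert\sigma_{\vv{SK}})$ for all fully separable $\sigma_{\vv{SK}}$. Applying inequality~\eqref{eq:hyp-sand-rel} and optimizing over $\sigma_{\vv{SK}}\in\FS(:\!\vv{SK}\!:)$ gives
\begin{equation}
\log_2 K\leq \widetilde{E}_{\alpha,E}(:\!\vv{SK}\!:)_{\omega}+\frac{\alpha}{\alpha-1}\log_2\!\left(\frac{1}{1-\varepsilon}\right),\qquad \alpha>1.
\end{equation}

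Next I would exploit teleportation-simulability. Since $\mc{N}_{\vv{A'}\vv{B}\to\vv{A}\vv{C}}$ equals an LOCC channel acting on the input together with the resource state $\theta_{\vv{LA}\vv{R}\vv{C}}$, each of the $n$ uses of $\mc{N}$ in the protocol can be replaced by appending a fresh copy of $\theta$ and performing the simulating LOCC. Absorbing these LOCC channels into the encoding/decoding rounds $\mc{L}^1,\dots,\mc{L}^{n+1}$ of the protocol collapses the whole adaptive strategy into a single LOCC channel $\mc{L}$ with respect to the partition induced by the roles of the parties, so that $\omega_{\vv{SK}}=\mc{L}\big(\theta_{\vv{LA}\vv{R}\vv{C}}^{\otimes n}\big)$, where the $n$ copies of $\theta$ carry the product partition $(:\!\vv{LA}\!:\!\vv{R}\!:\!\vv{C}\!:)^{\otimes n}$ with systems grouped by owner. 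This is the same reduction used in Ref.~\cite{PLOB15} and in Theorem~\ref{theo:telecov-weakconv-GE}; the only care needed is to check that every simulating LOCC is local across exactly the required bipartitions, which holds by construction of teleportation simulation. Since $\widetilde{D}_\alpha$ is a generalized divergence for $\alpha\in[1/2,1)\cup(1,\infty)$, the quantity $\widetilde{E}_{\alpha,E}$ is LOCC-monotone, so $\widetilde{E}_{\alpha,E}(:\!\vv{SK}\!:)_{\omega}\leq \widetilde{E}_{\alpha,E}(\theta^{\otimes n})$ evaluated on the product partition.

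Finally, to bound $\widetilde{E}_{\alpha,E}(\theta^{\otimes n})$ by $n\,\widetilde{E}_{\alpha,E}(:\!\vv{LA}\!:\!\vv{R}\!:\!\vv{C}\!:)_{\theta}$ I would take a near-optimal fully separable $\sigma\in\FS(:\!\vv{LA}\!:\!\vv{R}\!:\!\vv{C}\!:)$ for $\theta$; because full separability is preserved under tensor products, $\sigma^{\otimes n}$ is fully separable on the product partition, and additivity of the sandwiched R\'enyi relative entropy on product states gives $\widetilde{D}_\alpha(\theta^{\otimes n}\Vert\sigma^{\otimes n})=n\,\widetilde{D}_\alpha(\theta\Vert\sigma)$. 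Combining the three displays and dividing by $n$ yields~\eqref{eq:GE-tele-weak-converse}. The main obstacle is the bookkeeping in the teleportation-simulation step: one must verify that after absorbing all simulating LOCCs the resulting operation is genuinely LOCC with respect to the $M$-party partition and that the systems are regrouped consistently with the partitions appearing in $\widetilde{E}_{\alpha,E}$. It is precisely here that working with the tensor-stable set $\FS$ rather than $\BS$ is essential, which is why this bound features the entanglement (rather than GME) measure of the resource state and, unlike Theorem~\ref{theo:telecov-weakconv-GE}, needs no regularization.
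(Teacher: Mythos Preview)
Your proposal is correct and follows essentially the same approach as the paper's proof: privacy test combined with Theorem~\ref{thm:priv-gme} to bound $\log_2 K$ by $D^\varepsilon_h$, teleportation-simulation to reduce the adaptive protocol to a single LOCC acting on $\theta^{\otimes n}$, passage to $\widetilde{D}_\alpha$ via~\eqref{eq:hyp-sand-rel}, and additivity of $\widetilde{D}_\alpha$ on tensor products together with tensor-stability of $\FS$. The only cosmetic difference is ordering: the paper applies data processing of $D^\varepsilon_h$ directly under the LOCC $\mc{L}$ (with $\sigma_{\vv{SK}}=\mc{L}({\sigma'}^{\otimes n})$) before passing to $\widetilde{D}_\alpha$, whereas you first pass to $\widetilde{E}_{\alpha,E}$ and then invoke its LOCC-monotonicity; both routes are equivalent.
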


For a proof see Appendix \ref{app:tele-cov}. Setting $\alpha=1+\frac{1}{\sqrt{n}}$ and letting $n\to\infty$, we obtain

\begin{corollary}\label{thm:tele}
The LOCC-assisted secret-key-agreement capacity of a multiplex channel $\mc{N}_{\vv{A'}\vv{B}\to\vv{A}\vv{C}}$ which is teleportation-simulable with resource state $\theta_{\vv{LA}\vv{R}\vv{C}}$ is upper bounded as 
\begin{equation}\label{eq:tele-bound}
\hat{P}_{\LOCC}(\mc{N})\leq E_E(:\vv{LA}:\vv{R}:\vv{C}:)_{\theta}, 
\end{equation}
where $E(:\vv{A}:)_{\rho}$ is the relative entropy of entanglement of state $\rho_{\vv{A}}$; this bound is also a strong converse bound. 
\end{corollary}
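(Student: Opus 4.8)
The plan is to obtain the corollary as a limiting case of Theorem~\ref{theo:E-alpha_bound}, which already supplies, for every $\alpha>1$ and every $(n,K,\varepsilon)$ LOCC-assisted protocol over $\mc{N}$, the one-parameter family of inequalities \eqref{eq:GE-tele-weak-converse}, i.e. $\tfrac1n\log_2 K\le \widetilde E_{\alpha,E}(:\vv{LA}:\vv{R}:\vv{C}:)_{\theta}+\tfrac{\alpha}{n(\alpha-1)}\log_2\bigl(\tfrac1{1-\varepsilon}\bigr)$. I would choose $\alpha$ to depend on $n$ so as to simultaneously kill the finite-size penalty term and collapse $\widetilde E_{\alpha,E}$ onto the ordinary relative entropy of entanglement. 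The choice $\alpha=\alpha_n\coloneqq 1+1/\sqrt n$ does both: the penalty equals $\tfrac{1+1/\sqrt n}{\sqrt n}\log_2\bigl(\tfrac1{1-\varepsilon}\bigr)\to 0$, while $\alpha_n\downarrow 1$, so that $\widetilde E_{\alpha_n,E}(:\vv{LA}:\vv{R}:\vv{C}:)_{\theta}\to E_E(:\vv{LA}:\vv{R}:\vv{C}:)_{\theta}$. Here it is essential that the bound is phrased through fully separable states, for which the divergence of entanglement is tensor-stable; this is why, unlike the GME bound of Theorem~\ref{theo:telecov-weakconv-GE}, no regularization appears.

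The weak converse then follows immediately: if $P$ is achievable, then for each $\delta>0$ and all large $n$ there is an $(n,2^{n(P-\delta)},\varepsilon)$ protocol, so $P-\delta\le\widetilde E_{\alpha_n,E}(:\vv{LA}:\vv{R}:\vv{C}:)_{\theta}+o(1)$; letting $n\to\infty$ and then $\delta\to0$ gives $P\le E_E(:\vv{LA}:\vv{R}:\vv{C}:)_{\theta}$, hence $\hat P_{\LOCC}(\mc{N})\le E_E(:\vv{LA}:\vv{R}:\vv{C}:)_{\theta}$ (the dependence on $\theta$ alone being inherited from Theorem~\ref{theo:E-alpha_bound}, where teleportation-simulability is used to reduce the adaptive protocol to the resource state). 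For the strong converse I would rewrite \eqref{eq:GE-tele-weak-converse}, with $P=\tfrac1n\log_2 K$, as $1-\varepsilon\le 2^{-\frac{n(\alpha-1)}{\alpha}\bigl(P-\widetilde E_{\alpha,E}(:\vv{LA}:\vv{R}:\vv{C}:)_{\theta}\bigr)}$. If a candidate rate exceeds $E_E(:\vv{LA}:\vv{R}:\vv{C}:)_{\theta}$, say by an amount $c>0$, then using $\widetilde E_{\alpha,E}\downarrow E_E$ I fix a single $\alpha^\ast>1$ independent of $n$ with $\widetilde E_{\alpha^\ast,E}(:\vv{LA}:\vv{R}:\vv{C}:)_{\theta}\le E_E(:\vv{LA}:\vv{R}:\vv{C}:)_{\theta}+c/2$; the exponent is then at least $\tfrac{n(\alpha^\ast-1)}{\alpha^\ast}\cdot\tfrac c2$, so $1-\varepsilon\to0$ exponentially, i.e. $\varepsilon\to1$. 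Hence every rate above $E_E(:\vv{LA}:\vv{R}:\vv{C}:)_{\theta}$ is a strong-converse rate, giving $\widetilde P_{\LOCC}(\mc{N})\le E_E(:\vv{LA}:\vv{R}:\vv{C}:)_{\theta}$.

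The only step that is not bookkeeping is the claim $\lim_{\alpha\to1^+}\widetilde E_{\alpha,E}(:\vv{LA}:\vv{R}:\vv{C}:)_{\theta}=E_E(:\vv{LA}:\vv{R}:\vv{C}:)_{\theta}$, i.e. the interchange of the limit $\alpha\to1$ with the infimum over fully separable states. The inequality ``$\ge$'' is free: by monotonicity of $\widetilde D_\alpha$ in $\alpha$ \eqref{eq:mono_sre} and $\widetilde D_1=D$ \eqref{eq:mono_renyi} one has $\widetilde D_\alpha(\theta\Vert\sigma)\ge D(\theta\Vert\sigma)$ for all $\alpha>1$, so taking the infimum over $\sigma\in\FS(:\vv{LA}:\vv{R}:\vv{C}:)$ gives $\widetilde E_{\alpha,E}\ge E_E$. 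For ``$\le$'' I would fix $\delta>0$, pick a near-optimal fully separable $\sigma^\ast$ with $D(\theta\Vert\sigma^\ast)\le E_E(:\vv{LA}:\vv{R}:\vv{C}:)_{\theta}+\delta$, and use continuity of $\alpha\mapsto\widetilde D_\alpha(\theta\Vert\sigma^\ast)$ at $\alpha=1$ to get $\widetilde E_{\alpha,E}(:\vv{LA}:\vv{R}:\vv{C}:)_{\theta}\le\widetilde D_\alpha(\theta\Vert\sigma^\ast)\le E_E(:\vv{LA}:\vv{R}:\vv{C}:)_{\theta}+2\delta$ for $\alpha$ near $1$, then send $\delta\to0$. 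This continuity lemma is the part I expect to require the most care; in infinite dimensions one additionally wants lower semicontinuity of $\widetilde D_\alpha$ and closedness of $\FS$, but these are standard and the finite-dimensional case is unproblematic. Everything else reduces to the algebra above, so this lemma carries the real content of the corollary.
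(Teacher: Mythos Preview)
Your proposal is correct and follows exactly the route the paper indicates: the paper's entire proof is the single line ``Setting $\alpha=1+\frac{1}{\sqrt{n}}$ and letting $n\to\infty$'' applied to Theorem~\ref{theo:E-alpha_bound}, and you have faithfully carried out these details, including the strong-converse rewriting and the limit-interchange argument for $\widetilde E_{\alpha,E}\to E_E$. Your treatment is in fact more careful than the paper's sketch, particularly in isolating the continuity lemma $\lim_{\alpha\to1^+}\widetilde E_{\alpha,E}=E_E$ as the only nontrivial step.
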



\section{Application to other protocols}\label{sec:Appl}
In this section, we exploit the general nature of an LOCC-assisted secret key agreement protocol over a multiplex quantum channel. We derive upper bounds on the rates for two-party and conference key distribution for a number of seemingly different protocols that are of wide interest. Such seemingly different quantum key distribution and conference key agreement protocols can be shown to be special types of LOCC-assisted secret key agreement protocol over some particular multiplex quantum channels. In particular, we identify protocols like measurement-device-independent quantum key distribution, both in the bipartite \cite{braunstein2012side,LCQ+12} and conference setting \cite{FYCC15,PhysRevA.93.022325,OLLP19}, as well as for quantum key repeaters, i.e., generalized quantum repeaters with the goal of distributing private states \cite{bauml2015limitations,CM17,christandl2017private} to be special types of LOCC-assisted secret key agreement protocol over some particular multiplex quantum channels. We are able to derive upper bounds on the rates achieved in these protocols by exploiting our results in the previous Section. Furthermore, as EPR or GHZ states are special cases of bi- or multipartite private states, respectively, the same holds for LOCC-assisted quantum communication protocols, where the goal is to distill EPR or GHZ states. By providing a unified approach to such a diverse class of private communication setup, we contribute to a better understanding of limitations on respective protocols. These limitations provide benchmarks on experimental realizations of private communication protocols.

\subsection{Measurement-Device-Independent QKD}
Measurement-device-independent (MDI) QKD is a form of QKD, where the honest parties, Alice and Bob, trust their state preparation but do not trust the detectors \cite{braunstein2012side,LCQ+12}. In a typical setup of MDI-QKD, such as the ones described in Refs.~\cite{braunstein2012side,LCQ+12}, Alice and Bob locally prepare states which they send to a relay station, that might be in the hands of Eve, using channels ${\mc N}^1_{A'\to A}$ and ${\mc N}^2_{B'\to B}$. At the relay station, a joint measurement of the systems $AB$ is performed, e.g., in the Bell basis, the results of which are classical values that are then communicated to Alice and Bob. Alice and Bob use the relay many times and perform classical post-processing. 

A way to incorporate such protocols in our scenario is to identify Alice and Bob as two trusted parties and include the measurement performed by the relay, as well as channels ${\mc N}^{1,2}$, into a bipartite quantum-classical (qc) channel 
\begin{equation}\label{eq:MDIchannel}
\mc{N}^{\text{MDI}}_{A'B'\to Z_AZ_B}\coloneqq\mc{B}_{X\to Z_AZ_B}\circ\mc{M}_{AB\to X}\circ{\mc N}^1_{A'\to A}\otimes{\mc N}^2_{B'\to B},
\end{equation}
where $\mc{M}_{AB\to x}$ is the quantum instrument (channel) performing a POVM $\{\Lambda^x\}_x$ and writing the output $x$ into a classical register $X$ and $\mc{B}_{X\to Z_AZ_B}$ a classical broadcast channel sending input $x$ to $Z_A$ and $Z_B$. Registers $Z_A$ and $Z_B$ are received by Alice and Bob respectively.
The channel ${\cal N}_{A'B'\rightarrow Z_AZ_B}^{\text{MDI}}$ is a multiplex channel which is a composition of multiplex channels (see Fig.~\ref{fig:MDI-fig}). 

Application of Theorem \ref{thm:emax-key-converse} for arbitrary systems and Theorem \ref{thm:ree-key-converse} for finite-dimensional systems (as well as the results of Ref.~\cite{DBW17,BDW18,D18thesis}), then provides bounds on the achievable key rate in terms of $E_{\max,\E}(\mc{N}^{\text{MDI}}_{A'B'\to Z_AZ_B})$ and $E^\infty_{\E}(\mc{N}^{\text{MDI}}_{A'B'\to Z_AZ_B})$, respectively, which can be seen as measures of the entangling capabilities of the the measurement $\{\Lambda^x\}_x$. The multiplex quantum channel $\mc{N}^{\text{MDI}}_{A'B'\to Z_AZ_B}$ is tele-covariant if $\mc{N}_{1,2}$ as well as $\mc{M}$ are tele-covariant and the bound reduces to the relative entropy of entanglement of the Choi state of $\mc{N}^{\text{MDI}}_{A'B'\to Z_AZ_B}$.

\subsection{Measurement-Device-Independent Conference Key Agreement}
The concept of MDI-QKD has also been generalized to the multipartite setting \cite{FYCC15,PhysRevA.93.022325,OLLP19}. We assume a setup of MDI conference agreement, where a number of trusted parties $\tf{A}_i$, for $i\in[n]$,  locally prepare a states which they send to a central relay via channels ${\mc N}^1_{A'_1\to A_1},...,{\mc N}^n_{A'_n\to A_n}$. At the relay a joint measurement is performed on $A_1A_2...A_n$, the result of which is broadcasted back to the trusted parties. It is straightforward to generalize ~\eqref{eq:MDIchannel} to the multipartite case and apply Theorems \ref{thm:emax-key-converse} and \ref{thm:ree-key-converse} (or Theorem \ref{thm:tele} for tele-covariant channels)  to obtain bounds on the conference key rates.

\subsection{Quantum Key Repeater}

Let us now consider the quantum key repeater. In its simplest setup, there are three parties, Alice, Bob, and Charlie. Alice and Bob are trusted parties who wish to establish a cryptographic key, whereas Charlie is assumed to be cooperative but is not trusted. One could think of Charlie as a telecom provider. There are two quantum channels, ${\mc N}_1^{A\to C_A}$ from Alice to Charlie and ${\mc N}_2^{B\to C_B}$ from Bob to Charlie. Alice and Bob are not connected by a quantum channel and are assumed not to have any pre-shared entanglement. Instead, Alice and Bob locally prepare quantum states, e.g., two singlets $\Phi^+_{AR_A}$ and $\Phi^+_{BR_B}$, and both send a subsystem to Charlie, using the respective channels. This is then followed by an entanglement swapping operation 
\cite{PhysRevLett.71.4287}, where Charlie performs a joint measurement on the $C_AC_B$ subsystem, communicates the result to Bob, who then performs a unitary on his reference system $R_B$, which should create entanglement that can be used for cryptographic key, between Alice and Bob. The key has to be secure even in the case that Charlie's information falls into the hands of Eve.

If the channels ${\mc N}_1^{A\to C_A}$  and ${\mc N}_2^{B\to C_B}$ are too noisy, it might be necessary to use them multiple times and perform an entanglement purification or error correction protocol before applying the swapping operation. Whereas early quantum repeater protocols \cite{PhysRevLett.81.5932,PhysRevA.59.169} make use of entanglement purification protocols that require two-way classical communication, between Alice and Charlie and between Charlie and Bob, it is also possible to use error correction that only requires one-way classical communication. Such protocols are known as second and third-generation repeater protocols  (see \cite{munro2015inside} and references therein).

By using a large enough number of repeater stations, the key can, in principle distributed across arbitrarily long distances. A way to extend a basic three-party repeater protocol to arbitrarily long repeater chains is known as nested purification \cite{PhysRevA.59.169}. 
More advanced schemes using error correction and one-way communication have also been developed \cite{munro2015inside}.

As in Refs.~\cite{bauml2015limitations,CM17,christandl2017private}, we want to find upper bounds on the rates at which the key can be distributed. Depending on the repeater protocol, there are different ways in which we can describe a quantum key repeater as a multipartite channel and use our results to obtain such bounds. We will now describe how a repeater can be described by a bipartite channel. For an alternative way to describe a repeater we refer to Appendix \ref{App:Repeater}.

In order to describe a repeater as a bipartite channel, we consider two trusted parties, Alice and Bob, and a bipartite quantum-to-classical (qc) channel that takes two quantum (and possibly also classical) inputs from Alice and Bob and returns two classical outputs to Alice and Bob, respectively. Such an operation could include the channels from Alice to Charlie and from Bob to Charlie, the measurement performed by Bob, as well as classical communication of the measurement result from Charlie to Alice and Bob. It could also include an error correction protocol that uses the channels from Alice to Charlie and from Bob to Charlie multiple times and makes use of one-way classical communication from Alice to Charlie and from Bob to Charlie. It is then followed by Charlie's measurement and classical communication to Alice and Bob. Alice and Bob are then allowed to perform LOCC among them but not including Charlie. In the case without error correction, we can define
\begin{equation}
{\mc N}_{AB\to XY}^{\text{repeater}}\coloneqq {\mc M}_{C_AC_B\to  XY}\circ {\mc N}_1^{A\to C_A}\otimes{\mc N}_2^{B\to C_B},
\end{equation}
where ${\mc M}_{C_AC_B\to XY}$ describes the measurement and sending of classical messages $X$ and $Y$ to Alice and Bob, respectively. If we add one-way error correction, we get a bipartite channel of the form
\begin{equation}
{\mc N}_{A^kB^kX'Y'\to XY}^{\text{repeater}}\coloneqq {\mc M}_{\tilde{C}_A\tilde{C}_B\to  XY}\circ {\mc E}_1^{X'A^k\to \tilde{C}_A}\otimes{\mc E}_2^{Y'B^k\to \tilde{C}_B},
\end{equation}
where ${\mc E}_1^{A^k\to \tilde{C}_A}$ includes $k$ instances of the channel $ {\mc N}_1^{A\to C_A}$, the transmission of the classical data $X'$ obtained by Alice's part of the one-way error correction protocol to Charlie, as well as Charlie's part of the error correction protocol (Alice's part of the one-way error correction protocol is included in the LOCC). ${\mc E}_2^{Y'B^k\to \tilde{C}_B}$ is defined in the same way.

By recursively combining the bipartite channels ${\mc N}^{\text{repeater}}$, it is possible to derive a bipartite channel ${\mc N}^{\text{repeater chain}}$ between Alice and Bob that includes a repeater chain with an arbitrary amount of repeater stations.

Using the results of Refs.~\cite{DBW17,BDW18,D18thesis}, or Theorem \ref{thm:ree-key-converse}, we can obtain upper bounds for key repeater protocols that only involve one-way classical communication from Charlie to Alice and Bob, as have been considered in Refs.~\cite{bauml2015limitations,christandl2017private}. The bounds are given by $\min\{ E_{\max,E}(\mc{N}^{\text{repeater (chain)}}),E^{\infty}_{E}(\mc{N}^{\text{repeater (chain)}}\}$. By Remark~\ref{rem:comp}, if $\mc{N}_{1,2}$ as well as $\mc{M}$ are tele-covariant, so will be $\mc{N}^{\text{repeater (chain)}}$. Hence, by Theorem \ref{thm:tele}, the bound reduces to the relative entropy of entanglement of the Choi state of $\mc{N}^{\text{repeater (chain)}}$. Note that, whereas the bounds in Refs.~\cite{bauml2015limitations,christandl2017private} only depend on the initial states shared by Alice and Charlie as well as  Bob and Charlie, the formulation in terms of a bipartite channel can provide bounds that also depend on the measurement performed by Charlie, as well as operations performed during error correction. The new bounds take into account imperfect measurements and error correction, which provide an additional limitation on the obtainable rate in practical implementations. Our bounds can at least shown to be comparable with the results of Refs.~\cite{bauml2015limitations,christandl2017private} under certain situations of practical interest. For an example, our bound is certainly better when ${\mc N}_1^{A\to C_A}$ and ${\mc N}_2^{B\to C_B}$ are identity channels, allowing Alice and Charlie as well as  Bob and Charlie to share maximally entangled states, whereas Charlie's measurement is noisy.

\subsection{Limitations on some practical prototypes}\label{sec:ex-mdi}
In this section, we explore fundamental limitations on some practical prototypes for MDI-QKD protocols between two trusted parties. We begin first by considering photon-based prototypes for which a detailed discussion of quantum system and transmission noise model can be found in Ref.~\cite{DKD18}. In Appendix~\ref{app:mdiqkd}, we consider MDI-QKD prototypes with qubit systems and transmission noise models depicted by dephasing or depolarizing channels.

We begin by considering a dual-rail scheme based on single photons to encode the qubits~\cite{RP10}. The dual-rail encoding of a qubit in two orthogonal optical modes can be represented in the computational basis of the qubit system, where only one of the two modes is occupied by a single photon and another mode is vacuum. When these optical modes are two polarization modes-- horizontal and vertical-- of the light, then we express eigenstates in computational basis as $\ket{H}$ and $\ket{V}$ for horizontal and vertical polarization. It is also possible to consider frequency-offset modes instead of polarization modes for dual-rail encodings. We assume a noise model for the transmission of a photon through the optical fiber to be a pure-loss bosonic channel with transmissivity $\eta$. The inputs to the optical fiber are restricted to a single-photon subspace that is spanned by $\ket{H}$ and $\ket{V}$. The action of this pure-loss channel on qubit encoded with our dual-rail scheme is identical to an erasure channel~\cite{GBP97} $\mathcal{E}$ with erasure parameter $1-\eta$ and erasure state $\ket{e}$, where $\ket{e}$ is the vacuum state, i.e., zero photon in both modes. We note that an erasure channel is tele-covariant. 

\begin{figure}[ht]
        \includegraphics[trim={0.5cm 8cm 1.5cm 0.5cm},clip,width=0.5\textwidth]{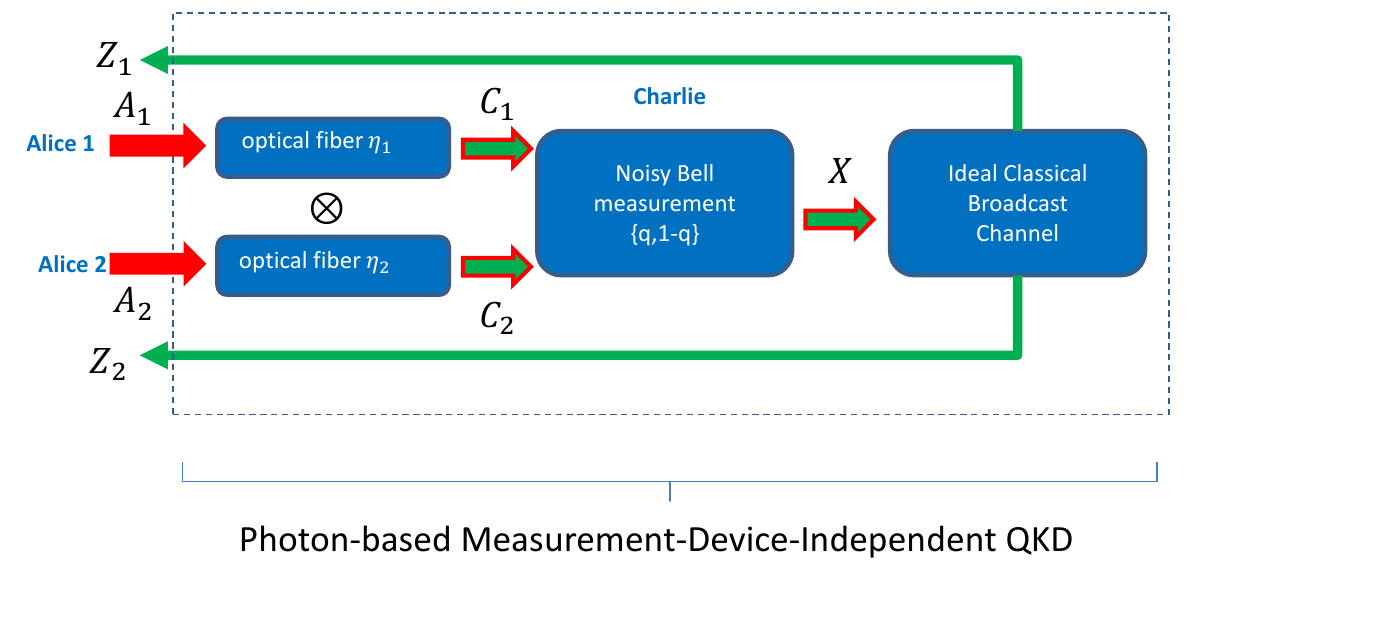}
        \caption{Pictorial illustration of our photon based MDI-QKD between two parties using dual-rail encoding scheme.}\label{fig:mq-2-2}
    \end{figure}
    
Two trusted parties $\tf{A}_i$, $i\in[2]$, use above mentioned polarization-based dual-rail photons to transmit their qubit systems to Charlie at the measurement-relay station, through the optical fibers with respective transmissivities $\eta_i$ (see Fig.~\ref{fig:mq-2-2} for MDI-QKD). We make a simplistic noise model assumption on the measurement channel $\mathcal{M}_{\vv{C_i}\to X}$ by Charlie: it can perform perfect qubit Bell-measurement for bipartite MDI-QKD, respectively, with probability $q$. Whereas with probability $1-q$ for the failed measurement, we assume the relay station signals $\op{\perp}_{X}$ to the users. In addition, we can safely assume classical communication $X\to \vv{Z}$ among all parties to be clean (noiseless) as they do not require any quantum resource. Finally, for simplicity, we assume error-correcting local operations for all parties can be made perfectly. 

To calculate upper bound on the MDI-QKD capacity, it suffices to consider the relative entropy of entanglement of the Choi state of associated multiplex channel $\mathcal{N}^{\text{MDI},\mathcal{E}}_{\vv{A}\to \vv{Z}}$ as it is tele-covariant. Notice that the action of erasure channel $\mathcal{E}_{A_i\to C_i}$ on $D_i\in\{\op{H}_{A_i},\ket{H}\!\bra{V}_{A_i},\ket{V}\!\bra{H}_{A_i},\op{V}_{A_i}\}$ is given as
\begin{equation}
\mathcal{E}_{A_i\to C_i}(D_i)=\eta_iD_i+(1-\eta_i) \Tr[D_i]\op{e}_{C_i}.
\end{equation}
Then, the Choi state $J^\mathcal{E}_{\vv{L}\vv{C}}$ of $\bigotimes_{i=1}^2\mathcal{E}_{A_i\to C_i}$ is
\begin{equation}~\label{eq:e-choi}
   J^\mathcal{E}_{\vv{L}\vv{C}}=\bigotimes_{i=1}^2 \left(\eta_i\Phi^+_{L_iC_i}+ (1-\eta_i)\frac{\mathbbm{1}_{L_i}}{2}\otimes\op{e}_{C_i}\right). 
\end{equation}

For the bipartite MDI-QKD 
\begin{align}
    \mathcal{M}_{C_1C_2\to X}(\cdot) & =   q\sum_{j=1}^4\Tr[\Phi^{(j)}(\cdot)\Phi^{(j)}] \op{j}_X \nonumber \\ 
    &\qquad +(1-q)\Tr[\cdot]\otimes\op{\perp}_X, \label{eq:bell-m}
\end{align}
where $\{\Phi^{(j)}_{C_1C_2}\}_{j=1}^4$ is the Bell-measurement, a projective measurement. $\{\Phi^{(j)}_{C_1C_2}\}_{j=1}^4$ represents the set of maximally entangled states $\{\Phi^+,\Phi^-,\Psi^+,\Psi^-\}$ for two-qubit systems and $\ket{\perp}\perp\ket{j}$. We note that the Bell-measurement is tele-covariant. Upon action of the measurement channel $\mc{M}_{C_1C_2\to X}$ on the state $J^{\mathcal{E}}_{L_1L_2C_1C_2}$~(Eq. \eqref{eq:e-choi}), the output state is essentially of the form (see~Ref.~\cite{DKD18}) 
\begin{equation}
q\eta_1\eta_2\frac{1}{4} \sum_{j=1}^4 \Phi^{(j)}_{L_1L_2}\otimes\op{j}_X+(1-q\eta_1\eta_2)\frac{\mathbbm{1}_{L_1L_2}}{4}\otimes\op{\perp}_X. 
\end{equation}
This implies, the relative entropy of entanglement of the Choi state of $\mathcal{N}^{\text{MDI},\mathcal{E}}_{A_1A_2\to Z_1Z_2}$ is $q\eta_1\eta_2$. Employing Theorem~\ref{theo:E-alpha_bound}, the bipartite MDI-QKD capacity for the given MDI-QKD prototype with erasure channels is  
\begin{equation}\label{eq:dbwhh}
    \wt{P}_{\LOCC}(\mathcal{N}^{\text{MDI},\{\mathcal{E}_i\}_{i=1}^2})=q\eta_1\eta_2,
\end{equation}
as $q\eta_1\eta_2$ \textit{bits} is an achievable rate for the given setup  (see Refs.~\cite{GES16,PLOB15,WTB16} for the private capacities of $\mathcal{E}_{A_i\to C_i}$). Notice that $q\eta_1\eta_2$ is a strong converse bound.

\begin{figure}[ht]
\centering
 \includegraphics[trim=4.5cm 7.5cm 4.5cm 6.5cm,width=6cm]{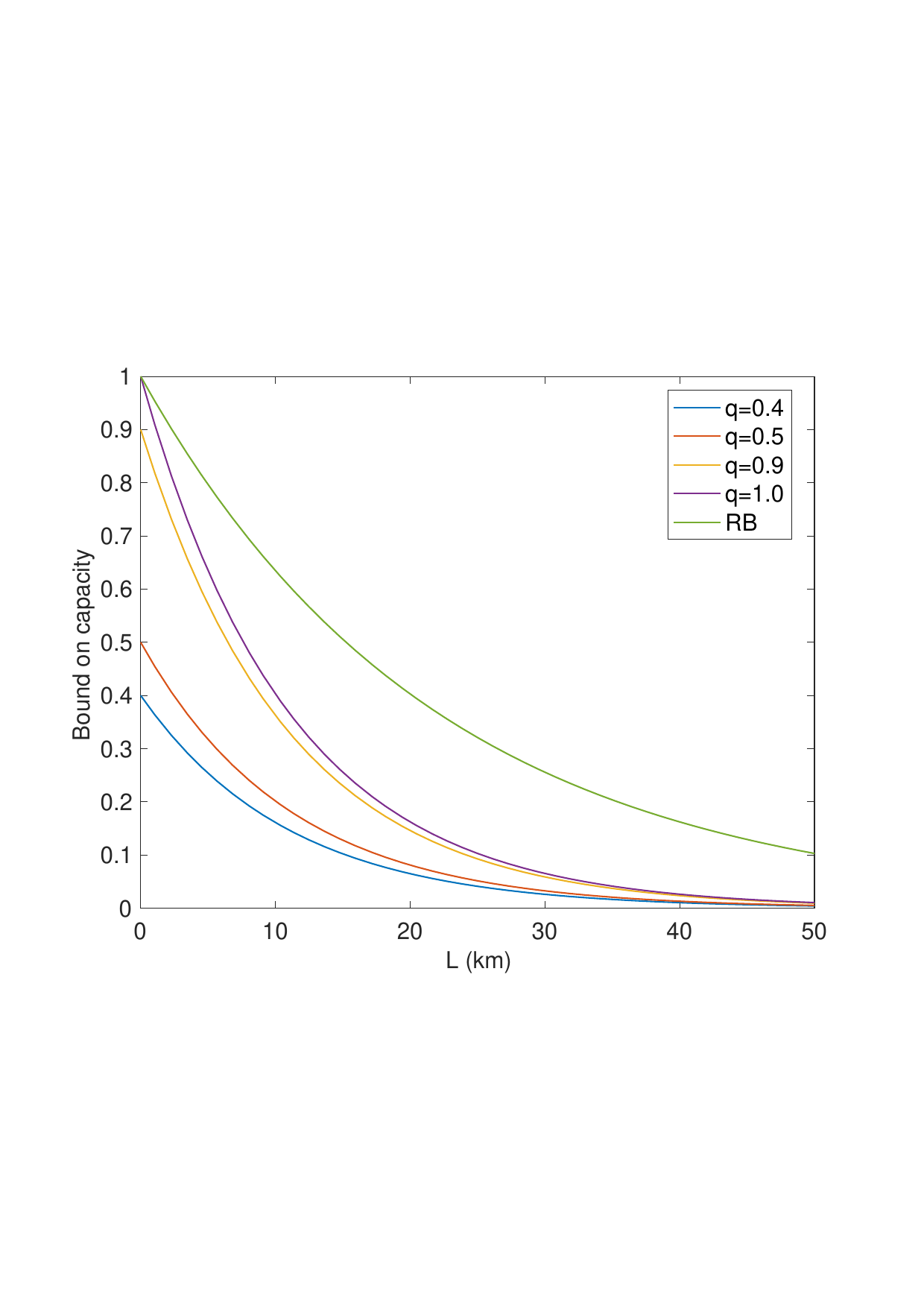}
        \caption{Rate-distance tradeoff comparison between our bound~\eqref{eq:dbwhh} (blue, red, yellow, purple lines) and RB bound (green line) for MDI-QKD protocol for our photon-based prototype.}\label{fig:MDI_2D} 
\end{figure}

For bipartite MDI-QKD (see Fig.~\ref{fig:mq-2-2}), using the results of Ref.~\cite{GES16,PLOB15} we get upper bound (RB) on the bipartite MDI-QKD capacity to be $\min\{\eta_1,\eta_2\}$ (e.g., see~\cite{CM17,pirandola2019capacities}). This bound is always looser than our strong converse upper bound $q\eta_1\eta_2$ bits for all practical purposes. In Fig.~\ref{fig:MDI_2D}, we plot rate-distance trade-off (secret key capacity versus distance L in km) for our bound in Eq.~\eqref{eq:dbwhh} when $n=2$, $\eta_1=\eta_2=\exp(-\alpha L)$, and $\alpha=\frac{1}{22 \textnormal{km}}$ and compare it with the upper bound (RB) $\eta_{1}$ (since $\eta_1=\eta_2$).

 We would like to note that, whereas there now exist variants of MDI-QKD schemes or setups that can achieve the repeaterless bound, e.g.~\cite{MZZ18,LYDS18,LL18}, the dual-rail protocols we consider here, while being sub-optimal, may be easier to implement practically. In particular, implementation of a twin-field protocol requires long distance phase-stabilization, which can be challenging~\cite{LYW+19}. We showcase here the ability to get non-trivial upper bounds for a specific, sub-optimal implementation of QKD schemes. These non-trivial upper bounds are derived from a universal framework, which illustrates the usefulness of the framework we have proposed.

\section{Lower bounds on privacy}\label{sec:DevetakWinter}

In this section, we will derive lower bounds on the secret-key-agreement rate of a multiplex channel achievable by means of cppp, in the sense of Ref.~\cite{DW05}. This is a generalization of the lower bound presented in Ref.~\cite{AH09} from multipartite states to multiplex channels, as well as a generalization of the lower bounds on one-to-one channels presented in Ref.~\cite{pirandola2009direct} to the multiplex case.

The Devetak-Winter (DW) protocol \cite{DW05}, which is considered with bipartite states, only uses one-way communication from Alice to Bob. In Ref.~\cite{pirandola2009direct}, which is concerned with one-to-one channels, a {\it direct} and {\it reverse} scenarios are considered. The former corresponding to the case where the quantum channel and the classical communication are oriented in the same direction. The latter corresponding to the case where the two are oriented in opposite directions. In Ref.~\cite{AH09} the DW protocol is generalized to multipartite states by selecting one {\it distributing party}, which performs the DW protocol with all remaining parties simultaneously.

We will now generalize this result to the setting of multiplex channels. We begin with a fully separable pure state $\phi^n\in\FS\left(:\vv{A'^{ n}L}:\vv{B^{ n}R}:\vv{P}:\right)$. Here the notation $\vv{X^{ n}}$ means we consider $n$ copies of all subsystems $X_1,...,X_M$. Application of $n$ copies of the isometric extension of multiplex channel $\mc{N}_{\vv{A'B}\to\vv{AC}}$ results in a pure state $\psi^n_{:\vv{A^{ n}L}:\vv{R}:\vv{C^{ n}P}:E^{n}}$. Let us now choose one party, ${\bf X}_i$, $i\in\{1,...,M\}$ as the distributing party. Party ${\bf X}_i$ performs a POVM $\mc{Q}=\left\{Q_x\right\}$ with corresponding random variable $X=\{x,p(x)\}$ on her subsystem, resulting in a classical-quantum-...-quantum (c{\bf q}) state
\begin{equation}
\omega_{c{\bf q}}=\sum_xp(x)\op{x}_{X}\otimes\omega^x,
\end{equation}
where $\omega^x$ is the post measurement state of the remaining parties and Eve. Party ${\bf X}_i$ then processes $X$ using classical channels $X\to Y$ and $Y\to Z$, where $Y=\{y,q(y)\}$ and  $Z=\{z,r(z)\}$ are classical random variables. $Y$ is kept by party ${\bf X}_i$ (to be used for the key) and $Z$ is broadcasted to all other trusted parties (and Eve). Upon receiving $Z$, the other parties then perform their respective POVMs with the goal of estimating the key variable $Y$. Thus, as shown in Ref.~\cite{DW05}, every trusted party ${\bf X}_j$, where $i\neq j\in\{1,...,M\}$, obtains a common key with ${\bf X}$ at a rate $r_n^{i\to j}$ of 
\begin{equation}\label{eq59}
r_n^{i\to j}=\frac{1}{n}\(I(Y:{\bf X}_j|Z)_{\tilde{\omega}_{c{\bf q}}}-I(Y:E^n|Z)_{\tilde{\omega}_{c{\bf q}}}\),
\end{equation}
where in a slight abuse of notation we use ${\bf X}_j$ as a placeholder for $A^n_jL_j$, $R_j$ or $C^n_jP_j$, depending if ${\bf X}_j$ is in $\{{\bf A}_a\}_a$, $\{{\bf B}_b\}_b$ or $\{{\bf C}_c\}_c$, respectively. The second and third case corresponds of the reverse and direct scenarios in Ref.~\cite{pirandola2009direct}, respectively. Whereas,
\begin{equation}
\tilde{\omega}_{c{\bf q}}=\sum_{xyz}r(z|y)q(y|x)p(x)\op{xyz}\otimes\omega^x.
\end{equation}
Eq.~\eqref{eq59} has to be maximized over all free input states $\phi^n\in\FS\(:\vv{A'^{ n}L}:\vv{B^{ n}R}:\vv{P}:\)$, POVMs $\mc{Q}$ as well as classical channels $X\to Y$ and $Y\to Z$. As discussed in Ref.~\cite{AH09}, a conference key among all trusted parties can be obtained at the worst case rate between any pair $({\bf X}_i,{\bf X}_j)$. We also have the freedom to choose the distributing party. Putting it all together, we can achieve the following rate of conference key:
\begin{equation}
\hat{P}_{\textnormal{cppp}}^{\mc{N}}\geq\max_i\min_j\lim_{n\to\infty}\max_{\substack{\phi^n,\mc{Q}\text{ POVM}\\X\to Y,Y\to Z} } r_n^{i\to j},
\label{eq:lb-star}
\end{equation}
with $\phi^n\in\FS\(:\vv{A'^{ n}L}:\vv{B^{ n}R}:\vv{P}:\)$. Note that in the case of a single-sender-single-receiver channel $\mc{N}:B\to C$ this reduces to the maximum of the direct and reverse key rates presented in Ref.~\cite{pirandola2009direct}.

Next, we propose an alternative generalization of the DW protocol to the case of multipartite states and multiplex channels. The rough idea is that instead of performing the DW protocol simultaneously with all other parties after her measurement, the distributing party performs a one-way protocol with a second party, who then performs a one-way protocol with a third party, and the iteration continues. In particular, the random variables obtained in all previous measurements can be passed on in every classical communication step, so that a party can adapt her measurement depending on all previous measurements instead of the first measurement as in the protocol described in Ref.~\cite{AH09}. 

We will now describe the protocol in detail: As before, we begin with a fully separable pure state $\phi^n\in\FS\(:\vv{A'^{ n}L}:\vv{B^{ n}R}:\vv{P}:\)$ and apply $n$ copies of the isometric extension of multiplex channel $\mc{N}_{\vv{A'B}\to\vv{AC}}$, resulting in a pure state $\psi^n_{:\vv{A^{ n}L}:\vv{R}:\vv{C^{ n}P}:E^{n}}$. 

Assume now we are given some permutation $\sigma:\{1,...,M\}\to\{\sigma(1),...,\sigma(M)\}$, which determines the order in which the parties participate in the protocol. Party ${\bf X}_{\sigma(1)}$ begins by performing a POVM $\mc{Q}^{(1)}$ on her share of $\psi^n$, i.e. on subsystem $A^n_{\sigma(1)}L_{\sigma(1)}$, $R_{\sigma(1)}$ or $C^n_{\sigma(1)}P_{\sigma(1)}$, depending on which kind of party ${\bf X}_{\sigma(1)}$ is. This results in a random variable $X^{(1)}=\{p_1(x_1),x_1\}$. The corresponding classical-quantum-...-quantum (c{\bf q}) state is
\begin{equation}
\omega^{(1)}_{c{\bf q}}=\sum_{x_1}p_1(x_1)\op{x_1}_{X^{(1)}}\otimes\omega^{x_1}.
\end{equation}
Party ${\bf X}_{\sigma(1)}$ then performs classical channels $X^{(1)}\to Y^{(1)}\to Z^{(1)}$, keeping random variable $Y^{(1)}$ and sending $Z^{(1)}$ to party ${\bf X}_{\sigma(2)}$. The corresponding c{\bf q} state is then given by 
\begin{equation}\label{eq:cq1}
\tilde{\omega}^{(1)}_{c{\bf q}}=\sum_{x_1y_1z_1}r_1(z_1|y_1)q_1(y_1|x_1)p_1(x_1)\op{x_1y_1z_1}\otimes\omega^{x_1},
\end{equation}
where $\omega^{x_1}$ is the state of the remaining parties and Eve. Next, party ${\bf X}_{\sigma(2)}$ performs a POVM $\mc{Q}_{Z^{(1)}}^{(2)}$ on her share of $\omega^{x_{1}}$, which provides random variable ($X^{(2)}$. Party ${\bf X}_{\sigma(2)}$ then performs classical channels $Z^{(1)}X^{(2)}\to Y^{(2)}\to Z^{(2)}$, keeps $Y^{(2)}$ for herself and sends $Z^{(2)}$ to the next party ${\bf X}_{\sigma(3)}$, who applies the same procedure. The protocol is repeated until party ${\bf X}_{\sigma(M)}$ receives $Z^{(M-1)}$, followed by her POVM and post-processing. The c{\bf q} after $k\in\{1,...,M\}$ measurements and post-processing steps is given by
\begin{align}\label{eq:cqk}
\tilde{\omega}^{(k)}_{c{\bf q}}=&\sum_{\substack{x_1...x_k\\y_1...y_k\\z_1...z_k}}\tilde{p}_{x_1y_1z_1...x_ky_kz_k}\\
&\times\op{x_1y_1z_1...x_ky_kz_k}\otimes\omega^{x_1...x_k},\nonumber
\end{align}
where we have defined recursively
\begin{align}
\tilde{p}_{x_1y_1z_1...x_ky_kz_k}=&r_k(z_k|y_k)q_k(y_k|x_kz_{k-1})p_k(x_k)\\
&\times\tilde{p}_{x_1y_1z_1...x_{k-1}y_{k-1}z_{k-1}}.\nonumber
\end{align}

Parties ${\bf X}_{\sigma(k)}$ and ${\bf X}_{\sigma(k+1)}$ can establish a key rate of \cite{DW05}

\begin{align}
r^{\sigma(k)\to\sigma(k+1)}=\frac{1}{n}&\(I(Y^{(k)}:{\bf X}_{\sigma(k+1)}|Z^{(k)})_{\tilde{\omega}^{(k)}_{c{\bf q}}}\right.\\\
&\left.-I(Y^{(k)}:E^n|Z^{(k)})_{\tilde{\omega}^{(k)}_{c{\bf q}}}\)\nonumber
\end{align}

We can again maximize over all free input states, POVMs as well as classical channels and consider the worst-case rate between any pair $({\bf X}_i,{\bf X}_j)$. Further we have the freedom to choose the order of the parties. Putting it all together, we can achieve the following rate of conference key:
\begin{equation}
\hat{P}_{\textnormal{cppp}}^{\mc{N}}\geq\max_{\sigma\in\text{perm}}\min_k\lim_{n\to\infty}
\max_{\substack{\phi^n,\mc{Q}^{(1)},...,\mc{Q}^{(k)}\text{ POVM}\\X^{(1)}\to Y^{(1)}\to Z^{(1)},\\X^{(2)}Z^{(1)}\to Y^{(2)}\to Z^{(2)},\\...,\\X^{(k)}Z^{(k-1)}\to Y^{(k)}\to Z^{(k)}}} r^{\sigma(k)\to\sigma(k+1)},
\label{eq:lb-path}
\end{equation}
with $\phi^n\in\FS\(:\vv{A'^{ n}L}:\vv{B^{ n}R}:\vv{P}:\)$. 

\subsection{Lower bound for Bidirectional Network via spanning tree}
In this section, we observe that one can tighten the lower bounds presented in the previous Section for a particular multiplex channel called {\it bidirectional network} (BN).
In the BN, each of the nodes is connected with its neighbors by product bidirectional channels, which are specific bidirectional channels that is a tensor product of two point-to-point channels directed in opposite ways from each other.

\begin{figure}[t]
    \center{\includegraphics[trim=3cm 7cm 13cm -1cm,width=5.5cm]{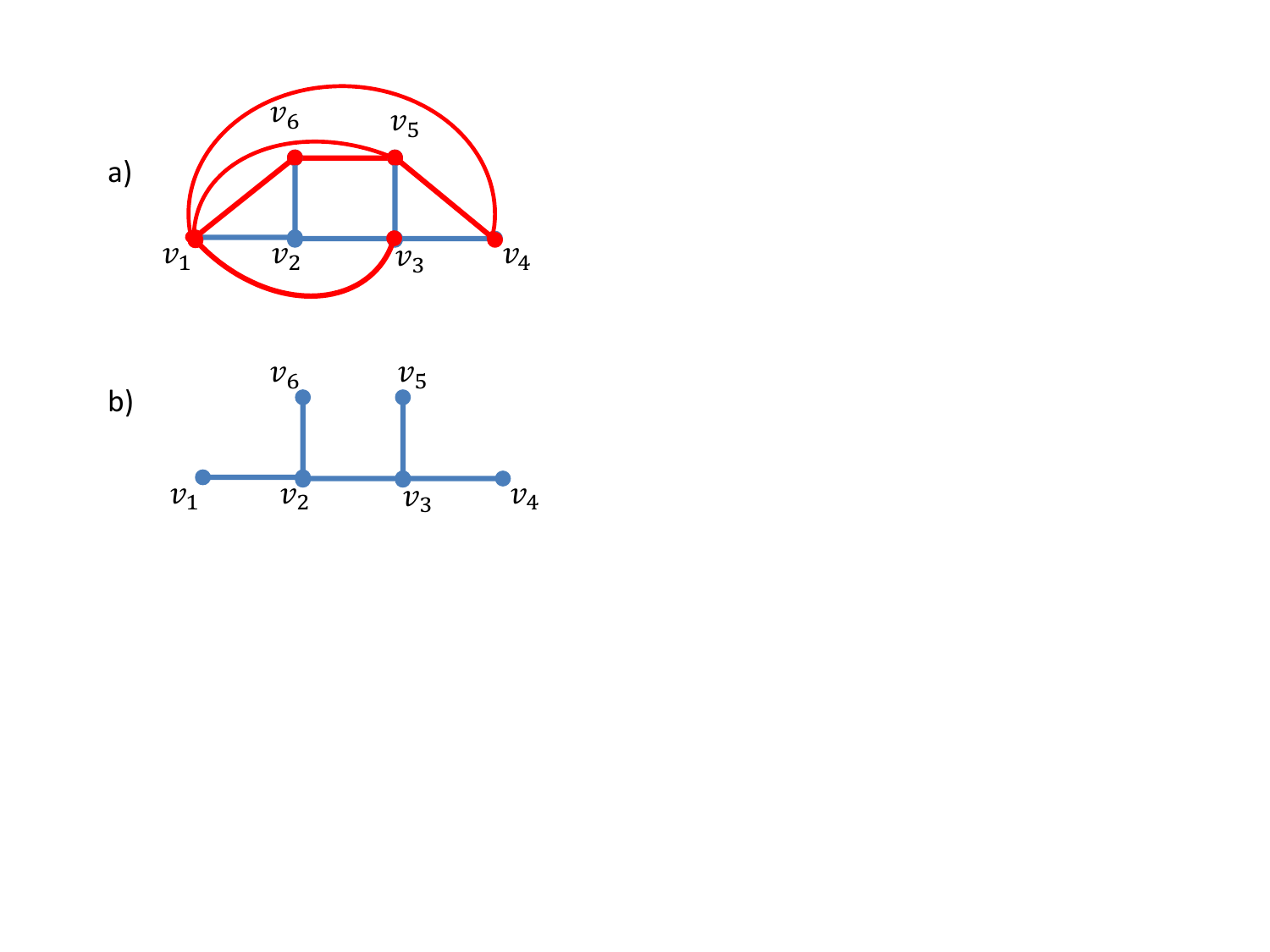}}
    \caption{An exemplary graph on a). Red edges correspond to private capacity $1$ and blue to private capacity $2$. The first strategy of obtaining conference key uses a vertex connected to all
    others, and reaches sub-optimal rate $\min\{w(e_{ij}): (v_{1},v_6), (v_1,v_5),(v_1,v_4),(v_1,v_3),(v_1,v_2)\}=1$.
    The same happens for any path, which inevitably has to pass through some red edge. The solution is a tree,
    which is a spanning tree of this graph, and contains no red edge b). Traversing edges of this tree is equivalent to the breadth-first search.
 }
    \label{fig:multigraph}
\end{figure}

We first observe that BN is a particular case of a multiplex channel (call it $\cal N$). Indeed, in this case, all the parties are of type $\cal A$; i.e., they can read and write. The rule is that each party represented in the network as a vertex $v$ has $deg(v)$ of neighbors (see Ref.~\cite{Wil96book} for introduction to graph theory). Each party is assumed to write to her neighbors and also receive from these neighbors some quantum data.
We present now a tighter bound on the private capacity of $\cal N$ based on the above exemplary graph.

To be more specific, the BN can be represented by a weighted, directed multi-graph $G=(E,V)$ in which each edge $e_{ij}=(v_i,v_j) \in E$ represents a product bidirectional channel $\Lambda_{ij}=\Lambda_{i\rightarrow j} \otimes \Lambda_{j\rightarrow i}$ with weight $W:E\mapsto R_+$ such that $W(e_{ij})= W(e_{ji})={\cal P}(\Lambda_{i\rightarrow j})={\cal P}(\Lambda_{j\rightarrow i})$ (this edge can be represented by two directed edges: one from $v_i$ to $v_j$ and the other vice versa, hence the structure is directed multi-graph). Each product bidirectional channel has in both directions the same private capacity (that however may differ for different channels). By convention, we consider edges with index $i>j$ only. The number of nodes in the network is denoted as $|V|:=n$ and the number of edges as $|E|:=m$.

As a motivation for the next consideration, there comes the fact that for such multiplex channels, the bounds given in inequalities~\eqref{eq:lb-star} and \eqref{eq:lb-path} above are not tight. We exemplify this on the graph presented in Fig.~\ref{fig:multigraph}(a). Namely, we assume that each red edge of the graph $G$ depicted there represents a (bidirectional) channel with private capacity $1$, while each blue - with this capacity equal to $2$. We do not depict all other edges (connections) as they have zero private capacity by assumption. We are ready to make two observations (i) approach of inequality~\eqref{eq:lb-star} would
    yield overall secret key agreement at rate $1$, as the only node connected all others in $G$ ($v_1$) contains
    (in fact, more than one) red edge. (ii) We observe by direct inspection that every path connecting all vertices also contains at least one red edge. On the other hand, there is a set of vertices [depicted with edges on Fig.~\ref{fig:multigraph}(b)] that forms the so-called {\it spanning tree} $T:=(V_T,E_T)\subset G$ of the graph $G$. Spanning tree is an acyclic connected subgraph of $G$, and the word "spanning" refers to the fact that all the vertices of the graph $G$ belong to $V_T$.
    It is easy to see that starting from any vertex of this tree, by the {\it breadth-first search } algorithm, one can visit all its edges, and one can obtain conference key at rate $2$ (see Ref.~\cite{LRCSbook} for introduction to algorithms). 

As a generalization of this idea, one easily comes up with the following lower bound, which is the main result of this section:
\begin{widetext}
 \begin{align}
    \hat{P}_{\textnormal{cppp}}^{\mc{N}}\geq\max_{T\subseteq G}\min_{t\in V_T,t'\in N[t]} \lim_{n\to\infty}
    \max_{\substack{\phi^n,\mc{Q}^{(1)},...,\mc{Q}^{(|V_T|)}\text{ POVM} \\X^{(v_1)}\to Y^{(v_1)}\to Z^{(v_1)},\\
        X^{(N[v_1])}Z^{(v_1)}\to Y^{(N[v_1])} \to (Z^{(N[v_1])})|_{\mbox{deg}\geq 2},\\
        X^{(N_2[v_1])}Z^{(N[v_1])} \to Y^{(N_2[v_1])}\to (Z^{(N_2[v_1])})|_{\mbox{deg}\geq 2},\\
            ...,\\
            X^{N_l[v_1]}Z^{N_l[v_1]} \to Y^{N_l[v_1]} 
        }}  r^{\sigma(t) \to \sigma(t')},
    \label{eq:lb-general}
    \end{align}
\end{widetext}
    where $1\leq l\leq n$ is an index that counts how many times the breadth-first search needs to be invoked in order to traverse all the edges of the spanning tree $T$.
    For the ease of notation $T$ is meant to be a rooted, without loss of generality, at vertex $v_1$.
    By $N[v]$ we mean the proper neighborhood of the node $v$ (i.e. the set of all vertices that are connected by a single edge with $v$). 
    In rooted tree every vertex is reachable from the root vertex
    by a path. By $N_i[v_1]$ we will
    mean the set of vertices reachable from vertex $v_1$ by a path of length $i$. Owing to this notation $N[v_1]\equiv N_1[v_1]$, while
    all vertices achievable from $v_1$ by traversing two edges belong to
    $N_2[v_1]$ and so on.

    The first inner maximization needs to be understood inductively. The first step is obvious:
    we begin with an arbitrary vertex  $v_1\in V_T$. The party $X_{v_1}$ who is at node $v_1$ performs a POVM $Q_{1}$
    which produces a random variable $X^{(v_1)}$. She processes this variable further to obtain $Y^{(v_1)}$
    and sends a communication in the form of a variable $Z^{(v_1)}$. The latter variable is broadcasted to all the next neighbors of $v_1$ i.e. $N[v_1] \setminus \{v_1\}$. Further, if at step $m-1$ the form of operations 
    and communication between the nodes has concise notation $X^{S_m}Z^{S_{m-1}}\rightarrow Y^{S_m} \rightarrow Z^{S_m}|_{\mbox{deg}\geq 2}$,
    then the next level of nesting i.e. 
    \begin{equation}
    X^{N[S_m]}Z^{S_{m}}\rightarrow Y^{N[S_m]} \rightarrow \left(Z^{N[S_m]}\right)|_{\mbox{deg}\geq 2}
    \end{equation}
    has to be understood as a short notation of the following postprocessing at a number of nodes
    from the set $N[S_m] = \{s_1,...,s_{r}\}$ with $r=|N[S_m]|$: 
    \begin{eqnarray}
    \forall_{s_i \in N[S_m]\,:\, \mbox{deg}(s_i)\geq 2}: \,X^{(s_i)} Z^{N[s_i]\cap p(s_i)} \rightarrow  Y^{(s_i)}\rightarrow Z^{(s_i)} \nonumber \\
    \forall_{s_i \in N[S_m]\,:\, \mbox{deg}(s_i)=1}: \,X^{(s_i)} Z^{N[s_i]\cap p(s_i)} \rightarrow  Y^{(s_i)}, \nonumber
    \end{eqnarray}
    where $p(s_i)$ denotes the parent vertex of the vertex $s_i$, that is the unique vertex belonging to the neighbourhood which is the closest to the root $v_1$ in terms of traversed edges.

The above description means, that if some vertex of the tree is of degree equal to one, it has no further
children in the tree to pass useful information contained in $Z$-type variable, while all vertices with
larger degree than $1$ need to broadcast appropriate data to their further neighbors in the tree.

We exemplify the lower bound given in inequality~\eqref{eq:lb-general} with the Broadcast Network depicted on Fig.~\ref{fig:multigraph}. Let us first
focus on involved sets of vertices in the process of the breadth-first search over the tree $T$. The set of vertices of the spanning tree $T$ reads $\{v_1,...,v_6\}$. As the root vertex we choose $v_1$. Next $N_1[v_1] =\{v_2\}$, $N_2[v_1] =\{v_3,v_6\}$ and $N_3[v_1] =\{v_4,v_5\}$.
The presented lower bound reads in this case:
 \begin{align}
&\hat{P}_{\textnormal{cppp}}^{\mc{N}}\geq \nonumber
\\
&\max_{T\subseteq G}\min_{t\in V_T,t'\in N[t]} \lim_{n\to\infty}
\max_{\substack{\phi^n,\mc{Q}^{(1)},...,\mc{Q}^{(6)}\text{ POVM}\\X^{(v_1)}\to Y^{(v_1)}\to Z^{(v_1)},\\
        X^{(v_2)}Z^{(v_1)}\to Y^{(v_2)}\to Z^{(v_2)},\\
        X^{(v_3)}Z^{(v_2)}\to Y^{(v_3)}\to Z^{(v_3)},\\
            X^{(v_6)}Z^{(v_2)}\to Y^{(v_6)},\\
                X^{(v_5)}Z^{(v_3)}\to Y^{(v_5)},\\
                    X^{(v_4)}Z^{(v_3)}\to Y^{(v_4)}}}  r^{\sigma(t) \to \sigma(t')}
\label{eq:lb-example}
\end{align}

In Appendix~\ref{app:complexity}, we briefly comment on the complexity of finding a sub-graph, which allows us to realize the
Conference Key Agreement with the capacity indicated by the inequality~\eqref{eq:lb-general}.

\section{Key distillation from states} \label{sec:dist-state}

In this section, we concentrate on the subject of the distillation of secret keys from quantum states. An $(n,K,\varepsilon)$ LOCC conference key distillation begins with $M$ parties $\tf{A}_i$ for $i\in[M]$ sharing $n$ copies of $M$-partite quantum state $\rho_{\vv{A}}$, to which they apply an LOCC channel $\mathcal{L}_{\vv{A^{\otimes n}}\to \vv{SK}}$. The resulting output state satisfies the following condition:
\begin{equation}
    F(\mathcal{L}_{\vv{A^{\otimes n}}\to \vv{SK}}(\rho^{\otimes n}_{\vv{A}}), \gamma_{\vv{KS}})\geq 1-\varepsilon.
\end{equation}
The one-shot secret key distillation rate from a single copy of a multipartite quantum state $K_{\textnormal{D}}^{(1,\varepsilon)}$ is upper bounded as follows (cf.~Section~\ref{sec:cka}).
\begin{theorem}\label{cor:states}
    For any fixed $\varepsilon\in (0,1)$, the achievable region of secret key agreement from a single copy of an arbitrary multipartite quantum state $\rho_{\vv{A}}$ satisfies 
    \begin{equation}\label{eqn:advanatge1}
        K_{\textnormal{D}}^{(1,\varepsilon)}(\rho)\leq E^\varepsilon_{h,\GE}(:\vv{A}:)_{\rho},
    \end{equation} 
    where 
    \begin{equation}\label{eqn:advanatge2}
        E^{\varepsilon}_{h,\GE}(:\vv{A}:)_{\rho}\coloneqq \inf_{\sigma\in\BS(:\vv{A}:)} D^\varepsilon_h(\rho\Vert\sigma).
    \end{equation}
    is the $\varepsilon$-hypothesis testing relative entropy of genuine entanglement of multipartite state $\rho_{\vv{A}}$.   
\end{theorem}
\begin{proof}
    The proof argument is the same as that of Theorem \ref{thm:one-shot-cppp}, so we omit proof here. 
\end{proof}

In the asymptotic limit the rate $K_{\textnormal{D}}^{(n,\varepsilon)}$ satisfies 
\begin{align}
    \inf_{\varepsilon>0} \limsup_{n \to \infty}\frac 1n K_{\textnormal{D}}^{(n,\varepsilon)}(\rho^{\otimes n})  = K_{\textnormal{D}}(\rho),
\end{align}
which follows directly from the definition of the secret key rate $K_{\textnormal{D}}$ \cite{AH09}.


Using the same argument as in the proof of Theorem \ref{theo:telecov-weakconv-GE} in Section \ref{sec:tele-cov}, we can also get the following asymptotic bound, which generalizes Theorem 9 in \cite{HHHO09}:
\begin{proposition}\label{bisep:theorem9}
For an $m$-partite state $\rho_{\vv{A}}$ it holds that
\begin{equation}
K_D(\rho_{\vv{A}})\leq E^\infty_{GE}(\rho_{\vv{A}}).
\end{equation}
\end{proposition}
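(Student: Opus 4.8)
\noindent The plan is to follow the same route as the proof of Theorem~\ref{theo:telecov-weakconv-GE} and of Theorem~9 in~\cite{HHHO09}: combine the one-shot bound of Theorem~\ref{cor:states} with a regularization over copies of $\rho_{\vv{A}}$. First I would fix an arbitrary $(n,K,\varepsilon)$ LOCC conference key distillation protocol, i.e.\ the $M$ parties hold $\rho_{\vv{A}}^{\otimes n}$ and apply an LOCC channel $\mathcal{L}_{\vv{A^{\otimes n}}\to\vv{SK}}$ producing $\omega_{\vv{SK}}$ with $F(\omega_{\vv{SK}},\gamma_{\vv{SK}})\geq 1-\varepsilon$ for some $M$-partite private state $\gamma_{\vv{SK}}$. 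Proposition~\ref{thm:priv-test} then gives $\Tr[\Pi^{\gamma}_{\vv{SK}}\omega_{\vv{SK}}]\geq 1-\varepsilon$, while Theorem~\ref{thm:priv-gme} gives $\Tr[\Pi^{\gamma}_{\vv{SK}}\sigma_{\vv{SK}}]\leq 1/K$ for every $\sigma_{\vv{SK}}\in\BS(:\!\vv{SK}\!:)$. Hence the $\gamma$-privacy test is feasible in the definition of $D^\varepsilon_h$, and
\begin{equation}
\log_2 K\leq D^\varepsilon_h(\omega_{\vv{SK}}\Vert\sigma_{\vv{SK}})\qquad\text{for all }\sigma_{\vv{SK}}\in\BS(:\!\vv{SK}\!:).
\end{equation}

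Next I would transport this bound back to $\rho_{\vv{A}}^{\otimes n}$. Since $\BS$ is preserved under LOCC, $\mathcal{L}_{\vv{A^{\otimes n}}\to\vv{SK}}(\sigma_n)\in\BS(:\!\vv{SK}\!:)$ for every $\sigma_n\in\BS(:\!\vv{A^{\otimes n}}\!:)$; choosing $\sigma_{\vv{SK}}=\mathcal{L}(\sigma_n)$ and using data processing of the generalized divergence $D^\varepsilon_h$ under $\mathcal{L}$ yields $\log_2 K\leq D^\varepsilon_h(\rho_{\vv{A}}^{\otimes n}\Vert\sigma_n)$ for every such $\sigma_n$, that is $\log_2 K\leq E^\varepsilon_{h,\GE}(:\!\vv{A^{\otimes n}}\!:)_{\rho^{\otimes n}}$ (this is exactly Theorem~\ref{cor:states} applied to $\rho_{\vv{A}}^{\otimes n}$). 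I would then pass from the hypothesis testing quantity to the relative entropy: picking $\sigma_n\in\BS(:\!\vv{A^{\otimes n}}\!:)$ that attains $E_{\GE}(:\!\vv{A^{\otimes n}}\!:)_{\rho^{\otimes n}}=\inf_{\sigma\in\BS}D(\rho^{\otimes n}\Vert\sigma)$ up to some $\delta>0$ and using $D^\varepsilon_h(\rho^{\otimes n}\Vert\sigma_n)\leq\frac{1}{1-\varepsilon}\bigl(D(\rho^{\otimes n}\Vert\sigma_n)+h_2(\varepsilon)\bigr)$ from~\cite{WR12} gives
\begin{equation}
\frac{1}{n}\log_2 K\leq\frac{1}{1-\varepsilon}\left(\frac{1}{n}E_{\GE}(:\!\vv{A^{\otimes n}}\!:)_{\rho^{\otimes n}}+\frac{\delta+h_2(\varepsilon)}{n}\right).
\end{equation}
Letting $\delta\to0$, then taking $n\to\infty$ along a sequence of protocols realising an achievable rate, and finally $\varepsilon\to0$, yields $K_D(\rho_{\vv{A}})\leq\limsup_{n\to\infty}\frac{1}{n}E_{\GE}(:\!\vv{A^{\otimes n}}\!:)_{\rho^{\otimes n}}=E^\infty_{\GE}(\rho_{\vv{A}})$.

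The main obstacle — and the reason the bound is regularized rather than the single-letter $E_{\GE}(:\!\vv{A}\!:)_{\rho}$ — is that $\BS$ is not tensor-stable: $E_{\GE}(:\!\vv{A^{\otimes n}}\!:)_{\rho^{\otimes n}}$ is in general neither equal to $n\,E_{\GE}(:\!\vv{A}\!:)_{\rho}$ nor even subadditive in $n$, so one cannot de-regularize as in the bipartite case of~\cite{HHHO09}. Apart from this, the only points needing care are the precise reading of $E^\infty_{\GE}$ (the $\limsup$ over $n$, which coincides with the limit whenever the sequence happens to be subadditive) and the verification that the privacy-test chain applies verbatim to the state $\rho_{\vv{A}}^{\otimes n}$ in place of a channel output; both are routine given Theorems~\ref{cor:states} and~\ref{thm:priv-gme}.
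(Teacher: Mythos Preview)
Your proof is correct, but it takes a somewhat different technical route from the paper. The paper works directly with the relative entropy: it applies data processing of $D$ under the LOCC protocol $\mc{L}_n$, then untwists by $U^{\text{tw}}$ and traces out the shield, uses asymptotic continuity of the relative entropy to replace $\Tr_{\vv{S}}[(U^{\text{tw}})^\dag\omega_{\vv{SK}}U^{\text{tw}}]$ by $\Phi_{\vv{K}}$, and finally invokes a separate Lemma~\ref{bisep:lemma7} (the biseparable generalization of \cite[Lemma~7]{HHHO09}) stating that $D(\Phi_{\vv{K}}\Vert\sigma_{\vv{K}})\geq\log K$ for all $\sigma_{\vv{SK}}$ in the set of twisted biseparable states. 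You instead recycle Theorem~\ref{cor:states} to get a bound in terms of $E^\varepsilon_{h,\GE}$ on $\rho^{\otimes n}$ and then pass to the relative entropy via the inequality $D^\varepsilon_h\leq\frac{1}{1-\varepsilon}(D+h_2(\varepsilon))$ from~\cite{WR12}. Your route is arguably cleaner here since it bypasses both asymptotic continuity and the auxiliary Lemma~\ref{bisep:lemma7}; the paper's route, on the other hand, stays entirely within the relative entropy and makes the parallel with \cite[Theorem~9]{HHHO09} more explicit. Both arguments hinge on the same two facts---Theorem~\ref{thm:priv-gme} and the LOCC-closedness of $\BS$---and both yield only the regularized bound because $\BS$ is not tensor-stable, as you correctly note.
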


In general, to share the conference key, it is necessary for the honest parties to distill genuine multipartite entanglement. 

\begin{corollary}
For a tensor-stable biseparable state $\rho_{\vv{A}}$ it holds $K_D(\rho_{\vv{A}})=0$.
\end{corollary}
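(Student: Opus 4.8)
The plan is to read this off directly from Proposition~\ref{bisep:theorem9}, which gives $K_D(\rho_{\vv{A}})\leq E^\infty_{GE}(\rho_{\vv{A}})$, so it suffices to show that the regularized relative entropy of genuine multipartite entanglement vanishes on tensor-stable biseparable states.

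First I would recall that, by the very definition of tensor-stable biseparability, $\rho_{\vv{A}}^{\otimes n}\in\BS(:\vv{A^n}:)$ for every $n\in\mathbb{N}$. Hence $\rho_{\vv{A}}^{\otimes n}$ is itself an admissible element of the infimum defining $E_{GE}(:\vv{A^n}:)_{\rho^{\otimes n}}=\inf_{\sigma\in\BS(:\vv{A^n}:)}D(\rho^{\otimes n}_{\vv{A}}\Vert\sigma)$, and since $D(\tau\Vert\tau)=0$ while $D$ is nonnegative on states, we obtain $E_{GE}(:\vv{A^n}:)_{\rho^{\otimes n}}=0$ for all $n$. Dividing by $n$ and taking the limit yields $E^\infty_{GE}(\rho_{\vv{A}})=\lim_{n\to\infty}\tfrac1n E_{GE}(:\vv{A^n}:)_{\rho^{\otimes n}}=0$. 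Combining this with Proposition~\ref{bisep:theorem9} and the nonnegativity of the distillable conference key gives $0\leq K_D(\rho_{\vv{A}})\leq E^\infty_{GE}(\rho_{\vv{A}})=0$, so $K_D(\rho_{\vv{A}})=0$.

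There is no real obstacle here; the only point worth emphasizing is why the hypothesis cannot be weakened to plain biseparability: for a generic biseparable $\rho$ the tensor powers $\rho^{\otimes n}$ need not remain in $\BS$, so $E_{GE}(:\vv{A^n}:)_{\rho^{\otimes n}}$ may be strictly positive and the regularization need not vanish — this is exactly the failure of tensor-stability of $\BS$ noted in Section~\ref{sec:ME}. One could alternatively try to argue at the one-shot level via Theorem~\ref{cor:states}, but that only bounds $K_D^{(1,\varepsilon)}(\rho)$ by $E^\varepsilon_{h,\GE}(:\vv{A}:)_\rho$ for a single copy, and to pass to the asymptotic rate one would again need tensor-stability to control $E^\varepsilon_{h,\GE}(:\vv{A^n}:)_{\rho^{\otimes n}}$; the route through Proposition~\ref{bisep:theorem9} is therefore the cleanest.
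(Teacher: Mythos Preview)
Your proof is correct and follows essentially the same idea the paper relies on: the infimum defining the entanglement measure runs over biseparable states, and tensor-stable biseparability places $\rho^{\otimes n}$ itself inside that set, forcing the bound to vanish. The paper phrases this as a corollary of Theorem~\ref{cor:states} rather than of Proposition~\ref{bisep:theorem9}, but the underlying mechanism is identical, and you already note that the one-shot route would work as well; your choice of Proposition~\ref{bisep:theorem9} is indeed the cleanest.
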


The above Corollary of Theorem~\ref{cor:states}, is precisely due to the infimum over biseparable states. However already in tripartite setting there are two non equivalent families of three-partite genuinely entangled states, that is $\Phi_M^\mathrm{GHZ}$ type and $\Phi_M^\mathrm{W}$ type states \cite{nonEquivWandGHZ,bennett2000exact,HHHH09,amico2008entanglement,HSD15,spee2017entangled}. Both families of states contain states which are maximally entangled; however, they can not be transformed with LOCC one into another at unit rate \cite{FL-2007,FL-2008,Hoi-Kwong2010,vrana2015asymptotic,vrana2019distillation,streltsov2020rates}. As the perfect $\Phi_M^\mathrm{GHZ}$ state plays a role of the honest (or perfect) implementation of conference quantum key agreement protocols, the distillation of $\Phi_3^\mathrm{GHZ}$ states from $\Phi_3^\mathrm{W}$ states has been intensively studied \cite{SVW,FL-2007,KT-2010,CCL-2011,vrana2015asymptotic,vrana2019distillation}. In particular, recalling Example 11 of Ref. \cite{SVW}, it is known that one can not transform a single $\Phi_3^\mathrm{W}$ state into $\Phi_3^\mathrm{GHZ}$ state even in a probabilistic manner. However, according to \cite[Theorem~2]{SVW}, the calculated asymptotic rate for conversion form $\Phi_3^\mathrm{W}$ to $\Phi_3^\mathrm{GHZ}$ due to certain protocol is approximately $0.643$ (per copy), what constitutes a lower bound for the general case. Another, complementary lower bound has been provided in Ref.~\cite{vrana2019distillation}.

Surprisingly, in the one shot regime, distillation of $\Phi_3^\mathrm{GHZ}$ states from $\Phi_3^\mathrm{W}$ states, and
therefore of secret key is still possible. To accomplish this task, it
is sufficient to consider the initial state as being made up of two copies of the $\Phi_3^\mathrm{W}$ state. Then using results in Ref.~\cite{FL-2007}, it follows that we can obtain two $\Phi_2^+$ states in two distinct bipartite systems with a probability that is arbitrarily close to $\frac 23$; having this in mind, one can obtain $\Phi_3^\mathrm{GHZ}$ by employing ancilla and the entanglement swapping protocol~\cite{PhysRevLett.71.4287}. In this way, we calculated a lower bound on the distillation of $\Phi_3^\mathrm{GHZ}$ states from tow copies of $\Phi_3^\mathrm{W}$ state in one-shot regime (one $\Phi_3^\mathrm{GHZ}$ state with probability $\frac 23$ from two $\Phi_3^\mathrm{W}$ states). This lower bound can be compared with the upper bound in Theorem~\ref{cor:states} given above. 

Nevertheless distillation of $\Phi_M^\mathrm{GHZ}$ states is only an example of key distillation technique \cite{Cabello2000,Scarani2001,Chen2005,AH09,Augusiak2009W,Grasselli2019,vrana2015asymptotic,vrana2019distillation}. A more general conference key agreement scenario of our interest incorporates distillation of twisted $\Phi_M^\mathrm{GHZ}$ states (see Definition~\ref{def:multi-priv-state}) \cite{Pankowski2008,HHHH09,AH09,Pankowski2010,bauml2017fundamental}. In that case, an approach for upper bounding conference key rates that is different than the estimation of $\Phi_M^\mathrm{W}$ to $\Phi_M^\mathrm{GHZ}$ conversion rates is required. This corresponds to a possible gap between rates of $\Phi_M^\mathrm{GHZ}$ (that can be distilled) and secret key distillation. Since the $\Phi_M^\mathrm{GHZ}$ state is an instance of a private state, an upper bound on the conference key rate is also an upper bound on the distillation rate from any state. For plotting our numerical results, we do concentrate on secret key distillation from $n$ copies of $\Phi_M^\mathrm{W}$ state in order to compare with other limitations discussed in this section.   

The upper bound in Theorem 8 has optimization over all possible biseparable states. Computation of the exact value of the bound given in Eq. (\ref{eqn:advanatge1}) need not be feasible in general. As we take the infimum in Eq. (\ref{eqn:advanatge2}), we can obtain non-trivial upper bounds on the upper bound given in Eq. (\ref{eqn:advanatge1}) by considering optimization over suitable subsets of biseparable states. We make an educated guess for the form of biseparable state to yield a non-trivial upper bound. We remark here that the set of biseparable states is not closed under tensor product so that we have to find different states for any tensor power $n$ of $\Phi_M^\mathrm{W}$ or $\Phi_M^\mathrm{GHZ}$ states. We devise two families of biseparable states $\pi_\textnormal{W}^{n,M}$, $\pi_\textnormal{GHZ}^{n,M}$ adjusted to both number of copies $n$ and number of parties $M$.
\begin{align}
    &\pi_\textnormal{GHZ}^{n,M}:= \frac{1}{M}\sum_{i=1}^M \left ( \mathcal{S}_{1,i} \left( \frac{I}{2}\otimes \Phi_{M-1}^\mathrm{GHZ}\right)\right)^{\otimes n}, \label{eqn:bisepGHZ}\\
    &\pi_\textnormal{W}^{n,M}:= \frac{1}{M}\sum_{i=1}^M \left ( \mathcal{S}_{1,i} \left( \op{0}\otimes \Phi_{M-1}^\mathrm{W}\right)\right)^{\otimes n},\label{eqn:bisepW}
\end{align}
where the operator $\mathcal{S}_{1,i}$ swaps the qubit of the first party with qubit of the $i$th party. The choice of $\pi_\textnormal{GHZ}^{n,M}$ and $\pi_\textnormal{W}^{n,M}$ states is motivated by keeping correlation between $M-1$ parties most similar to these in $\Phi_M^\mathrm{GHZ}$ or $\Phi_M^\mathrm{W}$ states, while keeping one party explicitly separated. Additionally $\pi_\textnormal{GHZ}^{n,M}$ and $\pi_\textnormal{W}^{n,M}$ states by definition are symmetric with respect to permutation of parties, due to permutations with $\mathcal{S}_{1,i}$. 

We would like to point out here that $\pi_\textnormal{W}^{1,3}$ presented here is closer to $\Phi_{3}^\mathrm{W}$ state in the Hilbert-Schmidt norm than the state (let us call it $\Upsilon$) in Ref.~\cite{VDM-2002}, even though the state constructed there was supposed to biseparable state closest to $\Phi_{3}^\mathrm{W}$ in the Hilbert-Schmidt norm. This result is due to different definitions of biseparability; the state in Ref.~\cite{VDM-2002} is a tensor product with respect to one of the cuts, whereas we make use of convexity of the set of biseparable states. Indeed our states are biseparable by construction (see Sec. \ref{sec:ME}).

The upper bound on asymptotic secret key rate can be compared with lower bound on asymptotic $\Phi_{3}^\mathrm{GHZ}$ states from $\Phi_{3}^\mathrm{W}$ sates distillation \cite{SVW}. This can be done in the following way. First, we notice that if two parties unite, then $M-1$-partite key is no less than initial $M$-partite key because the set of operations of $M$-partite LOCC protocol is a strict subset the set of operations for the case in which two parties $i$ and $j$, are in the same laboratory. We have the following Proposition:
\begin{proposition}\label{prop:statesmerging} For any $M$-partite state $\rho_{[M]}$, the asymptotic secret-key-agreement rate satisfies the following inequality:
\begin{align}
    \max_{k}K_D(\rho_{[M+1]_{k}})\leq K_D(\rho_{[M]}) \le \min_{i,j} K_D(\rho_{[M-1]_{ij}}), 
\end{align}
where $[M]=[1,..,M]$ and $[M-1]_{ij}=[1,..,i-1,(i,j),i+1,...,j-1,j+1,...,M]$ indicate a state $\rho_{[M-1]}$ in which subsystems $i$ and $j$ are merged. Analogously $[M+1]_{k}=
[1,...,k-1,k_1,k_2,k+1,...,M+1]$
indicates the state in which
subsystem $k$ was split into systems $k_1$ and $k_2$.
\end{proposition}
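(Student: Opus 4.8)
The plan is to reduce everything to a single statement: merging two of the parties into one laboratory cannot decrease the distillable conference key, i.e. $K_D(\rho_{[M]})\le K_D(\rho_{[M-1]_{ij}})$ for every pair $i,j$. The lower inequality then comes for free by relabeling: $\rho_{[M]}$ is precisely what one obtains by merging the two systems $k_1,k_2$ into which $k$ was split in $\rho_{[M+1]_k}$, so applying the ``merging'' inequality (with ``$M$'' taken to be $M+1$ and the pair $(k_1,k_2)$) gives $K_D(\rho_{[M+1]_k})\le K_D(\rho_{[M]})$, and maximizing over $k$ finishes that side.

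To prove the merging inequality I would take an arbitrary $(n,K,\varepsilon)$ LOCC conference key distillation protocol for the $M$ parties and argue it is also a valid $(n,K,\varepsilon)$ protocol for the $M-1$ parties obtained by placing $\tf{A}_i$ and $\tf{A}_j$ in one lab. Two observations handle the input side at once: first, $\rho_{[M]}^{\otimes n}$ and $\rho_{[M-1]_{ij}}^{\otimes n}$ are literally the same density operator, only regrouped; second, the class $\LOCC$ among the $M-1$ regrouped parties contains $\LOCC$ among the original $M$ parties, since the merged party can in particular run the two local maps and relay the two classical messages of $\tf{A}_i$ and $\tf{A}_j$ separately. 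Hence the original decoding channel $\mc{L}_{\vv{A^{\otimes n}}\to\vv{SK}}$ is legitimate in the $(M-1)$-party setting and produces the same $\omega_{\vv{SK}}$ with $F(\omega_{\vv{SK}},\gamma_{\vv{SK}})\ge 1-\varepsilon$ for some $M$-partite private state $\gamma_{\vv{SK}}=U^{\text{tw}}_{\vv{SK}}(\Phi^{\GHZ}_{\vv{K}}\otimes\omega_{\vv{S}})(U^{\text{tw}}_{\vv{SK}})^{\dag}$.

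The last step, which I expect to be the only one needing care, is to see that after one further local operation of the merged party the output is an $\varepsilon$-approximate $(M-1)$-partite private state \emph{of the same key dimension} $K$, so that the composite procedure really is an $(n,K,\varepsilon)$ conference key distillation protocol for $\rho_{[M-1]_{ij}}$ (the extra operation being absorbed into the final round of the decoding LOCC). On the merged key registers $K_iK_j$, $\Phi^{\GHZ}_{\vv{K}}$ is supported on $\operatorname{span}\{\ket{k}_{K_i}\ket{k}_{K_j}\}_k$; let $V$ be a unitary of the merged party with $\ket{k}_{K_i}\ket{k}_{K_j}\mapsto\ket{k}_{K}\ket{0}_{K'}$, where $K$ is a fresh $K$-dimensional key register and $K'$ is an ancilla the merged party appends to its shield. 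Conjugating by $V$ sends $\Phi^{\GHZ}_{\vv{K}}$ to the $(M-1)$-partite GHZ state of dimension $K$, turns $U^{\text{tw}}_{\vv{SK}}$ (classically controlled on the key registers) into a valid twisting unitary controlled on the $M-1$ key registers, and leaves $\omega_{\vv{S}}$ (together with $K'$) as a shield state; thus $V\gamma_{\vv{SK}}V^{\dag}$ is an $(M-1)$-partite private state of key dimension $K$, and by monotonicity of fidelity under the associated channel $\mc{V}$ one still has $F(\mc{V}(\omega_{\vv{SK}}),V\gamma_{\vv{SK}}V^{\dag})\ge 1-\varepsilon$. Therefore every rate achievable by the $M$ parties is achievable by the $M-1$ merged parties, and taking suprema gives $K_D(\rho_{[M]})\le K_D(\rho_{[M-1]_{ij}})$; minimizing over $i,j$ completes the proof. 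Throughout one uses the convention of the excerpt that the single parameter $\varepsilon$ bundles correctness and security, which is exactly what lets the fidelity bound transfer verbatim between the two scenarios.
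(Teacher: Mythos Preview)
Your proof is correct and follows the same approach as the paper's: the class of $M$-party LOCC protocols is contained in that of the $(M-1)$ merged parties, so the supremum defining $K_D$ can only increase under merging, and the splitting bound follows by relabeling. You additionally spell out the one step the paper leaves implicit---that an $\varepsilon$-approximate $M$-partite private state is taken by a local isometry of the merged party to an $\varepsilon$-approximate $(M-1)$-partite private state of the same key dimension---which is a welcome bit of care.
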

\begin{proof}
It is enough to notice that class of LOCC protocols involved in definition of $K_D(\rho_{[M]})$ is strictly contained in the class of the protocols involved in definition of $K_D(\rho_{[M-1]_{ij}})$. Indeed, the merged parties can still simulate any operation from the former class; however, together, they can perform many more operations including global quantum operations on all merged subsystems together. Since $K_D$ is defined as the supremum of the key rate over such protocols, the upper bound follows. For the
lower bound it is enough to 
notice that by splitting subsystem(s) of $\rho$ we restrict
the class of operations that
can be used to distill key.
\end{proof}
We immediately observe that Proposition \ref{prop:statesmerging} provides a whole family of nonequivalent upper bounds. To see this one can consider of a state that is not invariant under permutations. What is more, one can continue merging as long as there is still two or more subsystems left.
\begin{corollary}\label{cor:dist-state}
For any $M$-partite state  $\rho_{[M]}$ defined on the Hilbert space ${\mathcal H}$, the asymptotic secret-key-agreement rate satisfies the following inequality:
\begin{align}
    \max_L K_D(\rho_{[L]})\leq K_D(\rho_{[M]}) \leq \min_{N} K_D(\rho_{[N]}),
\end{align}
where the state $\rho_{[L]}$ is 
obtained from the state $\rho_{[M]}$ by splitting its subsystems so that $L\geq \log \dim({\mathcal H})$.
Analogously the state $\rho_{[N]}$ is obtained via any merging of subsystems of $\rho_{[M]}$, such that $\rho_{[N]}$ has at least two subsystems.
\end{corollary}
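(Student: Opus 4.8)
The plan is to obtain both inequalities by iterating Proposition~\ref{prop:statesmerging}, using the fact that merging and splitting of subsystems compose: a state $\rho_{[N]}$ obtained from $\rho_{[M]}$ by an arbitrary merging of subsystems is reached through a finite chain of elementary pairwise merges, and any $\rho_{[L]}$ obtained by splitting is reached through a finite chain of elementary splits of one tensor factor into two nontrivial ones. Since $K_D$ depends only on the underlying state together with the partition of it among the parties, it is enough to follow how $K_D$ evolves along such a chain.

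For the right-hand (merging) inequality, fix a merging of $\rho_{[M]}$ into $\rho_{[N]}$ with $N\geq 2$; this is a partition of $\{1,\dots,M\}$ into $N$ nonempty blocks, which is a coarsening of the finest partition corresponding to $[M]$. Build a chain $\rho_{[M]}=\rho^{(0)},\rho^{(1)},\dots,\rho^{(M-N)}=\rho_{[N]}$, where $\rho^{(t+1)}$ is obtained from $\rho^{(t)}$ by uniting two of its current blocks (always choosing two blocks contained in a common block of the target partition). Applying the right-hand part of Proposition~\ref{prop:statesmerging} at step $t$ to the two subsystems being merged gives $K_D(\rho^{(t)})\leq K_D(\rho^{(t+1)})$, and telescoping yields $K_D(\rho_{[M]})\leq K_D(\rho_{[N]})$. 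Minimizing over all admissible mergings gives $K_D(\rho_{[M]})\leq\min_N K_D(\rho_{[N]})$.

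For the left-hand (splitting) inequality one argues symmetrically. A splitting of $\rho_{[M]}$ into $\rho_{[L]}$ amounts to a refinement $\mathcal{H}=\bigotimes_{j=1}^{L}\mathcal{H}_j$ of the given tensor structure, reached from $\rho_{[M]}$ by splitting one current factor $\mathcal{H}_k=\mathcal{H}_{k_1}\otimes\mathcal{H}_{k_2}$ at a time (both factors nontrivial) and handing $\mathcal{H}_{k_2}$ to a new party. The left-hand part of Proposition~\ref{prop:statesmerging} shows that each such elementary split can only lower $K_D$, so telescoping along the chain gives $K_D(\rho_{[L]})\leq K_D(\rho_{[M]})$, and maximizing over all splittings gives the claim. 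Note that since every nontrivial factor of $\mathcal{H}$ has dimension at least $2$, there are only finitely many refinements up to relabeling of parties, so $\max_L$ is over a finite family and is attained.

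The only genuine work is the bookkeeping: checking that every coarsening (resp.\ refinement) of the party structure really is produced by a chain of elementary pairwise merges (resp.\ single-factor splits), and that at each step Proposition~\ref{prop:statesmerging} is invoked precisely for the pair of subsystems/factors being modified while the remaining parties play the passive role. No ingredient beyond Proposition~\ref{prop:statesmerging} is required; the corollary is exactly a consequence of iterating it.
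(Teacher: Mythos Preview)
Your proposal is correct and is essentially the approach the paper intends: the paper does not give a separate proof of the corollary at all, merely remarking (just before the statement) that ``one can continue merging as long as there is still two or more subsystems left,'' i.e., that Proposition~\ref{prop:statesmerging} iterates. Your write-up simply spells out this iteration in detail via chains of elementary pairwise merges/splits and telescoping, which is exactly the intended argument.
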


Hence in particular case of $\Phi_{3}^\mathrm{W}$ state we can also skip minimization with respect to $i$, $j$ since the state is symmetric. Using properties of entanglement measures \cite{Horodecki_2000,Ishizaka_2005,plenio2005introduction}, we have
\begin{align}
&K_D\left( \Phi_{3}^\mathrm{W}\right) \le 
K_D\left( \Phi_{2+1}^\mathrm{W}\right) \le E_r^{\infty}\left(\Phi_{2+1}^\mathrm{W}\right)\\
&=h_2\left(\frac{1}{3}\right)\approx 0.9183~ \mathrm{bit},
\end{align}
where $h_2(x)$ is the binary entropy function.

The asymptotic key rate and bounds on it are usually noninteger real numbers. In the one-shot regime, expressing these quantities in a similar manner, instead of integers obtained with floor or ceiling functions, is no less meaningful because the amount of secret key and the value of bounds are functions of privacy test parameter $\varepsilon$, which can vary, yielding, in general, different values of these quantities. Therefore dependence of the scenario on the privacy parameter $\varepsilon$ is interesting on its own. See Appendix~\ref{app:StatesPlots} and Ref. \footnote{See Supplemental Material at \url{https://journals.aps.org/prx/abstract/10.1103/PhysRevX.11.041016#supplemental} for codes to
get plots.}.

\begin{remark}
	It is natural that the analogies of Proposition~\ref{prop:statesmerging} and Corollary~\ref{cor:dist-state} hold for the multiplex quantum channel $\mc{N}$. The upper bound on $M$-partite multiplex quantum channel takes the form of $M-1$-partite multiplex channel, where the new party's type is determined according to the following rule: If the two parties are of the same type (say $B$), then the new type is the same as the same ($B$ in that case). If the types are different, then the new type becomes always $A$ because, e.g., when $B$ and $C$ are merged, they have the ability to both read and write.
\end{remark}

\section{Discussion}
\label{sec:dis}
We have provided universal limits on the rates at which one can distribute conference key over a quantum network described by a multiplex quantum channel. We have shown that multipartite private states are necessarily genuine multipartite entangled. As a consequence, it is not possible to distill multipartite private states from tensor-stable biseparable states. We have obtained an upper bound on the single-shot, classical preprocessing and postprocessing assisted secret-key-agreement capacity. The bound is in terms of the hypothesis testing divergence with respect to biseparable states of the output state of the multiplex channel, maximized over all fully separable input states. We have further provided strong-converse bounds on the LOCC-assisted private capacity of multiplex channels that are in terms of the max-relative entropy of entanglement as well as the regularized relative entropy of entanglement. In the case of tele-covariant multiplex channels, we have also obtained bounds in terms of the relative entropy of entanglement of the resource state. We have shown the versatility of our bounds by applying it to several communication scenarios, including measurement-device independent QKD and conference key agreement as well as quantum key repeaters. In addition to our upper bounds, we have also provided lower bounds on asymptotic conference key rates, that are asymptotically achievable in Devetak-Winter-like protocols. We have also derived an upper bound on the secret key that can be distilled from finite copies of multipartite states via LOCC and show some numerical examples. The task of distillation of $\Phi_3^\mathrm{GHZ}$ from $\Phi_3^\mathrm{W}$ was extensively studied in the literature \cite{FL-2007,FL-2008,Hoi-Kwong2010}. Here we initiate the study on the distillation of the key rather than $\Phi_3^\mathrm{GHZ}$ distillation from the $\Phi_3^\mathrm{W}$ state. This is the rate of the distillation of "twisted" $\Phi_3^\mathrm{GHZ}$ being private states - a class to which $\Phi_3^\mathrm{GHZ}$ belongs. It would be interesting to find if the distillation of the key from $\Phi_3^\mathrm{W}$ is just equivalent to the distillation of $\Phi_3^\mathrm{GHZ}$ (see recent result on this topic \cite{Grasselli2019}).

Distillation of secret key allows trusted parties to access private random bits. Our lower bound on an asymptotic LOCC-assisted secret-key-agreement capacity over a multiplex channel also provides an asymptotic achievable rate of private random bits for trusted parties over a multiplex channel with classical preprocessing and postprocessing. 

Our work also provides frameworks for the resource theories of multipartite entanglement for quantum multipartite channels (analogous to bipartite channels as discussed in Refs.~\cite{D18thesis,DBW17,BDWW19,GS19}). In this context, it is natural to extend the results of Ref.~\cite{PHM18} where the so-called layered QKD is considered, to the noisy case of multipartite private states. It would be interesting to systematically consider other frameworks in the resource theory of multipartite entanglement. An important future direction for application purposes is to identify new information processing tasks and determine bounds on the rate regions of classical and quantum communication protocols over a multiplex channel (e.g., see Refs.~\cite{Sha61,GKbook,BHTW10,WDW16,D18thesis,LALS19,TR19,DW19}).

\medskip

\begin{acknowledgments}

S. D. is grateful to Jonathan P. Dowling (3 April 1955 – 5 June 2020) for insightful discussions.

The authors thank Koji Azuma, Nicolas Cerf, Marcus Huber, Liang Jiang, Sumeet Khatri, Glaucia Murta, Mark M. Wilde, and Paweł Żyliński for valuable discussions. S. D. acknowledges individual fellowships atUniversit\'{e} libre de Bruxelles; this project received funding from the European Union’s Horizon 2020 research and innovation program under the Marie Skłodowska-Curie Grant Agreement No. 801505. S. B. acknowledges funding from the European Union’s Horizon 2020 research and innovation program, Grant Agreement No. 820466 (project CiViQ), the postdoctoral fellowships program Beatriu de Pinós, funded by the Secretary of Universities and Research (Government of Catalonia), and by the Horizon 2020 program of research and innovation of the European Union under the Marie Sklodowska-Curie Grant Agreement No. 801370 (2019 BP 00097), as well as from the Government of Spain (FIS2020-TRANQI and Severo Ochoa CEX2019-000910-S), Fundació Cellex, Fundació Mir-Puig, Generalitat de Catalunya (CERCA, AGAUR SGR 1381, and QuantumCAT). K. H. and M.W. acknowledge support from the grant Sonata Bis 5 (Grant No. 2015/18/E/ST2/00327) from the National Science Center. We acknowledge partial support by the Foundation for Polish Science (IRAP project, ICTQT, contract no. MAB/2018/5, co-financed by the EU within Smart Growth Operational Programme). The ’International Centre for Theory of Quantum Technologies’ project (contract no. MAB/2018/5) is carried out within the International Research Agendas Programme of the Foundation for Polish Science co-financed by the European Union from the funds of the Smart Growth Operational Programme, axis IV: Increasing the research potential (Measure 4.3).



\end{acknowledgments}

\begin{appendix}

\section{Generalized divergences and their properties}\label{App:Div}
Any generalized divergence $\tf{D}(\cdot\Vert\cdot)$ satisfies the following two properties for an isometry $U$ and a state~$\tau$ \cite{WWY14}:
\begin{align}
\mathbf{D}(\rho\Vert \sigma) & = \mathbf{D}(U\rho U^\dag\Vert U \sigma U^\dag),\label{eq:gen-div-unitary}\\
\mathbf{D}(\rho\Vert \sigma) & = \mathbf{D}(\rho \otimes \tau \Vert \sigma \otimes \tau).\label{eq:gen-div-prod}
\end{align}

The sandwiched R\'enyi relative entropy obeys the following ``monotonicity in $\alpha$'' inequality \cite{MDSFT13}:
\begin{equation}\label{eq:mono_sre}
\wt{D}_\alpha(\rho\Vert\sigma)\leq \wt{D}_\beta(\rho\Vert\sigma) \text{ if }  \alpha\leq \beta, \text{ for } \alpha,\beta\in(0,1)\cup(1,\infty).
\end{equation}
The following inequality states that the sandwiched R\'enyi relative entropy $\wt{D}_\alpha(\rho\Vert\sigma)$ between states $\rho,\sigma$ is a particular generalized divergence for certain values of $\alpha$ \cite{FL13,Bei13}. For a quantum channel $\mc{N}$,
\begin{equation}
\wt{D}_\alpha(\rho\Vert\sigma)\geq \wt{D}_\alpha(\mc{N}(\rho)\Vert\mc{N}(\sigma)), \ \forall \alpha\in \[1/2,1\)\cup (1,\infty).
\end{equation}

In the limit $\alpha\to 1$, the sandwiched R\'enyi relative entropy $\wt{D}_\alpha(\rho\Vert\sigma)$ between quantum states $\rho,\sigma$ converges to the quantum relative entropy \cite{MDSFT13,WWY14}:
\begin{equation}\label{eq:mono_renyi}
\lim_{\alpha\to 1}\wt{D}_\alpha(\rho\Vert\sigma)=D(\rho\Vert\sigma),
\end{equation}
and quantum relative entropy~\cite{Ume62} between states is
\begin{equation}
D(\rho\Vert\sigma)\coloneqq \Tr[\rho\log_2(\rho-\sigma)]
\end{equation}
for $\supp(\rho)\subseteq \supp(\sigma)$ and otherwise it is $\infty$. 

In the limit $\alpha\to 1/2$, the sandwiched R\'enyi relative entropy $\wt{D}_\alpha(\rho\Vert\sigma)$ converges to $-\log_2 F(\rho,\sigma)$, where $F(\rho,\sigma)$ is the fidelity between $\rho, \sigma$ and defined as
\begin{equation}
F(\rho,\sigma)\coloneqq \left[\Tr\left[\sqrt{\sqrt{\sigma}\rho\sqrt{\sigma}}\right]\right]^2.
\end{equation}

The following inequality relates $D^\varepsilon_h(\rho\Vert \sigma)$ to $\wt{D}_{\alpha}(\rho\Vert\sigma)$ for density operators $\rho,\sigma$, $\alpha\in(1,\infty)$ and $\varepsilon\in(0,1)$ \cite{HP91,N01,ON00},\cite[Lemma~5]{CMW14}:
\begin{equation}\label{eq:hyp-sand-rel}
D^\varepsilon_h(\rho\Vert \sigma)\leq  \wt{D}_{\alpha}(\rho\Vert\sigma)+\frac{\alpha}{\alpha-1}\log\left(\frac{1}{1-\varepsilon}\right).
\end{equation}
The following inequality also holds~\cite{WR12}:
\begin{equation}
D^\varepsilon_h(\rho\Vert\sigma)\leq \frac{1}{1-\varepsilon}\left(D(\rho\Vert\sigma)+h_2(\varepsilon)\right),
\end{equation}
where $h_2(\varepsilon)\coloneqq -\varepsilon\log_2\varepsilon-(1-\varepsilon)\log_2 (1-\varepsilon)$ is the binary entropy function. 

In a specific case $\varepsilon$-hypothesis testing relative entropy can be calculated exactly.
\begin{lemma}\label{lem:exactD} If $\rho$ is a pure state and it is one of the eigenvectors of $\sigma$ , i.e., there exists decomposition $\sigma = p_0 \rho + \sum_{i=1} p_i \gamma_i^\perp$, with $\sum_{i=0}p_i=1$, $0\le p_i \le 1$, $p_0 \neq 0$ and states $\gamma_i^\perp$ orthogonal to $\rho$ then for any $\epsilon \in [0,1]$:
\begin{align}
	D_h^\varepsilon \left(\rho \Vert \sigma  \right) = -\log_2 \Tr \left[\Omega \sigma\right],
\end{align}
with $\Omega = (1-\varepsilon)\rho$.
\end{lemma}

\section{Multiplex quantum channels}\label{app:mqc}
All network channels possible in a communication setting are special cases of multiplex quantum channels $\mc{N}_{\vv{A'}\vv{B}\to\vv{A}\vv{C}}$ (see Fig.~\ref{fig:mpc}), e.g.,
\begin{enumerate}
\item Point-to-point quantum channel: This is a quantum channel of the form $\mathcal{N}_{B_b\to C_c}$ with a single sender and
a single receiver. When a multiplex quantum channel has the form $\mathcal{N}_{B_b\to
C_c}$ then $\mathscr{A}=\emptyset$ and $|\mathscr{B}|=1=|\mathscr{C}|$. This is arguably
the simplest form of a communication (network) channel as it involves only two parties with
one party sending input to the channel and the other receiving the output from the channel.
\item Bidirectional quantum channel: This is a multiplex quantum channel of the form $\mc{N}_{A_1'A_2'\to A_1A_2}$ with two parties who are
both senders and receivers, i.e. $|\mathscr{A}|=2$ and $\mathscr{B}=\emptyset=\mathscr{C}$ (cf.~\cite{D18thesis,BHLS03}).


\item Quantum interference channel: This is a bipartite quantum channel of the form $\mc{N}_{B_1B_2\to C_1C_2}$ with two senders and two receivers (cf.~\cite{FHSSW11}). We may also call $\mc{N}_{\vv{B}\to \vv{C}}$ with an equal number of senders and receivers as quantum interference channel. \item Broadcast quantum channel: This is a multipartite quantum channel of the form $\mc{N}_{B_b\to \vv{C}}$ with a single sender and multiple receivers (cf.~\cite{D05,YHD2006}). We may also call $\mc{N}_{\vv{B}\to\vv{C}}$ as a broadcast channel if the number of senders is less than the number of receivers. 
\item Multiple access quantum channel: This is a multipartite quantum channel of the form $\mc{N}_{\vv{B}\to C_c}$ with multiple senders and a single receiver (cf.~\cite{YHD05MQAC}). We may also call $\mc{N}_{\vv{B}\to\vv{C}}$ as a multiple access channel if the number of senders is more than the number of receivers. 
\item Physical box: Any physical box with quantum or classical inputs and quantum or classical outputs.
\item Network quantum channels of types $\mc{N}_{\vv{A'}\to \vv{A}\vv{C}}$ and $\mc{N}_{\vv{A'}\vv{B}\to \vv{A}}$. 
\end{enumerate}

If inputs and outputs to a multiplex channel are classical systems and underlying processes are governed by classical physics, then the channel is called \textit{classical multiplex channel} (see \cite{GKbook} for examples of such network channels). If input and output to the channel are quantum and classical systems, respectively, then the channel is called quantum to classical channel. If input and output to the channel are classical and quantum systems, respectively, then the channel is called classical to a quantum channel.

\section{Privacy test}\label{app:priv-t}

Recall the definition the twisting operation
\begin{equation}
U^\text{tw}_{\vv{KS}}=\sum_{i_1,...,i_M=0}^{K-1}\ket{i_1...i_M}\bra{i_1...i_M}_{\vv{K}}\otimes U_{\vv{S}}^{(i_1...i_M)}
\end{equation}
and a privacy test as
\begin{align}
\Pi^{\gamma,K}_{\vv{KS}}&=U^\text{tw}_{\vv{KS}}\(\Phi^{\GHZ}_{\vv{K}}\otimes\bbm{1}_{\vv{S}}\)U^{\text{tw}\dag}_{\vv{KS}}\\
&=\frac{1}{K}\sum_{i,k=0}^{K-1}\(\ket{i}\bra{k}\)^{\otimes M}_{\vv{K}}\otimes U_{\vv{S}}^{(i^M)}U_{\vv{S}}^{(k^M)\dag},
\end{align}
where we have defined the notation $i^M:=\underbrace{i...i}_{M\text{ times}}$. We will now provide the proof of Theorem~\ref{thm:priv-gme}:
\begin{proof}[Proof of Theorem~\ref{thm:priv-gme}]
We begin by showing the bound for pure biseparable states $\ket{\varphi}_{\vv{KS}}$. For such a state there exists a bipartition of the parties, defined by nonempty index sets $I\subset\{1,...,M\}$ and $J=\{1,...,M\}\setminus I$, such that the state is product with respect to that bipartition.  Namely $\ket{\varphi}_{\vv{KS}}=\ket{\widetilde\varphi}_{S_{I}K_{I}}\otimes\ket{\overline\varphi}_{S_{J}K_{J}}$, where we have defined $\mc{H}_{S_{I}K_{I}}=\bigotimes_{i\in I}\mc{H}_{S_{i}K_{i}}$ and $\mc{H}_{S_{J}K_{J}}=\bigotimes_{j\in J}\mc{H}_{S_{j}K_{j}}$.
Let us also define $m:=|I|$ and $n:=|J|$ and note that $M=m+n$. We can expand
\begin{align}
&\ket{\widetilde\varphi}_{S_{I}K_{I}}=\sum_{i_1,...,i_{m}=0}^{K-1}\widetilde{\alpha}_{i_1...i_{m}}\ket{i_1...i_{m}}_{K_{I}}\otimes\ket{\widetilde{\phi}_{i_1...i_{m}}}_{S_{I}}\\
&\ket{\overline\varphi}_{S_{J}K_{J}}=\sum_{j_1,...,j_{n}=0}^{K-1}\overline{\alpha}_{j_1...j_{n}}\ket{j_1...j_{n}}_{K_J}\otimes\ket{\overline{\phi}_{j_1...j_{n}}}_{S_J}.
\end{align}
Here $\widetilde{\alpha}_{i_1...i_{m}}\in\bbm{C}$ such that $\sum_{i_1,...,i_{m}=0}^{K-1}|\widetilde{\alpha}_{i_1...i_{m}}|^2=1$ and $\overline{\alpha}_{j_1...j_{n}}\in\bbm{C}$ such that $\sum_{j_1,...,j_{n}=0}^{K-1}|\overline{\alpha}_{j_1...j_{n}}|^2=1$. Further it holds

\begin{widetext}
\begin{align}
\Tr\left[\Pi^{\gamma,K}_{\vv{KS}}\varphi_{\vv{KS}}\right]
&=\Tr\left[\(\frac{1}{K}\sum_{i,k=0}^{K-1}\(\ket{i}\bra{k}\)^{\otimes M}_{\vv{K}}\otimes U_{\vv{S}}^{(i^M)}U_{\vv{S}}^{(k^M)\dag}\)\widetilde\varphi_{K_IS_I}\otimes\overline\varphi_{K_JS_J}\right]\\
&=\frac{1}{K}\sum_{i,k=0}^{K-1}\widetilde{\alpha}_{i^{m}}\overline{\alpha}_{i^{n}}(\widetilde{\alpha}_{k^{m}})^*(\overline{\alpha}_{k^{n}})^*\Tr\left[U^{(i^M)\dag}\ket{\widetilde{\phi}_{i^m}}\bra{\widetilde{\phi}_{k^m}}_{S_I}\otimes\ket{\overline{\phi}_{i^n}}\bra{\overline{\phi}_{k^n}}_{S_J}U^{(k^M)}\right]\\
&=\frac{1}{K}\sum_{i,k=0}^{K-1}\widetilde{\alpha}_{i^m}\overline{\alpha}_{i^n}(\widetilde{\alpha}_{k^m})^*(\overline{\alpha}_{k^n})^*\bra{\zeta_k}\ket{\zeta_i},\label{eq:9}
\end{align}
\end{widetext}
where we have defined the state
\begin{equation}
\ket{\zeta_i}_{\vv{S}} \coloneqq U^{(i^M)\dag}\ket{\widetilde{\phi}_{i^m}}_{S_I}\otimes\ket{\overline{\phi}_{i^n}}_{S_J}.
\end{equation}
We note that eq. (\ref{eq:9}) is a probability, in particular it is real and non-negative. Hence it holds
\begin{align}
&\frac{1}{K}\sum_{i,k=0}^{K-1}\widetilde{\alpha}_{i^m}\overline{\alpha}_{i^n}(\widetilde{\alpha}_{k^m})^*(\overline{\alpha}_{k^n})^*\bra{\zeta_k}\ket{\zeta_i}\\
&=\left|\frac{1}{K}\sum_{i,k=0}^{K-1}\widetilde{\alpha}_{i^m}\overline{\alpha}_{i^n}(\widetilde{\alpha}_{k^m})^*(\overline{\alpha}_{k^n})^*\bra{\zeta_k}\ket{\zeta_i}\right|\\
&\leq\frac{1}{K}\sum_{i,k=0}^{K-1}\left|\widetilde{\alpha}_{i^m}\right|\left|\overline{\alpha}_{i^n}\right|\left|\widetilde{\alpha}_{k^m}\right|\left|\overline{\alpha}_{k^n}\right|\left|\bra{\zeta_k}\ket{\zeta_i}\right|
\end{align}
where in the first inequality we have used the subadditivity and multiplicity of the absolute value of complex numbers. We note that for all $i,k$ in the sum $\left|\bra{\zeta_k}\ket{\zeta_i}\right|\leq 1$. Let us define $p_i=\left|\widetilde{\alpha}_{i^m}\right|^2$ and note that $p_i\geq0$ and $\sum_{i=0}^{K-1}p_i\leq1$. Let us also define $q_i=\left|\overline{\alpha}_{i^n}\right|^2$ and note that $q_i\geq0$ and $\sum_{i=0}^{K-1}q_i\leq1$. Hence there exist respective probability distributions $\{\hat{p}_i\}$ and $\{\hat{q}_i\}$ over $\{0,...,K-1\}$ such that $p_i\leq\hat{p}_i$ and $q_i\leq\hat{q}_i$ for all $i=0,...,K-1$. We then obtain
\begin{align}
&\frac{1}{K}\sum_{i,k=0}^{K-1}\left|\widetilde{\alpha}_{i^m}\right|\left|\overline{\alpha}_{i^n}\right|\left|\widetilde{\alpha}_{k^m}\right|\left|\overline{\alpha}_{k^n}\right|\left|\bra{\zeta_k}\ket{\zeta_i}\right|\\
&\leq\frac{1}{K}\sum_{i,k=0}^{K-1}\sqrt{p_iq_ip_kq_k} =\frac{1}{K}\left[\sum_{i=0}^{K-1}\sqrt{p_iq_i}\right]^2\\
&\leq\frac{1}{K}\left[\sum_{i=0}^{K-1}\sqrt{\hat{p}_i\hat{q}_i}\right]^2\leq\frac{1}{K},
\end{align}
where we have used that the classical fidelity between two probability distributions is upper bounded by $1$. This establishes the theorem for pure biseparable states with respect to arbitrary bipartitions. Noting that every mixed biseparable state $\sigma_{\vv{KS}}\in\BS(:\vv{KS}:)$ can be expressed as a convex sum of pure biseparable states finishes the proof.
\end{proof}


\section{Upper bounds on the CKA rates of multiplex channels}
\subsection{Proof of Theorem~\ref{thm:one-shot-cppp}}\label{app:one-shot-cppp}
\begin{proof}
Let us consider any cppp-assisted protocol that achieves a rate $\hat{P}_{\textnormal{cppp}}^{\mc{N}}\equiv \hat{P}$. Let $\rho^{(1)}\in\FS(:\!\vv{LA'}\!:\!\vv{RB}\!:\!\vv{P}\!:)$ be a fully separable state generated by the first use of LOCC among all spatially separated allies. Let 
\begin{equation}
\tau^{(1)}_{\vv{LA}\vv{R}\vv{PC}}\coloneqq \mc{N}(\rho^{(1)}_{\vv{LA}\vv{RB}\vv{P}}).
\end{equation}  
We note that $\tau^{(1)}$ is a separable state with respect to bipartition $\vv{LA}\vv{R}\vv{C}\!:\!\vv{P}$. The action of the decoder channel $\mc{D}\coloneqq \mc{L}^{(2)}_{\vv{LA}\vv{R}\vv{PC}\to \vv{SK}}$ on $\tau^{(1)}$ yields the state
\begin{equation}
\omega_{\vv{SK}}\coloneqq \mc{L}^{(2)}(\tau^{(1)}_{\vv{LA}\vv{R}\vv{PC}}).
\end{equation}
By assumption we have that
\begin{equation}
F(\gamma_{\vv{SK}}, \omega_{\vv{SK}})\geq 1-\varepsilon,
\end{equation}
for some ($M$-partite) private state $\gamma$. This implies that there exists a projector $\Pi^\gamma_{\vv{SK}}$ corresponding to a $\gamma$-privacy test such that (see Proposition~\ref{thm:priv-test})
\begin{equation}
\Tr[\Pi^\gamma_{\vv{SK}}\omega_{\vv{SK}}]\geq 1-\varepsilon. 
\end{equation}
From Theorem~\ref{thm:priv-gme},
\begin{equation}
\Tr[\Pi^\gamma_{\vv{SK}}\sigma'_{\vv{SK}}]\leq \frac{1}{K}=2^{-\hat{P}},
\end{equation}
for any $\sigma'\in\BS(:\!\vv{SK}\!:)$. 

Let us suppose a state $\sigma_{\vv{LA}\vv{R}\vv{PC}}\in\BS(:\vv{LA}:\vv{R}:\vv{PC}:)$ of the form $\sigma_{\vv{LA}\vv{R}\vv{PC}}=\sigma_{\vv{LA}\vv{R}\vv{C}}\otimes\sigma_{\vv{P}}$, where $\sigma{\vv{LA}\vv{R}\vv{C}}$ is arbitrary. It holds that $\sigma_{\vv{SK}}\coloneqq \mc{L}^{(2)}(\sigma_{\vv{LA}\vv{R}\vv{PC}})\in\BS(:\vv{SK}:)$. Thus, the privacy test is feasible for $D^\varepsilon_h(\omega\Vert \sigma)$ and we find that
\begin{align}
\hat{P} &\leq D^\varepsilon_h(\omega_{\vv{SK}}\Vert \sigma_{\vv{SK}})\\
&\leq  D^\varepsilon_h(\tau^{(1)}_{\vv{LA}\vv{R}\vv{PC}})\Vert \sigma_{\vv{LA}\vv{R}\vv{PC}} )\label{eq:40}\\
&\leq \sup_{\psi\in\FS(\vv{LA'}:\vv{RB}:\vv{P})}D^\varepsilon_h(\mc{N}(\psi_{\vv{LA'}:\vv{RB}:\vv{P}})\Vert \sigma_{\vv{LA}\vv{R}\vv{PC}})\\
&= \sup_{\psi\in\FS(\vv{LA'}:\vv{RB})}D^\varepsilon_h(\mc{N}(\psi_{\vv{LA'}:\vv{RB}})\Vert \sigma_{\vv{LA}\vv{R}\vv{C}}).\label{eq:one-shot-proof-part}
\end{align}

The second inequality follows from data processing inequality. The third inequality follows from the quasi-convexity of $D^\varepsilon_h$. The equality follows from Eq.~\eqref{eq:gen-div-prod} and suitable choice of $\sigma_{\vv{P}}$ that always exists because for any pure state $\psi\in\FS(\vv{LA'}:\vv{RB}:\vv{P})$, the output state $\mc{N}(\psi)$ is separable with respect to bipartition $\vv{LA}\vv{R}\vv{C}:\vv{P}$. 

Since inequality \eqref{eq:one-shot-proof-part} also holds for an arbitrary $\sigma\in\BS(:\!\vv{LA}\!:\!\vv{R}\!:\vv{C}\!:)$, we can conclude that
\begin{equation}
\hat{P}\leq E^\varepsilon_{h,\GE}(\mc{N}).
\end{equation} 
\end{proof}

\subsection{Proof of Theorem~\ref{thm:emax-key-converse}}\label{app:emax-key-converse}
\begin{proof}
The following inequality holds for an $(n,K,\varepsilon)$ LOCC-assisted secret key agreement protocol over a multiplex channel $\mc{N}$:
\begin{equation}\label{eq:fid-conv-proof}
F(\omega_{\vv{SK}},\gamma_{\vv{SK}})\geq 1-\varepsilon. 
\end{equation}
For any $\sigma_{\vv{SK}}\in\FS(:\vv{SK}:)$, we have following bound due to inequality \eqref{eq:fid-conv-proof} and Theorem~\ref{thm:priv-gme}:
\begin{equation}
\log_2 K\leq D^\varepsilon_h(\omega_{\vv{SK}}\Vert\sigma_{\vv{SK}}). 
\end{equation}
Employing inequality \eqref{eq:hyp-sand-rel} in the limit $\alpha\to+\infty$, we obtain
\begin{align}
\log_2 K &\leq D^\varepsilon_h(\omega_{\vv{SK}}\Vert\sigma_{\vv{SK}}) \\
&\leq D_{\max}(\omega_{\vv{SK}}\Vert\sigma_{\vv{SK}})+\log_2\left(\frac{1}{1-\varepsilon}\right).
\end{align}
Above inequality holds for arbitrary $\sigma\in\FS(:\vv{SK}:)$, therefore
\begin{equation}\label{eq:con-max-proof}
\log_2 K \leq E_{\max,\E}(:\vv{SK}:)_{\omega}+\log_2\left(\frac{1}{1-\varepsilon}\right),
\end{equation}
where $E_{\max,\E}(:\vv{SK}:)_{\omega}$ is the max-relative entropy of entanglement of the state $\omega_{\vv{SK}}$.

The max-relative entropy of entanglement $E_{\max,\E}$ of a state is monotonically non-increasing under the action of LOCC channels and it is zero for states that are fully separable. Using these facts, we get that
\begin{align}
& E_{\max,\E}(:\vv{SK}:)_{\omega}\nonumber \\
&\leq E_{\max,\E}(:\vv{L^{(n)}A^{(n)}}:\vv{R^{(n)}}:\vv{P^{(n)}C^{(n)}}:)_{\tau_n} \label{eq:52}\\
&=  E_{\max,\E}(:\vv{L^{(n)}A^{(n)}}:\vv{R^{(n)}}:\vv{P^{(n)}C^{(n)}}:)_{\tau_n} \nonumber\\
&\qquad - E_{\max,\E}(:\vv{L^{(1)}A^{(1)'}}:\vv{R^{(1)}B^{(1)}}:\vv{P^{(1)}}:)_{\rho_1}\\ 
&=E_{\max,\E}(:\vv{L^{(n)}A^{(n)}}:\vv{R^{(n)}}:\vv{P^{(n)}C^{(n)}}:)_{\tau_n} \nonumber\\ 
&\qquad +\left[\sum_{i=2}^n E_{\max,\E}(:\vv{L^{(i)}A^{(i)'}}:\vv{R^{(i)}B^{(i)}}:\vv{P^{(i)}}:)_{\rho_{i}} \right.\nonumber \\
&\qquad  -\left. \sum_{i=2}^n E_{\max,\E}(:\vv{L^{(i)}A^{(i)'}}:\vv{R^{(i)}B^{(i)}}:\vv{P^{(i)}}:)_{\rho_{i}} \right]\nonumber \\
&\qquad - E_{\max,\E}(:\vv{L^{(1)}A^{(1)'}}:\vv{R^{(1)}B^{(1)}}:\vv{P^{(1)}}:)_{\rho_{1}}\\
& \leq \sum_{i=1}^n \left[ E_{\max,\E}(:\vv{L^{(i)}A^{(i)}}:\vv{R^{(i)}}:\vv{P^{(i)}C^{(i)}}:)_{\tau_{i}} \right. \nonumber\\
&\qquad -\left.  E_{\max,\E}(:\vv{L^{(i)}A^{(i)'}}:\vv{R^{(i)}B^{(i)}}:\vv{P^{(i)}}:)_{\rho_{i}} \right]\\
&\leq n E_{\max,\E}(\mc{N}).\label{eq:amor-proof}
\end{align}
The first equality follows because $E_{\max,\E}(:\vv{L^{(1)}A^{(1)'}}:\vv{R^{(1)}B^{(1)}}:\vv{P^{(1)}}:)_{\rho_{1}}=0$. The second inequality follows because $E_{\max,\GE}$ is monotone under LOCC channels and $\rho_{i}=\mc{L}^{i}(\tau_{i-1})$ for all $i\in\{2,3,\ldots,n\}$. The final inequality follows from Lemma~\ref{lem:max-amor}.

From inequalities \eqref{eq:con-max-proof} and \eqref{eq:amor-proof}, we conclude that
\begin{equation}
\log_2K\leq nE_{\max,\E}(\mc{N})+\log_2\left(\frac{1}{1-\varepsilon}\right).
\end{equation}
\end{proof}

\subsection{Proof of Theorem~\ref{thm:ree-key-converse}}
\begin{proof}\label{app:ree-key-converse}
For an $(n,K,\varepsilon)$ LOCC-assisted secret key agreement protocol over a multiplex channel $\mc{N}$, such that $F(\omega_{\vv{SK}},\gamma_{\vv{SK}})\geq 1-\varepsilon$, due to inequality \eqref{eq:fid-conv-proof} and Theorem~\ref{thm:priv-gme} it holds for any $\sigma_{\vv{SK}}\in\FS(:\vv{SK}:)$:
\begin{equation}\label{eq:68}
\log_2 K\leq D^\varepsilon_h(\omega_{\vv{SK}}\Vert\sigma_{\vv{SK}}). 
\end{equation}
Using the fact that~\cite{WR12}
\begin{equation}
D^\varepsilon_h(\omega_{\vv{SK}}\Vert\sigma_{\vv{SK}}) \leq \frac{1}{1-\varepsilon}\(D(\omega_{\vv{SK}}\Vert\sigma_{\vv{SK}})+h(\varepsilon)\),
\end{equation}
where $h$ is the binary entropy function, and that the bound~\eqref{eq:68} holds for arbitrary $\sigma\in\FS(:\vv{SK}:)$, we obtain
\begin{equation}\label{eq:con-ree-proof-1}
\log_2 K \leq \frac{1}{1-\varepsilon}\(E_{\E}(:\vv{SK}:)_{\omega}+h(\varepsilon)\)\end{equation}
As the relative entropy of entanglement of a state is monotonically non-increasing under the action of LOCC channels and vanishes for states that are fully separable we can repeat the argument in inequalities~\eqref{eq:52}--\eqref{eq:amor-proof} and obtain
\begin{equation}\label{eq:amor-proof_ree}
E_{\E}(:\vv{SK}:)_{\omega}\leq n E^p_{\E}(\mc{N})\leq n E^\infty_{\E}(\mc{N}),
\end{equation}
where the second inequality follows from Lemma~\ref{lem:ree-amor}, Taking the limits $\varepsilon\to0$ and $n\to\infty$, we obtain \begin{equation}
\hat{\mc{P}}_{\LOCC}(\mc{N})\leq E^\infty_{\E}(\mc{N}),
\end{equation}
showing the converse. As for the strong converse, we follow the argument used in Ref.~\cite{WTB16}: From inequalities~\eqref{eq:68} and  \eqref{eq:hyp-sand-rel} we obtain
\begin{equation}\label{eq:con-ree-proof}
\log_2 K \leq \widetilde{E}_{\alpha,\E}(:\vv{SK}:)_{\omega}+\frac{\alpha}{\alpha-1}\log_2\left(\frac{1}{1-\varepsilon}\right),
\end{equation}
where $\alpha\in(1,\infty)$ and $\widetilde{E}_{\alpha,\E}(:\vv{SK}:)_{\omega}$ is the sandwiched R\'{e}nyi relative entropy of entanglement of the state $\omega_{\vv{SK}}$. Rewriting inequality~\eqref{eq:con-ree-proof} we obtain
\begin{equation}
\varepsilon\geq 1- 2^{-n\(\frac{\alpha-1}{\alpha}\)\(\frac{\log_2 K}{n}-\frac{1}{n}\widetilde{E}_{\alpha,\E}(:\vv{SK}:)_{\omega}\)}.
\end{equation}
Assuming that the rate $\frac{\log_2 K}{n}$ exceeds $E_{E}^\infty(\mc{N})$, by inequality~\eqref{eq:amor-proof_ree} it will be larger than $\frac{1}{n}E_{\E}(:\vv{SK}:)_{\omega}$. Hence, there exists an $\alpha>1$, such that $\frac{\log_2 K}{n}-\frac{1}{n}\widetilde{E}_{\alpha,\E}(:\vv{SK}:)_{\omega}>0$ and the error increases to $1$ exponentially.
\end{proof}
\subsection{Proof of Theorem~\ref{thm:bicov}}\label{app:tele-cov}
Let $\mc{N}_{\vv{A'}\vv{B}\to\vv{A}\vv{C}}$ be a multipartite 
quantum channel that is tele-covariant with respect to groups $\{\msc{G}_a\}_{a\in\msc{A}}$ and $\{\msc{G}_b\}_{b\in\msc{B}}$ as defined in Section~\ref{sec:tele-cov}.
By definition, for all $a\in\msc{A}$ and $b\in\msc{B}$, we have
\begin{align}
&\frac{1}{G_a}\sum_{g_a}\mathcal{U}_{A''_a}^{g_a}(\Phi^+_{A''_aL_a})=\frac{\bbm{1}_{A''_a}}{|A''_a|}\otimes\frac{\bbm{1}_{L_a}}{|L_a|},\label{eq:max-ent-cov-action_A}\\
&\frac{1}{G_b}\sum_{g_b}\mathcal{U}_{B'_b}^{g_b}(\Phi^+_{B'_bR_b})=\frac{\bbm{1}_{B'_b}}{|B'_b|}\otimes\frac{\bbm{1}_{R_b}}{|R_b|},\label{eq:max-ent-cov-action_B}
\end{align}
respectively, where $A''_a\simeq L_a$, $B'_b\simeq R_b$ and $\Phi^+$ denotes an EPR state. Note that in order for each $\{U_{A''_a}^{g_a}\}$ and $\{U_{B'_b}^{g_b}\}$ to be one-designs, it is necessary that $\left\vert A''_a\right\vert ^{2}\leq G_a $ and $\left\vert B'_b\right\vert ^{2}\leq G_b $ \cite{AMTW00}.

For every $a\in\msc{A}$ and every $b\in\msc{B}$, we can now define $\{E_{A''_aL_a}^{g_a}\}_{g_a}$ and $\{E_{B_b'R_b}^{g_b}\}_{g_b}$,
with respective elements defined as
\begin{align}
&E_{A''_aL_a}^{g_a}:=\frac{\left\vert A'_a\right\vert
^{2}}{G_a }U_{A''_a}^{g_a}\Phi^+_{A''_aL_a}\left(  U_{A''_a}^{g_a}\right)  ^{\dag},\\
&E_{B'_bR_b}^{g_b}:=\frac{\left\vert B_b\right\vert
^{2}}{G_b }U_{B'_b}^{g_b}\Phi^+_{B'_bR_b}\left(  U_{B'_b}^{g_b}\right)  ^{\dag},
\end{align}
where $A'_a\simeq A''_a$ and $B_b\simeq B'_b$. It follows from the fact that $\left\vert A'_a\right\vert ^{2}
\leq G_a$ and $\left\vert B_b\right\vert ^{2}
\leq G_b$ as well as  \eqref{eq:max-ent-cov-action_A} and \eqref{eq:max-ent-cov-action_B} that
$\{E_{A''_aL_a}^{g_a}\}_{g_a}$ and $\{E_{B_b'R_b}^{g_b}\}_{g_b}$ are valid POVMs for all $a\in\msc{A}$ and $b\in\msc{B}$.

The simulation of the channel $\mc{N}_{\vv{A'}\vv{B}\to\vv{A}\vv{C}}$ via teleportation begins with a state $\rho_{\vv{A''}\vv{B'}}$ and a shared resource $\theta_{\vv{LA}\vv{R}\vv{C}}=\mc{N}_{\vv{A'}\vv{B}\to\vv{A}\vv{C}}\(\Phi^+_{\vv{L}\vv{R}|\vv{A'}\vv{B}}\)$. The desired outcome is for the receivers to receive
the state $\mathcal{N}(\rho_{\vv{A''}\vv{B'}})$ and for the protocol to work independently of the input state
$\rho_{\vv{A''}\vv{B'}}$. The first step is for senders $\tf{A}_a$ and $\tf{B}_b$ to locally
perform the measurement $\left\{\bigotimes_{a\in\msc{A}}E_{A''_aL_a}^{g_a}\otimes\bigotimes_{b\in\msc{B}}E_{B_b'R_b}^{g_b}\right\}_{\vv{g}}$ and then send the outcomes $\vv{g}$ to the
receivers. Based on the outcomes $\vv{g}$, the receivers $\tf{A}_a$ and $\tf{C}_c$ then perform
$W_{A_a}^{\vv{g}}$ and $W_{C_c}^{\vv{g}}$, respectively. The following analysis demonstrates that this
protocol works, by simplifying the form of the post-measurement state:

\begin{widetext}
\begin{align}
&  \left(\prod_{a\in\msc{A}}G_a\prod_{b\in\msc{B}}G_b\right) \Tr_{\vv{A''L}\vv{B'R}}\left[\left(\bigotimes_{a\in\msc{A}}E_{A''_aL_a}^{g_a}\otimes\bigotimes_{b\in\msc{B}}E_{B_b'R_b}^{g_b}\right)(\rho_{\vv{A"}\vv{B'}}\otimes\theta_{\vv{LA}\vv{R}\vv{C}})\right]\nonumber\\
&  =\left(\prod_{a\in\msc{A}}\left\vert A'_a\right\vert ^{2} \prod_{b\in\msc{B}}\left\vert B_b\right\vert ^{2}\right)\Tr_{\vv{A''L}\vv{B'R}}\left[\left(\bigotimes_{a\in\msc{A}}{U}_{A''_a}^{g_a}\Phi^+_{A''_aL_a}{{U}_{A''_a}^{g_a}}^\dag\otimes\bigotimes_{b\in\msc{B}}{U}_{B_b'}^{g_b}\Phi^+_{B_b'R_b}{{U}_{B_b'}^{g_b}}^\dag\right)(\rho_{\vv{A"}\vv{B'}}\otimes\theta_{\vv{LA}\vv{R}\vv{C}})\right]\\
&  =\left(\prod_{a\in\msc{A}}\left\vert A'_a\right\vert ^{2} \prod_{b\in\msc{B}}\left\vert B_b\right\vert ^{2}\right)\bra{\Phi^+}_{\vv{A''}\vv{B'}|\vv{L}\vv{R}}\left( \bigotimes_{a\in\msc{A}} U_{A''_a}^{g_a}\otimes \bigotimes_{b\in\msc{B}} U_{B_b'}^{g_b}\right) ^{\dag}\rho_{\vv{A"}\vv{B'}}\otimes\theta_{\vv{LA}\vv{R}\vv{C}}\left( \bigotimes_{a\in\msc{A}} U_{A''_a}^{g_a}\otimes \bigotimes_{b\in\msc{B}} U_{B_b'}^{g_b}\right)\ket{\Phi^+}_{\vv{A''}\vv{B'}|\vv{L}\vv{R}}\\
&  =\left(\prod_{a\in\msc{A}}\left\vert A'_a\right\vert ^{2} \prod_{b\in\msc{B}}\left\vert B_b\right\vert ^{2}\right)\bra{\Phi^+}_{\vv{A''}\vv{B'}|\vv{L}\vv{R}}\left( \bigotimes_{a\in\msc{A}} U_{A''_a}^{g_a}\otimes \bigotimes_{b\in\msc{B}} U_{B_b'}^{g_b}\right) ^{\dag}\rho_{\vv{A"}\vv{B'}}\left( \bigotimes_{a\in\msc{A}} U_{A''_a}^{g_a}\otimes \bigotimes_{b\in\msc{B}} U_{B_b'}^{g_b}\right)\otimes\theta_{\vv{LA}\vv{R}\vv{C}}\ket{\Phi^+}_{\vv{A''}\vv{B'}|\vv{L}\vv{R}}\\
&  =\left(\prod_{a\in\msc{A}}\left\vert A'_a\right\vert ^{2} \prod_{b\in\msc{B}}\left\vert B_b\right\vert ^{2}\right)\bra{\Phi^+}_{\vv{A''}\vv{B'}|\vv{L}\vv{R}}\left[\left( \bigotimes_{a\in\msc{A}} U_{L_a}^{g_a}\otimes \bigotimes_{b\in\msc{B}} U_{R_b}^{g_b}\right) ^{\dag}\rho_{\vv{L}\vv{R}}\left( \bigotimes_{a\in\msc{A}} U_{L_a}^{g_a}\otimes \bigotimes_{b\in\msc{B}} U_{R_b}^{g_b}\right)\right]^\ast\theta_{\vv{LA}\vv{R}\vv{C}}\ket{\Phi^+}_{\vv{A''}\vv{B'}|\vv{L}\vv{R}}
.\label{eq:cov-tp-simul-block-1}%
\end{align}
\end{widetext}
The first three equalities follow by substitution and some rewriting. The
fourth equality follows from the fact that%

\begin{equation}
\langle\Phi|_{A^{\prime}A}M_{A^{\prime}}=\langle\Phi|_{A^{\prime}A}M_{A}%
^{\ast}\label{eq:ricochet-prop}%
\end{equation}
for any operator $M$ and where $\ast$ denotes the complex conjugate, taken
with respect to the basis in which $\ket{\Phi}_{A^{\prime}A}$ is defined. 
Continuing, we have that

\begin{widetext}
\begin{align}
\text{Eq.~}\eqref{eq:cov-tp-simul-block-1} & =\left(\prod_{a\in\msc{A}}\left\vert A'_a\right\vert  \prod_{b\in\msc{B}}\left\vert B_b\right\vert \right)\Tr_{\vv{L}\vv{R}}\left[\left[\left( \bigotimes_{a\in\msc{A}} U_{L_a}^{g_a}\otimes \bigotimes_{b\in\msc{B}} U_{R_b}^{g_b}\right) ^{\dag}\rho_{\vv{L}\vv{R}}\left( \bigotimes_{a\in\msc{A}} U_{L_a}^{g_a}\otimes \bigotimes_{b\in\msc{B}} U_{R_b}^{g_b}\right)\right]^\ast\mc{N}_{\vv{A'}\vv{B}\to\vv{A}\vv{C}}\(\Phi^+_{\vv{L}\vv{R}|\vv{A'}\vv{B}}\)\right]\\
& =\left(\prod_{a\in\msc{A}}\left\vert A'_a\right\vert  \prod_{b\in\msc{B}}\left\vert B_b\right\vert\right)\Tr_{\vv{L}\vv{R}}\left[\mc{N}_{\vv{A'}\vv{B}\to\vv{A}\vv{C}}\left(\left( \bigotimes_{a\in\msc{A}} U_{{A'}_a}^{g_a}\otimes \bigotimes_{b\in\msc{B}} U_{{B}_b}^{g_b}\right) ^{\dag}\rho_{\vv{{A'}}\vv{{B}}}\left( \bigotimes_{a\in\msc{A}} U_{{A'}_a}^{g_a}\otimes \bigotimes_{b\in\msc{B}} U_{{B}_b}^{g_b}\right)\Phi^+_{\vv{L}\vv{R}|\vv{A'}\vv{B}}\right)\right]\\
&= \mc{N}_{\vv{A'}\vv{B}\to\vv{A}\vv{C}}\left(\left( \bigotimes_{a\in\msc{A}} U_{{A'}_a}^{g_a}\otimes \bigotimes_{b\in\msc{B}} U_{{B}_b}^{g_b}\right) ^{\dag}\rho_{\vv{{A'}}\vv{{B}}}\left( \bigotimes_{a\in\msc{A}} U_{{A'}_a}^{g_a}\otimes \bigotimes_{b\in\msc{B}} U_{{B}_b}^{g_b}\right)\right)\\
&=\(\bigotimes_{a\in\msc{A}}{W}_{A_a}^{\vv{g}}\otimes\bigotimes_{c\in\msc{C}}{W}_{C_c}^{\vv{g}}\)^\dag\mc{N}_{\vv{A'}\vv{B}\to\vv{A}\vv{C}}(\rho_{\vv{A'}\vv{B}})\(\bigotimes_{a\in\msc{A}}{W}_{A_a}^{\vv{g}}\otimes\bigotimes_{c\in\msc{C}}{W}_{C_c}^{\vv{g}}\).
\end{align}
\end{widetext}
The first equality follow because $\left\vert A\right\vert \langle
\Phi|_{A^{\prime}A}\left(  \bbm{1}_{A^{\prime}}\otimes M_{AB}\right)  |\Phi
\rangle_{A^{\prime}A}=\operatorname{Tr}_{A}\{M_{AB}\}$ for any operator
$M_{AB}$. The second equality follows by applying the conjugate transpose of
\eqref{eq:ricochet-prop}. The final equality follows from the covariance
property of the channel.

Thus, if the receivers finally perform the unitaries $\bigotimes_{a\in\msc{A}}{W}_{A_a}^{\vv{g}}\otimes\bigotimes_{c\in\msc{C}}{W}_{C_c}^{\vv{g}}$ upon receiving $\vv{g}$ via a classical channel from the
senders, then the output of the protocol is $\mc{N}_{\vv{A'}\vv{B}\to\vv{A}\vv{C}}(\rho_{\vv{A'}\vv{B}})$, so that this
protocol simulates the action of the multipartite channel $\mathcal{N}$ on the state $\rho$.
\qed

\subsection{Proof of Theorem \ref{theo:telecov-weakconv-GE}}
Before proving Theorem \ref{theo:telecov-weakconv-GE}, we need the following lemma, which generalizes Lemma 7 in \cite{HHHO09}:
\begin{lemma}\label{bisep:lemma7}
Let $\mc{T}=\{{U^{tw}}^\dagger\rho_{\vv{SK}}U^{tw}:\rho_{\vv{SK}}\in \BS(:\vv{SK}:)\}$ be the set of twisted biseparable states. Then for any $\sigma_{\vv{SK}}\in\mc{T}$ it holds
\begin{equation}
D(\Phi_{\vv{K}}||\sigma_{\vv{K}})\geq\log{K}.
\end{equation}
\end{lemma}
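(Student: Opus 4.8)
The plan is to derive Lemma~\ref{bisep:lemma7} directly from the privacy-test bound of Theorem~\ref{thm:priv-gme}, mirroring the bipartite argument behind Lemma~7 of Ref.~\cite{HHHO09}; the only new ingredient is that ``separable'' is replaced everywhere by ``biseparable''.

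First I would convert the statement about the twisted state into one about the key marginal. Fix $\sigma_{\vv{SK}}\in\mc{T}$, so that $\sigma_{\vv{SK}}=(U^{\text{tw}}_{\vv{SK}})^{\dagger}\rho_{\vv{SK}}U^{\text{tw}}_{\vv{SK}}$ for some $\rho_{\vv{SK}}\in\BS(:\vv{SK}:)$. Recalling that the privacy-test projector is $\Pi^{\gamma}_{\vv{SK}}=U^{\text{tw}}_{\vv{SK}}(\Phi^{\GHZ}_{\vv{K}}\otimes\bbm{1}_{\vv{S}})(U^{\text{tw}}_{\vv{SK}})^{\dagger}$, cyclicity of the trace together with $\Tr_{\vv{S}}\sigma_{\vv{SK}}=\sigma_{\vv{K}}$ gives
\begin{align}
\bra{\Phi^{\GHZ}}_{\vv{K}}\sigma_{\vv{K}}\ket{\Phi^{\GHZ}}_{\vv{K}}
&=\Tr\!\big[(\Phi^{\GHZ}_{\vv{K}}\otimes\bbm{1}_{\vv{S}})\,\sigma_{\vv{SK}}\big]\nonumber\\
&=\Tr\!\big[\Pi^{\gamma}_{\vv{SK}}\,\rho_{\vv{SK}}\big]\leq\frac1K ,
\end{align}
the last inequality being exactly Theorem~\ref{thm:priv-gme} applied to the biseparable state $\rho_{\vv{SK}}$.

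Next I would lower-bound the relative entropy using that $\Phi^{\GHZ}_{\vv{K}}$ is pure. If $\ket{\Phi^{\GHZ}}_{\vv{K}}\notin\supp(\sigma_{\vv{K}})$ then $D(\Phi^{\GHZ}_{\vv{K}}\Vert\sigma_{\vv{K}})=+\infty$ and there is nothing to prove; otherwise purity gives $D(\Phi^{\GHZ}_{\vv{K}}\Vert\sigma_{\vv{K}})=-\bra{\Phi^{\GHZ}}_{\vv{K}}\log_2\sigma_{\vv{K}}\ket{\Phi^{\GHZ}}_{\vv{K}}$. Diagonalising $\sigma_{\vv{K}}=\sum_i\lambda_i\op{e_i}$ and setting $p_i:=|\braket{\Phi^{\GHZ}}{e_i}|^{2}$, the $\{p_i\}$ form a probability distribution, so concavity of $\log_2$ yields
\begin{align}
\bra{\Phi^{\GHZ}}_{\vv{K}}\log_2\sigma_{\vv{K}}\ket{\Phi^{\GHZ}}_{\vv{K}}
&=\sum_i p_i\log_2\lambda_i\nonumber\\
&\leq\log_2\!\Big(\sum_i p_i\lambda_i\Big)\nonumber\\
&=\log_2\bra{\Phi^{\GHZ}}_{\vv{K}}\sigma_{\vv{K}}\ket{\Phi^{\GHZ}}_{\vv{K}} .
\end{align}
Combining the two displays gives $D(\Phi^{\GHZ}_{\vv{K}}\Vert\sigma_{\vv{K}})\geq-\log_2\frac1K=\log_2 K$, which is the claim.

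I do not expect a genuine obstacle here: once Theorem~\ref{thm:priv-gme} is in hand the argument is essentially two lines. The only points needing a little care are the trace bookkeeping in the first display (the twisting unitary acts on the full system $\vv{SK}$ whereas the projector is built from $\Phi^{\GHZ}_{\vv{K}}\otimes\bbm{1}_{\vv{S}}$, so one must keep track of the partial trace over $\vv{S}$) and the degenerate support case, both of which are immediate. The substantive input is entirely the fact, established earlier, that multipartite private states are genuinely multipartite entangled.
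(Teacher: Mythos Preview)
Your proposal is correct and follows essentially the same route as the paper: first use Theorem~\ref{thm:priv-gme} to obtain $\Tr[\Phi_{\vv{K}}\sigma_{\vv{K}}]=\Tr[\Pi^{\gamma}_{\vv{SK}}\rho_{\vv{SK}}]\leq 1/K$, then invoke purity of $\Phi_{\vv{K}}$ together with concavity of the logarithm to get $D(\Phi_{\vv{K}}\Vert\sigma_{\vv{K}})\geq -\log\Tr[\Phi_{\vv{K}}\sigma_{\vv{K}}]\geq\log K$. Your version is slightly more detailed (handling the support case explicitly and spelling out the Jensen step via diagonalisation), but the argument is the same.
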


\begin{proof}
Let $\sigma_{\vv{SK}}\in\mc{T}$, i.e. $\sigma_{\vv{SK}}={U^{tw}}^\dagger\rho_{\vv{SK}}U^{tw}$ for some twisting unitary $U^{tw}$ and bisparable $\rho_{\vv{SK}}$. $U^{tw}$ defines a privacy test $\Pi^\gamma_{\vv{SK}}=U^{tw}(\Phi_{\vv{K}}\otimes\bbm{1}_{\vv{S}}){U^{tw}}^\dagger$. By Theorem \ref{theo:telecov-weakconv-GE}  it then holds
\begin{equation}
\Tr[\Phi_{\vv{K}}\sigma_{\vv{K}}]=\Tr[\Pi^\gamma_{\vv{SK}}\rho_{\vv{SK}}]\leq\frac{1}{K}.
\end{equation}
By the concavity of the logarithm it then holds
\begin{align}
D(\Phi_{\vv{K}}||\sigma_{\vv{K}})&=-S(\Phi_{\vv{K}})-\tr[\Phi_{\vv{K}} \log{\sigma_{\vv{K}}}]\\
&\geq-\log\Tr[\Phi_{\vv{K}}\sigma_{\vv{K}}]\\
&\geq \log K,
\end{align}
finishing the proof.
\end{proof}

Now we can follow \cite{PLOB15} to prove Theorem \ref{theo:telecov-weakconv-GE}:

\begin{proof}[Proof of Theorem \ref{theo:telecov-weakconv-GE}]
Let $\epsilon>0$ and $n\in\mathbb{N}$. We begin by noting that in the case of teleportation-simulable multiplex channels LOCC-assistance does not enhance secret-key-agreement capacity, and the original protocol can be reduced to a cppp-assisted secret key agreement protocol \cite{PLOB15}. Namely, in every round $1\leq i \leq n$ it holds 
\begin{align}
\rho_i&=\mc{L}^i(\tau_i)=\mc{L}^i(\mc{N}_{\vv{{A'}^{(i-1)}}\vv{B^{(i-1)}}\to\vv{A^{(i-1)}}\vv{C^{(i-1)}}}(\rho_{i-1}))\\
&=\mc{L}^i(\mc{T}_{\vv{{A'}^{(i-1)}LA}\vv{B^{(i-1)}R}\vv{C}\to\vv{A^{(i-1)}}\vv{C^{(i-1)}}}(\theta_{\vv{LA}\vv{R}\vv{C}}\otimes\rho_{i-1})),
\end{align}
where $\mc{L}^i$ and $\mc{T}$ are LOCC. As the initial state $\rho_0$ is assumed to be fully separable, we can get that the final state $\omega_{\vv{SK}}=\rho_n$ of an adaptive LOCC CKA-protocol involving $n$ uses of teleportation-simulable multiplex channel $\mc{N}_{\vv{A'}\vv{B}\to\vv{A}\vv{C}}$, can be expressed as
\begin{equation}
\omega_{\vv{SK}}=\mc{L}_{\vv{L^nA^n}\vv{R^n}\vv{C^n}\to\vv{SK}}\(\theta^{\otimes n}_{\vv{LA}\vv{R}\vv{C}}\),
\end{equation}
where $\mc{L}$ is an LOCC operation with respect to the partition $:\vv{L^nA^n}:\vv{R^n}:\vv{C^n}:$. 
By assumption it holds $\|\omega_{\vv{SK}}-\gamma_{\vv{KS}}\|_1\leq\epsilon$ for some $m$-partite private state $\gamma_{\vv{KS}}=U^{tw}(\Phi_{\vv{K}}\otimes\tau_{\vv{S}}){U^{tw}}^\dagger$, where $m$ is the number of parties. Let $\tilde{\sigma}_{\vv{L^nA^n}\vv{R^n}\vv{C^n}}\in\BS(:\vv{L^nA^n}:\vv{R^n}:\vv{C^n}:)$. Following the proof of Theorem 9 in \cite{HHHO09}, we obtain 
\begin{align}
&D(\theta^{\otimes n}_{\vv{LA}\vv{R}\vv{C}}||\tilde{\sigma}) \nonumber\\
&\quad \geq D(\omega_{\vv{SK}}||\mc{L}(\tilde{\sigma}_{\vv{L^nA^n}\vv{R^n}\vv{C^n}}))\\
&~~~=D({U^{tw}}^\dagger\omega_{\vv{SK}}{U^{tw}}||{U^{tw}}^\dagger\mc{L}(\tilde{\sigma}_{\vv{L^nA^n}\vv{R^n}\vv{C^n}}){U^{tw}})\\
&\quad \geq\inf_{\sigma_{\vv{SK}}\in\mc{T}}D(\Tr_{\vv{S}}[{U^{tw}}^\dagger\omega_{\vv{SK}}{U^{tw}}]||\sigma_{\vv{K}})\\
&\quad \geq\inf_{\sigma_{\vv{SK}}\in\mc{T}}D(\Phi_{\vv{K}}||\sigma_{\vv{K}})-4m\epsilon\log{K}-h(\epsilon)\\
&\quad \geq(1-4m\epsilon)\log{K}-h(\epsilon),
\end{align}
where in the last two inequalities we have used the asymptotic continuity of the relative entropy and Lemma \ref{bisep:lemma7}, respectively. Letting $n\to\infty$ and $\epsilon\to0$ finishes the proof.

\end{proof}

\subsection{Proof of Theorem \ref{theo:E-alpha_bound}}
As in the proof of Theorem \ref{theo:telecov-weakconv-GE} we have 
\begin{equation}
\omega_{\vv{SK}}=\mc{L}_{\vv{L^nA^n}\vv{R^n}\vv{C^n}\to\vv{SK}}\(\theta^{\otimes n}_{\vv{LA}\vv{R}\vv{C}}\),
\end{equation}
where $\mc{L}$ is an LOCC operation with respect to the partition $:\vv{L^nA^n}:\vv{R^n}:\vv{C^n}:$. Now following the proof of Theorem \ref{thm:one-shot-cppp}, we have that
\begin{equation}
F(\gamma_{\vv{SK}}, \omega_{\vv{SK}})\geq 1-\varepsilon,
\end{equation}
for some private state $\gamma$, hence there exists a projector $\Pi^\gamma_{\vv{SK}}$ corresponding to a $\gamma$-privacy test such that (see Proposition~\ref{thm:priv-test})
\begin{equation}
\Tr[\Pi^\gamma_{\vv{SK}}\omega_{\vv{SK}}]\geq 1-\varepsilon. 
\end{equation}
On the other hand, from Theorem~\ref{thm:priv-gme}, we have
\begin{equation}
\Tr[\Pi^\gamma_{\vv{SK}}\sigma_{\vv{SK}}]\leq \frac{1}{K},
\end{equation}
for any $\sigma\in\FS(:\!\vv{SK}\!:)$. Let us suppose a state $\sigma'_{\vv{LA}\vv{R}\vv{C}}\in\FS(:\vv{LA}:\vv{R}:\vv{C}:)$ and let us define $\sigma_{\vv{SK}}=\mc{L}_{\vv{L^nA^n}\vv{R^n}\vv{C^n}\to\vv{SK}}\({\sigma'}^{\otimes n}_{\vv{LA}\vv{R}\vv{C}}\)$, which is in $\FS(:\!\vv{SK}\!:)$. Hence for all $\alpha>1$ it holds,

\begin{align}
& \log_2 K\nonumber\\ 
&\leq D^\varepsilon_h\(\omega_{\vv{SK}}\Vert \sigma_{\vv{SK}}\)\\
&\leq D^\varepsilon_h\(\theta^{\otimes n}_{\vv{LA}\vv{R}\vv{C}}\Vert{\sigma'}^{\otimes n}_{\vv{LA}\vv{R}\vv{C}}\)\\
&\leq \widetilde{D}_\alpha\(\theta^{\otimes n}_{\vv{LA}\vv{R}\vv{C}}\Vert{\sigma'}^{\otimes n}_{\vv{LA}\vv{R}\vv{C}}\)+\frac{\alpha}{\alpha-1}\log_2\(\frac{1}{1-\epsilon}\)\\
&= n\widetilde{D}_\alpha\(\theta_{\vv{LA}\vv{R}\vv{C}}\Vert{\sigma'}_{\vv{LA}\vv{R}\vv{C}}\)+\frac{\alpha}{\alpha-1}\log_2\(\frac{1}{1-\epsilon}\)
\end{align}
The first inequality holds for any $\sigma\in\FS(\vv{SK})$. The second inequality follows from data processing inequality. The third inequality follows from eq. (\ref{eq:hyp-sand-rel}). The equality is due to the additivity of $\widetilde{D}_\alpha$ \cite{MDSFT13}. As the above holds for any $\sigma'_{\vv{LA}\vv{R}\vv{C}}\in\FS(:\vv{LA}:\vv{R}:\vv{C}:)$, we obtain Theorem~\ref{theo:E-alpha_bound}.

\qed

\section{Repeater as multipartite channel}\label{App:Repeater}
In order to provide bounds for more repeater protocols that involve two-way communication between Alice and Charlie or between Bob and Charlie before Charlie's measurement, we will have to slightly generalize our results in section \ref{sec:cka}. Namely, in addition to trusted parties $\{\tf{X}_i\}_{i=1}^M= \{\tf{A}_a\}_a\cup\{\tf{B}_b\}_b\cup\{\tf{C}_c\}_c$, we can add a number of cooperative but untrusted parties $\{\widetilde{\tf{X}}_i\}_{i=1}^{\widetilde{M}}\coloneqq \{\widetilde{\tf{A}}_{\tilde a}\}_{\tilde a\in \widetilde{\msc{A}}}\cup\{\widetilde{\tf{B}}_{\tilde b\in \widetilde{\msc{B}}}\}_{\tilde b}\cup\{\widetilde{\tf{C}}_{\tilde c\in \widetilde{\msc{C}}}\}_{\tilde c}$. 
Let us denote the quantum systems hold by respective untrusted parties as $\widetilde{A'}_{\tilde a}, \widetilde{L}_{\tilde a},\widetilde{A}_{\tilde a}, \widetilde{B}_{\tilde b}, \widetilde{R}_{\tilde b},\widetilde{C}_{\tilde c},\widetilde{P}_{\tilde c}$ and redefine 
\begin{align*}
&\vv{A'}\coloneqq \{A'_a\}_{a\in\msc{A}}\cup\{\widetilde{A'}_{\tilde{a}}\}_{\tilde{a}\in\widetilde{\msc{A}}},
\vv{A}\coloneqq \{A_a\}_{a\in\msc{A}}\cup\{\widetilde{A}_{\tilde{a}}\}_{\tilde{a}\in\widetilde{\msc{A}}},\\
&\vv{L}\coloneqq \{L_a\}_{a\in\msc{A}}\cup\{\widetilde{L}_{\tilde{a}}\}_{\tilde{a}\in\widetilde{\msc{A}}},\\
&\vv{B}\coloneqq \{B_b\}_{b\in\msc{B}}\cup\{\widetilde{B}_{\tilde{b}}\}_{\tilde{b}\in\widetilde{\msc{B}}},
\vv{R}\coloneqq \{R_b\}_{b\in\msc{B}}\cup\{\widetilde{R}_{\tilde{b}}\}_{\tilde{b}\in\widetilde{\msc{B}}},\\
&\vv{C}\coloneqq \{C_c\}_{c\in\msc{C}}\cup\{\widetilde{C}_{\tilde{c}}\}_{\tilde{c}\in\widetilde{\msc{C}}},
\vv{P}\coloneqq \{P_c\}_{c\in\msc{B}}\cup\{\widetilde{P}_{\tilde{c}}\}_{\tilde{c}\in\widetilde{\msc{C}}},
\end{align*} 
while keeping the old definitions for $\vv{K}$ and $\vv{S}$. We then assume we have a multiplex channel $\mc{N}_{\vv{A'}\vv{B}\to\vv{A}\vv{C}}$ and LOCC operations $\mc{L}^i$
, for $i=1,..,n$, among trusted and untrusted parties. However, we assume that as part of the last round of LOCC, $\mc{L}^{n+1}$ all subsystems belonging to untrusted parties are traced out, resulting in a state $\omega_{\vv{SK}}$ among the trusted parties only. It is now easy to show that the proofs of Theorems \ref{thm:emax-key-converse} and \ref{thm:ree-key-converse}  also go through in this slightly generalized scenario. Namely, tracing out parties in a fully separable state results in a fully separable state on the remaining parties, and by the monotonicity of the generalized divergences, inequalities~\eqref{eq:52} and \eqref{eq:amor-proof_ree} also hold if we trace out the untrusted parties in order to obtain $\omega$. Note that the same does not hold true in the case of Theorem \ref{thm:one-shot-cppp}, where we are dealing with the distance to the set of biseparable states, which is not preserved under traceout. 

Returning to the quantum key repeater, we can now identify Alice and Bob as two trusted parties and Charlie as an untrusted party and define a multiplex channel as the tensor product of the two channels from Alice to Charlie and Bob to Charlie, namely: $\mc{N}^{\text{repeater}}_{AB\to C}\coloneqq {\mc N}^1_{A\to C_A}\otimes{\mc N}^2_{B\to C_B}$, with $C\coloneqq C_AC_B$. We include the local state preparation by Alice and Bob, the LOCC performed by Alice, Charlie, and Bob during key distillation protocols, as well as Bob's entanglement-swapping measurement and subsequent classical communication into the LOCC operations that interleave the uses of $\mc{N}^{\text{repeater}}_{AB\to C}$. Crucially, the final LOCC operation has to include the trace-out of Charlie's system, as he is an untrusted party. 
 Application of the generalized versions of Theorem \ref{thm:emax-key-converse} or Theorem \ref{thm:ree-key-converse} then provides us with an upper bound on the achievable key rate in terms of $\min\{ E_{\max,E}(\mc{N}^{\text{repeater}}_{AB\to C}),E^{\infty}_{E}(\mc{N}^{\text{repeater}}_{AB\to C})\}$. As has been shown in Ref.~\cite{FF19}, there are examples of channels acting on finite-dimensional systems where the regularized relative entropy of entanglement is strictly less than max-relative entropy of entanglement, in which case Theorem~\ref{thm:ree-key-converse} provides tighter bounds than the ones provided in Ref.~\cite{CM17}. For tele-covariant channels, we can invoke Remark~\ref{rem:comp} and Theorem~\ref{thm:tele} to obtain bounds in terms of the relative entropy of entanglement.

Let us now consider repeater chains with more than a single repeater station. We assume a protocol where each channel has to be used the same number of times to get the desired fidelity. We consider Alice and Bob as trusted parties and the repeater stations $C_1,...,C_l$ as cooperative but untrusted parties. Defining a multiplex channel ${\mc N}^{\text{repeater chain}}_{AC'_1,...,C'_l\to C_1,...,C_lB}\coloneqq {\mc N}^1_{A\to C_1}\otimes{\mc N}^2_{C'_1\to C_2}\otimes...\otimes {\mc N}^{l}_{C'_{l-1}\to C'_l}\otimes{\mc N}^{l+1}_{C'_{l}\to B}$ and including entanglement purification and swapping operations of all nesting levels into the LOCC operations, we then apply Theorem \ref{thm:emax-key-converse} or Theorem \ref{thm:ree-key-converse} to bound the achievable key rate between Alice and Bob by $\min\{E_{\max,E}({\mc N}^{\text{repeater chain}}),E^\infty_{E}({\mc N}^{\text{repeater chain}})\}$. If involved channels are tele-covariant then we obtain bounds in terms of the relative entropy of entanglement. 

\section{Limitations on some MDI-QKD prototypes}\label{app:mdiqkd}
Following discussion in Section~\ref{sec:ex-mdi}, let us now consider MDI-QKD settings with noise model for transmission of qubit systems from both $\textbf{A}_{a_1}$ and $\textbf{A}_{a_2}$ to Charlie through qubit channels given by either depolarizing channel $\mc{D}^{l}_{A_i\to C_i}$ or dephasing channel $\mc{D}^s_{A_i\to C_i}$:
\begin{align}
   \mc{D}^l_{A_i\to C_i}(\rho_{A_i}) &= \lambda_{l}\rho_{C_i}+\frac{1-\lambda_l}{2}\bbm{1}_{C_i},\label{eq:ds}\\
    \mc{D}^s_{A_i\to C_i}(\rho_{A_i}) &= \lambda_{s}\rho_{C_i}+(1-\lambda_s)\widehat{Z}\rho_{C_i}\widehat{Z}^\dag,\label{eq:dl}
\end{align}
where 
\begin{align}
-\frac{1}{3}\leq \lambda_l\leq 1,\qquad\qquad 0\leq\lambda_s\leq 1,
\end{align} 
$\widehat{Z}$ is a Pauli-$Z$ operator, and $\rho$ is an arbitrary input state. Same as for MDI-QKD setup with erasure channels discussed earlier, we assume that Charlie can perform perfect Bell measurement $\mc{M}_{\vv{C}\to X}$ with probability $q$ and failure probability to be $1-q$. We notice that the multiplex channels $\mc{N}^{\text{MDI},\mc{D}^{l}}_{\vv{A}\to\vv{Z}}, \mc{N}^{\text{MDI},\mc{D}^{s}}_{\vv{A}\to\vv{Z}}$ for these MDI-QKD prototypes are also tele-covariant. This implies, the MDI-QKD capacities for respective MDI-QKD settings, i.e., with depolarizing channels and dephasing channels, to be upper bounded as (see following subsections for proofs and plots for some values of $q$):
\begin{enumerate}
    \item MDI-QKD with depolarizing channels $\mc{D}^{l}$~\eqref{eq:dl}, where $-\frac{1}{3}\leq \lambda_l\leq 1$, 
    \begin{align}\label{eq:mdi-dl}
        \widetilde{P}_{\LOCC}(\mc{N}^{\text{MDI},\mc{D}^{l}})\leq q\left(1-h_2\left(\frac{3}{4}\lambda_l^2+\frac{1}{4}\right)\right)
    \end{align}
    for $\frac{1}{\sqrt{3}}<\lambda_l\leq 1$, and $0$ otherwise.
    \item MDI-QKD with dephasing channels $\mc{D}^s$~\eqref{eq:ds}, where $0\leq\lambda_s\leq 1$,
      \begin{align}\label{eq:MDI-ds}
   & \widetilde{P}_{\LOCC}(\mc{N}^{\text{MDI},\mc{D}^{s}})\leq \nonumber\\
& \left\{
\begin{array}{ll}
q(1- h_2({1\over 2}p_-(\lambda_s)) & \textnormal{for}\quad \lambda_s > \frac{3}{4},\\
0 & \textnormal{for}\quad \frac{1}{4}\leq \lambda_s\leq \frac{3}{4},\\
q(1- h_2({1\over 2}p_-(1-\lambda_s)) & \textnormal{for}\quad \lambda_s< \frac{1}{4},\\
\end{array}
\right.
\end{align}
where $p_-(x)\coloneqq 4x^2-3x +1$.
    \end{enumerate}
    
\subsection{MDI-QKD via depolarizing channels}
In this section we will show a bound on MDI-QKD (or equivalently on a particular type of a quantum repeater). In the latter setup there are three stations: $A,B$ and an intermediate one $C\equiv C_AC_B$. We will consider the links $AC_A$ and $C_BB$ be {\it depolarising } channels $\mc{D}^s$ both with the same parameter $\lambda_l$, see~\eqref{eq:dl}. We consider also that the Bell measurement followed by communication of the results to both the parties happens only with probability $q$. With probability $(1-q)$ the state of $C$ is just traced out. We will call the multiplex channel for given MDI-QKD setup composed of depolarizing channels $\mc{D}^l$ with Bell measurement which happens with probability $q$ in total a $q$-depolarizing-MDIQKD channel. The upper bound which we derive below will demonstrate quantitatively that the operation of distillation of entanglement along the links does not commute with the operation of entanglement swapping. Indeed, even for $q=1$, if one does the Bell measurement first, the output key is zero for $\lambda_l\leq {1\over \sqrt{3}}$.

We are interested in the Choi-Jamiolkowski state of the $q$-depolarising-MDIQKD channel, which we obtain from the Choi states (up to local unitary as the input state is $\Psi^-$) of the two depolarising channels. The latter two states read $\lambda_l \Psi^- + (1-\lambda_l){\bbm{1} \over 4}$. The Choi state $\rho_{AB}^{out}$ reads
\begin{align}
& \rho_{AB}^{out} \coloneqq \nonumber\\
& {\lambda_l^2 q\over 4}\left[\Psi^-_{AB}\otimes \op{00}_{I_AI_B} +\Psi^+_{AB}\otimes\op{11}_{I_AI_B}+\right.\nonumber\\
&\left. \Phi^-_{AB}\otimes\op{22}_{I_AI_B}+ \Phi^+_{AB}\otimes\op{33}_{I_AI_B}\right]\otimes \op{00}_{I'_AI'_B}  \nonumber\\
&+ (1-\lambda_l^2) q {\bbm{1}_{AB}\over 4}\otimes {1\over 4}\sum_{i=0}^{3}\op{ii}_{I_AI_B} \otimes \op{00}_{I'_AI'_B} \nonumber \\ 
& + (1-q) {\bbm{1}_{AB}\over 4}\otimes \op{\perp}_{I_AI_B}\otimes \op{11}_{I'_AI'_B}.
\end{align} 

Let us examine this case. First, with probability $(1-q)$, the parties are left with the initial state on $AB$ which is $\bbm{1} \over 4$ and the ``flag'' $\op{11}_{I'_AI'_B}$ reporting error in the Bell measurement. With probability $q$ they obtain a flag $\op{00}_{I'_AI'_B}$, which informs that the Bell measurement was successful. They also receive the classical result
tof the Bell measurement was the outcome: $\{\op{ii}_{I_AI_B}\}_{i=0}^3$. Only with probability $\lambda_l^2$ this measurement results in output of appropriate Bell state on $AB$.
With probability $(1-\lambda_l^2)=(1-\lambda_l)\lambda_l+ \lambda_l(1-\lambda_l)+(1-\lambda_l)^2$ there happens one of three possibilities with respective probabilities: (i) teleportation of ${\bbm{1}_{C_B}\over 2}$ from $C_A$ to $A$ with probability $\lambda_l(1-\lambda_l)$, (ii) teleportation of ${\bbm{1}_{C_A}\over 2}$ from $C_B$ to $B$ with probability $(1-\lambda_l)\lambda_l$, and a Bell measurement on systems $C_AC_B$ of the state ${\bbm{1}_{AC_A}\over 4}\otimes {\bbm{1}_{C_BB}\over 4}$ followed by communication of the outcomes (with probability $(1-\lambda_l)^2$). As one can check by inspection, all the three operations result in the state ${\bbm{1}\over 4}$ on system $AB$.

\begin{figure}[ht]
        \centering
            \includegraphics[trim=4.5cm 7.5cm 4.5cm 6.5cm,width=6cm]{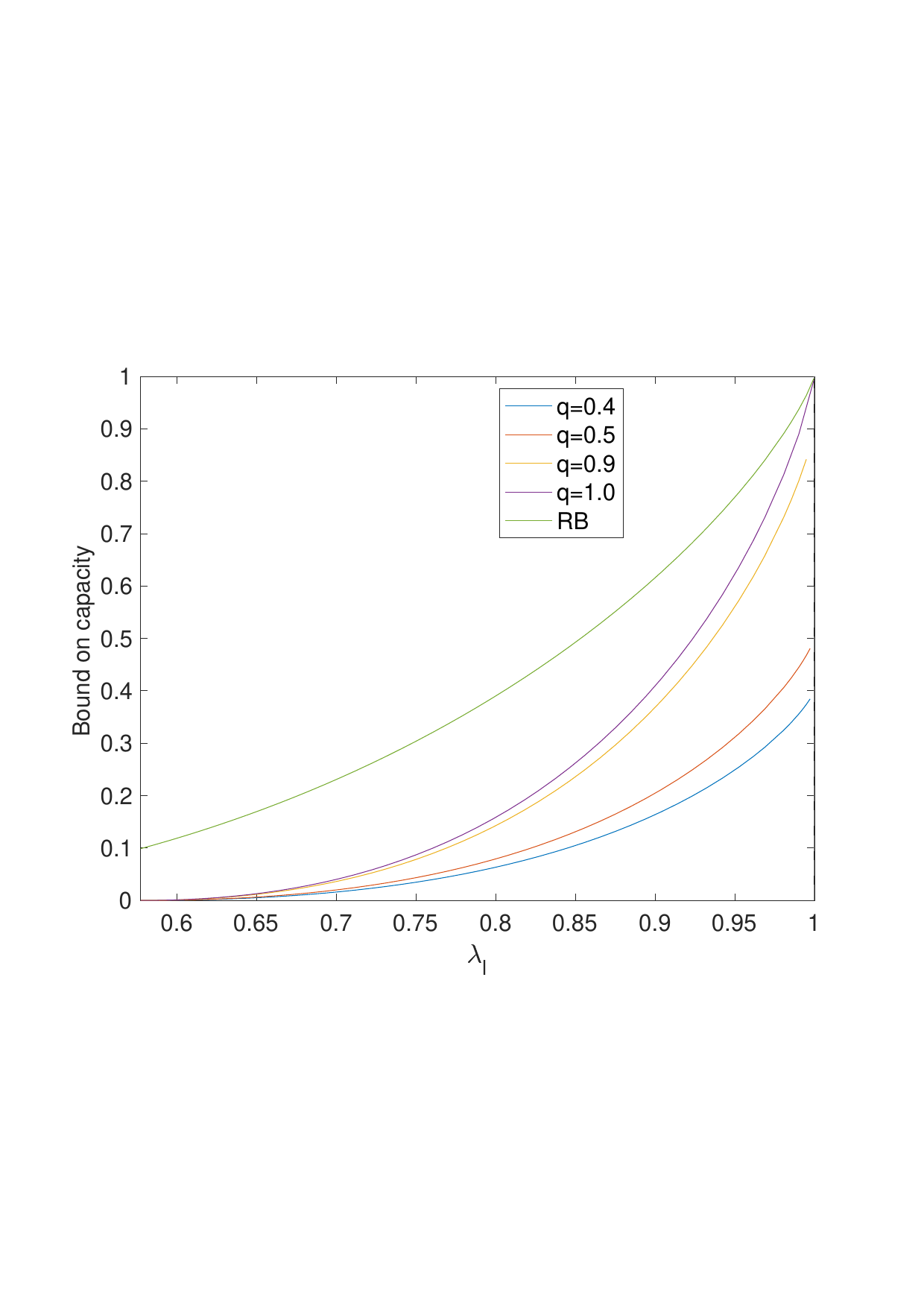}
              \caption{Upper bounds~\eqref{eq:mdi-dl} on the secret key capacities for the MDI-QKD protocol with depolarizing channels for different values of parameters $q$ and $\lambda_l$, in comparison to the RB bound \cite{PLOB15}.} \label{fig:MDI_depola}
    \end{figure}
The relative entropy of the $\rho_{AB}^{out}$ reads:
\begin{align}
E_R(\rho^{out}_{AB}) &\leq q E_R(\rho_{AB|00}^{out}) + (1-q)E_R(\rho_{AB|11}^{out})\\ 
& = q E_R(\rho^{out}_{AB|00}),
\label{eq:qbound}
\end{align}
where $\rho_{AB|11}^{out} = {\bbm{1}_{AB}\over 4}\otimes \op{\perp}_{I_AI_B}\otimes \op{11}_{I'_AI'_B}$ and $\rho_{AB|00}^{out}$ is such that
$(1-q)\rho_{AB|11}+q\rho_{AB|00} = \rho_{AB}$. We have used there convexity of the relative entropy and the fact that it is zero for 
maximally mixed state. We then observe that
\begin{align}
    & E_R(\rho_{AB|00}^{out}) \nonumber\\
   &\quad = E_R\Bigg(\bigg(\sum_{i=0}^3 \lambda_l^2 \op{\psi_i}_{AB} + (1-\lambda_l^2){\bbm{1}_{AB} \over 4}\bigg)\otimes \nonumber\\ &\qquad\qquad \qquad\qquad  \op{ii}_{I_AI_B} \otimes \op{00}_{I'_AI'_B}\Bigg),
\end{align}
where $\op{\psi_i}$ are the Bell states. We then use the fact that for each $i$ the state $\lambda_l^2\op{\psi_i}_{AB} + (1-\lambda_l^2){\bbm{1} \over 4}$ 
is a Bell diagonal state. A Bell diagonal state of the form $\sum_{j} p_j \op{\psi_j}$
has $E_R$ equal to $1- h(p_{max})$ where $p_{max} = \max_j p_j$ is maximal of the weights of the Bell state $\op{\psi_j}$ in the mixture, or $0$ if $p_{max}\leq {1\over 2}$. 
In our case $p_{max} =\lambda_l^2 + (1-\lambda_l^2)/4 $. Thus, via convexity and Eq.~\eqref{eq:qbound} we obtain that
\begin{equation}
E_R(\rho_{AB}^{out}) \leq q\left(1- h_2(\lambda_l^2 + {(1-\lambda_l^2)\over 4})\right)
\end{equation}
for $\lambda_l^2 + (1-\lambda_l^2)/4 > 1/2$, and $0$ otherwise. The condition $\lambda_l^2 + (1-\lambda_l^2)/4 > 1/2$ on $\lambda_l$ is equivalent to $\lambda_l > {1\over \sqrt{3}}$. This implies that for $q=1$ the bound is zero for
$\lambda_l\in ({1\over 3},{1\over \sqrt{3}}]$, for which depolarizing channel is non-zero, and hence its private capacity is non-zero as well. We interpret this as noncommutativity of the independent and identically distributed (i.i.d.) Bell measurement and entanglement distillation. Indeed, for this range of $\lambda_l$ given access to an isotropic state $\rho(\lambda_l)$ one can distil $E_D(\rho(\lambda_l))=(1- h_2(\lambda_l))$ of entanglement, and hence the quantum capacity ${\cal Q}(\mc{D}^{l}) = 1- h_2(\lambda_l)$ (or zero for $\lambda_l\leq 1/3$). On the other hand, this amount of key becomes inaccessible when the Bell measurement is done first.

\subsection{MDI-QKD via dephasing channels}

In this section we consider two dephasing channels~\eqref{eq:ds} between Alice and Charlie and Bob and Charlie. We will again observe that the operation of distillation and \textit{iid} entanglement swapping via Bell measurement do not commute. Altering them leads to different amount of key in the output.
We will use the fact that MDI-QKD via dephasing channel is teleportation-covariant.

 \begin{figure}[ht]
        \centering
            \includegraphics[trim=4.5cm 7.5cm 4.5cm 6.5cm,width=6cm]{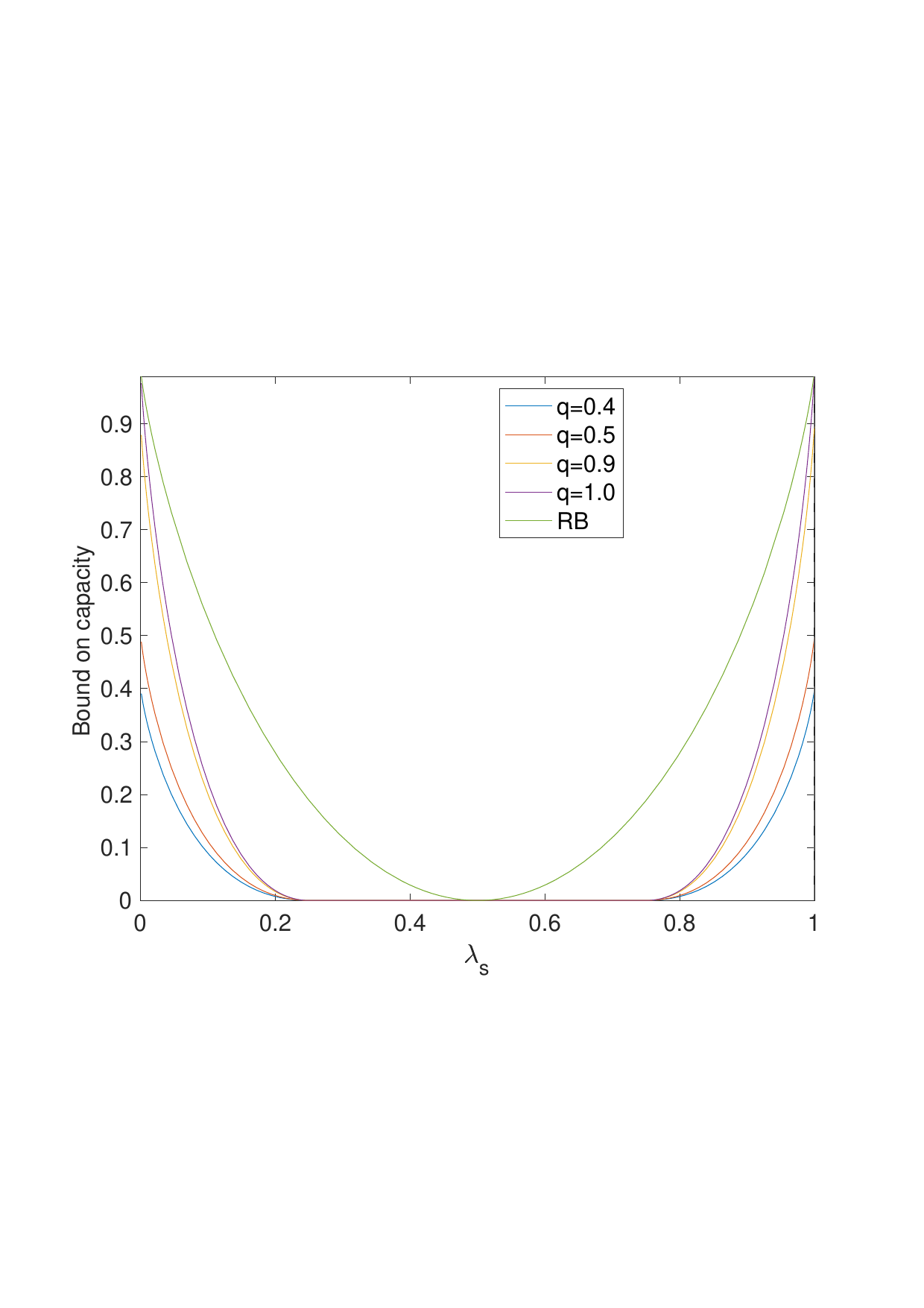}
              \caption{Upper bounds~\eqref{eq:MDI-ds} on the secret key capacities for the MDI-QKD protocol with dephasing channels for different values of parameters $q$ and $\lambda_s$, in comparison to the RB bound \cite{PLOB15}.}
              \label{fig:MDI_depha}
    \end{figure}

Note that the Choi-Jamiolkowski state (up to local unitary operation as the input state is $\Psi^-$) of the dephasing channel equals $\lambda_s\Psi^- +(1-\lambda_s)\Psi^+=(2\lambda_s-1)\Psi^- + (2-2\lambda_s)\rho_{cl}$ with $\rho_{cl} = {1\over 2}\left(\op{01}+\op{10}\right)$. Hence, the Choi-Jamiolkowski state of the dephasing-MDIQKD channel reads:

\begin{align}
&\rho_{AB}^{out} \coloneqq \nonumber\\ 
& (2\lambda_s-1)^2 q\Psi^-_{AB}\otimes \sum_{i=0}^{3}\op{ii}_{I_AI_B}\otimes \op{00}_{I'_AI'_B} +\nonumber\\
& (2-2\lambda_s)(2\lambda_s-1)q \rho_{cl}^{AB}\otimes {1 \over 4}\sum_{i=0}^{3}\op{ii}_{I_AI_B} \otimes \op{00}_{I'_AI'_B}\nonumber\\ &  + 
(2-2\lambda_s)q {\bbm{1}_{AB}\over 4}\otimes {1\over 4}\sum_{i=0}^{3}\op{ii}_{I_AI_B} \otimes \op{00}_{I'_AI'_B} + \nonumber\\
&(1-q) {\bbm{1}_{AB}\over 4}\otimes \op{\perp}_{I_AI_B}\otimes \op{11}_{I'_AI'_B},
\end{align} 

given that Alice has performed the control-Pauli operations on her systems $AI_A$. We can safely assume that this decoding has been done, because local unitary operation does not change the relative entropy of entanglement. The first case is a straightforward result of correct entanglement swapping. Regarding the next term, with probability $(2-2\lambda_s)(2\lambda_s-1)$ a subsystem $C_A$ of the state $\rho_{cl}$ gets correctly teleported to $A$, and hence finally $\rho_{cl}^{AB}$ is shared by Alice and Bob. However with probability $(2-2\lambda_s)=(2-2\lambda_s)^2 +(2-2\lambda_s)(2\lambda_s-1)$, resulting state is maximally mixed. This is because with probability $(2-2\lambda_s)^2$ the state on system $C$ is traced out, hence a product of subsystems of $\rho_{cl}^{AB}$  is an output. On the other hand, with probability $(2-2\lambda_s)(2\lambda_s-1)$, subsystem $C_B$ of the state  $\rho_{cl}$ is teleported to Bob, however Bob does not do the decoding. It is strightforward then to check that ${1\over 4}\sum_{i=0}^1 \sigma_i^{B}\otimes\bbm{1}_A\rho_{cl}^{AB}\widehat{\sigma}_i^B\otimes \bbm{1}_A$ with $\widehat{\sigma}_i$ being Pauli operators, is the maximally mixed state of two qubits.

The relative entropy of the $\rho_{AB}^{out}$ reads:
\begin{align}
E_R(\rho^{out}_{AB}) & \leq q E_R(\rho_{AB|00}^{out}) + (1-q)E_R(\rho_{AB|11}^{out})\\ 
& =  q E_R(\rho^{out}_{AB|00}),
\label{eq:qbound2}
\end{align}
where $\rho_{AB|11}^{out} = {\bbm{1}_{AB}\over 4}\otimes \op{\perp}_{I_AI_B}\otimes \op{11}_{I'_AI'_B}$ and $\rho_{AB|00}^{out}$ is such that
$(1-q)\rho_{AB|11}+q\rho_{AB|00} = \rho_{AB}$. We have used again the convexity of the relative entropy and the fact that it is zero for 
maximally mixed state. We then observe that 
\begin{align}
& E_R(\rho_{AB|00}^{out})  \nonumber\\
 &=  E_R\left({(2\lambda_s-1)^2}\op{\Psi^-}_{AB}  + \right. \nonumber\\ 
&\quad \left. (2-2\lambda_s)(2\lambda_s-1) \rho_{cl}^{AB} + 
(2-2\lambda_s) {\bbm{1}_{AB}\over 4}\right)
\end{align}

where we have neglected systems $I_AI_B$ and $I_A'I_B'$ due to subadditivity of $E_R$ and the fact that it is zero for both the state $\sum_{i=0}^{3}\op{ii}_{I_AI_B}$ and $\op{00}_{I_A'I_B'}$.
Resulting state is Bell diagonal (note that $\rho_{cl}^{AB} = {1\over 2}(\op{\Psi^-}+\op{\Psi^+})$), it is thus sufficient to find the largest weight of a Bell state to compute its relative entropy. Bell diagonal states are separable if the largest weight is less than or equal to half, i.e., when none of the Bell states ($\Phi^+,\Phi^-,\Psi^+,\Psi^-$) has weight greater than $1/2$.

For the case $\lambda_s\geq \frac{1}{2}$, the state $\op{\Psi^-}$ is 
in the mixed state $\rho^{out}_{AB|00}$  with probability $(2\lambda_s-1)^2 + (2-2\lambda_s)(2\lambda_s-1) +(2-2\lambda_s)/4={1\over 2}(4\lambda_s^2-3\lambda_s +1)$.

Thus, keeping the structure of the Choi state of the dephasing channel in mind, we arrive at the following bound:

\begin{equation}
    E_R(\rho^{out}_{AB})\leq
\left\{
\begin{array}{ll}
q(1- h_2({1\over 2}p_-(\lambda_s)) & \textnormal{for}\quad \lambda_s > \frac{3}{4},\\
0 & \textnormal{for}\quad \frac{1}{4}\leq \lambda_s\leq \frac{3}{4},\\
q(1- h_2({1\over 2}p_-(1-\lambda_s)) & \textnormal{for}\quad \lambda_s< \frac{1}{4},\\
\end{array}
\right.
\end{equation}
where $p_-(x)\coloneqq 4x^2-3x +1$.

\section{On the complexity of finding lower bound of SKA rate for Bidirectional Network}\label{app:complexity}

Here, we briefly comment on the complexity of finding a sub-graph, which allows us to realize the Conference Key Agreement with the capacity indicated by the inequality~\eqref{eq:lb-general}. 
As we show, the complexity is a polynomial
of low degree $O(n^2)$. In what 
follows a {\it minimum spanning tree} is
 a tree with a minimal sum of the weights
 of its edges. A {\it minimum bottleneck
 spanning tree} is the one in which the
 edge with the highest weight has the lowest possible value for the considered graph.

The algorithm of finding maximal of
the minimal edges over all spanning 
trees of the graph is as follows.
\begin{enumerate}[label={(\arabic*)}]
\item Find maximal weight
of the edges of $G$ (denoted as $M$)
\item Find
{\it minimum spanning tree} $T_{MST}$ in the graph $G'=(V_G,E_G)$, which is same as $G$, but with weights of edges changed from
$w(e)$ to $M-w(e)$, where $M\equiv \max_{e'\in E_G} w(e')$.
\item Find minimal
weight of the edges in $T$, denoted $w_{min}$. Return $M-w_{min}$.
\end{enumerate}

The correctness of this algorithm follows
from the fact that every minimal spanning tree is a {\it minimal bottleneck spanning tree}. Finding the highest weight of edges of this tree that is as low as possible is the opposite task to ours. Indeed, we aim
at finding trees with the lowest weight
over its edges to be as high as possible.
This is why we search for minimal spanning tree in the graph with converted edges 
to $M\equiv \max_{e'\in E_G} w(e') - w(e)$.
Next, we use the fact that $\min_{T\subseteq G} \max_{e \in E_T} [M- w(e)] = 
M- \max_{T\subseteq G} \min_{e \in E_T} w(e)$, so the $M- w_{min}$ is the solution.
The overall time complexity of this algorithm is $O(m + n\log n)$. Indeed, the first step takes $O(m)$ time. The next two
take $O(m +n\log n)$, where the finding
of minimum spanning tree is via the Prim's algorithm based on the data structure called Fibonacci heap \cite{LRCSbook}. The final step takes $O(n\log n)$, which is the time of sorting the weights of edges (e.g. by QuickSort algorithm). Taking into account
that $m$ scales pessimistically as $n^2$,
we obtain $O(n^2)$ worst-case complexity.

To summarize, the value of the lower bound can be found efficiently on a classical computer, given all the
capacities describing the Bidirectional Network are known and represented in the form of a graph.

\section{Key distillation from states-- plots}\label{app:StatesPlots}

\begin{figure}[h]
    \center{\includegraphics[trim=4.5cm 7cm 4.5cm 6.5cm,width=5.5cm]{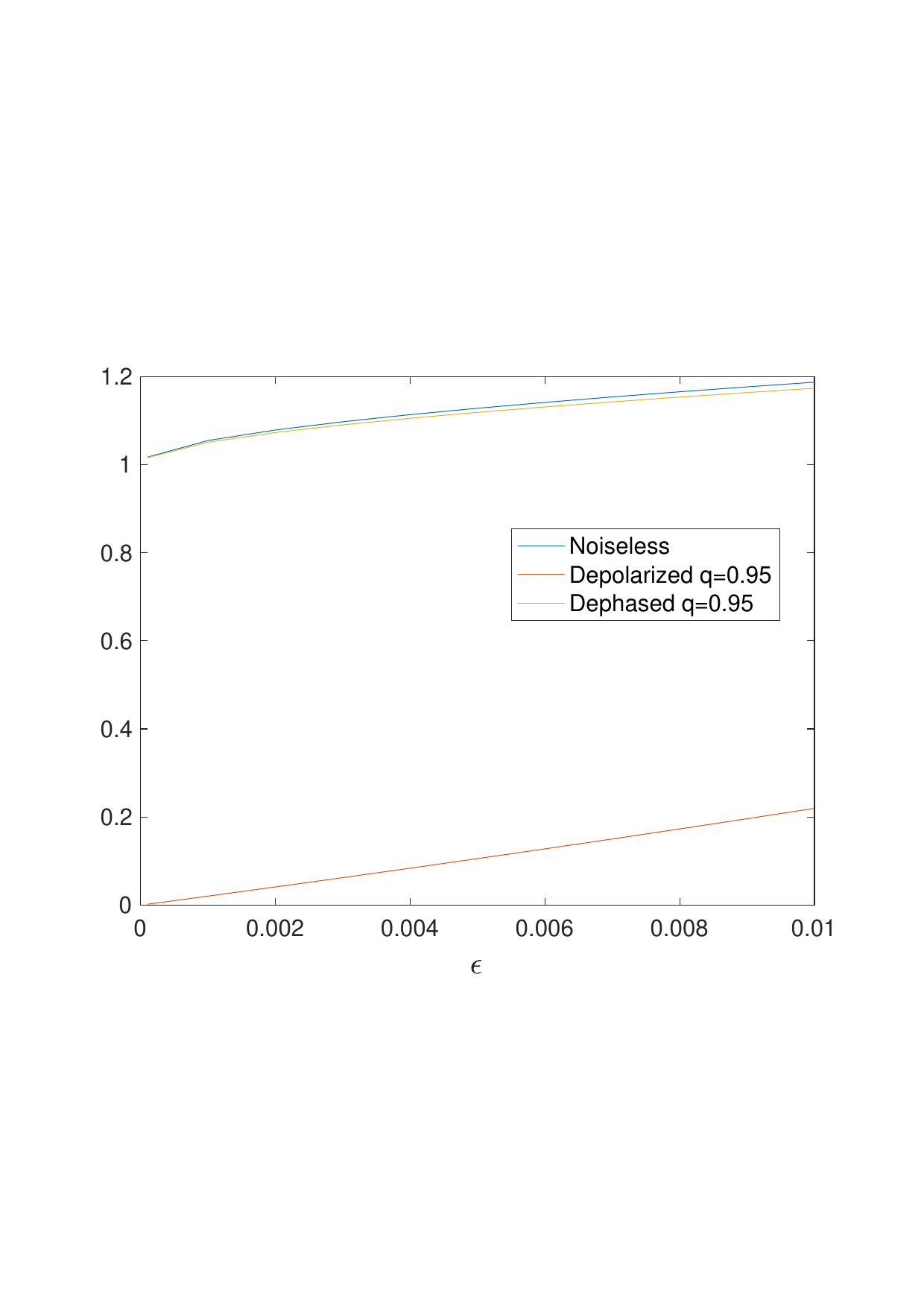}}
    \caption{Plot of $\varepsilon$-hypothesis testing upper bound on conference key rate for single copy of $\Phi_{3}^\mathrm{GHZ}$ state, for noiseless, dephased and depolarized case.}
    \label{fig:GHZ_3p_1c}
\end{figure}
To calculate our upper bounds, we utilize the technique of semidefinite programming (SDP) with MatLab (version) library ``SDPT3 4.0" \cite{SDPT3}. We calculate upper bounds for several cases incorporating both $\Phi_{M}^\mathrm{GHZ}$ states and $\Phi_{M}^\mathrm{W}$ states. Firstly we vary the number of copies of the state that entered the protocol; secondly, we make calculations for multipartite states with the number of parties exceeding three. Finally, we extend our consideration to states subjected to dephasing or depolarizing noises characterized in Eq.~\eqref{eqn:noise} (each qubit is subjected to noise separately). We investigate the effect of noise in the case of a different number of copies and different number of parties. 
\begin{align}\label{eqn:noise}
    &\rho_\mathrm{noisy}=\mc{D}^{\otimes M}(\rho),
\end{align}
    for $D$ given by
\begin{align}
    &\mc{D}^q_{deph.}(\omega)=q\omega+ (1-q)  \sigma_z  \omega  \sigma_z,\label{n:deph} \\
    &\mc{D}^q_{depol.}(\omega)=q \omega + (1-q)\frac{\mathbbm{1}}{2},\label{n:depol} 
\end{align}
where $\sigma_z$ is the Pauli Z matrix, and $q$ is the noise parameter.

We present the plots for the upper bound on the key rate distilled from both $\Phi_{M}^\mathrm{GHZ}$, $\Phi_{M}^\mathrm{W}$ states and tensor powers of them. The plots are a function of $\varepsilon$ parameter controlling the fidelity of the target state $\rho_{\vec{A}}$ with respect to a private state.


\begin{figure}[h]
    \center{\includegraphics[trim=4.5cm 7.5cm 4.5cm 6.5cm,width=5.5cm]{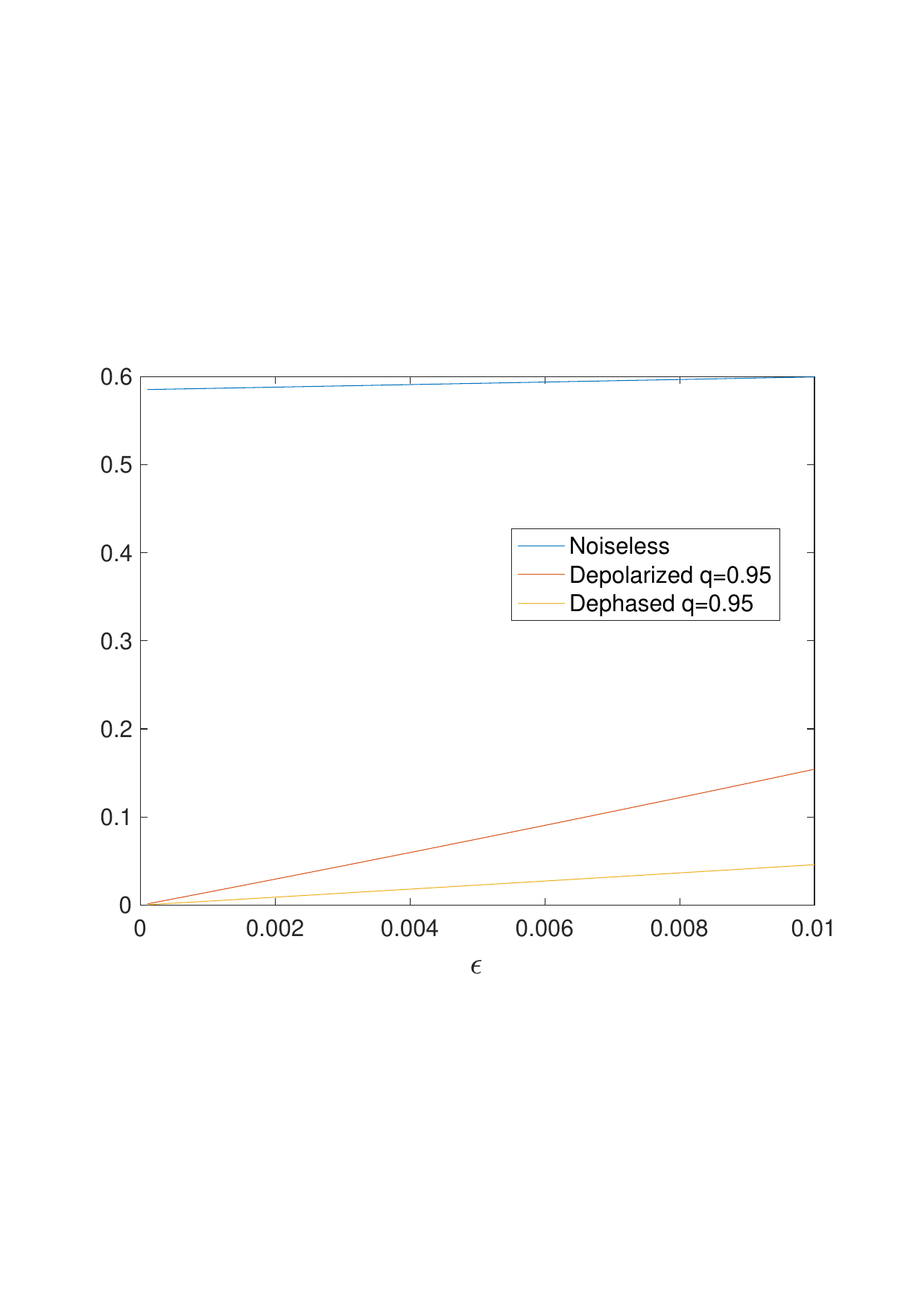}}
    \caption{Plot of $\varepsilon$-hypothesis testing upper bound on conference key rate for single copy of $\Phi_{3}^\mathrm{W}$ state, for noiseless, dephased and depolarized case.}
    \label{fig:W_3p_1c}
\end{figure}

\begin{figure}[h]  \center{\includegraphics[trim=4.5cm 7.5cm 4.5cm 6.5cm,width=5.5cm]{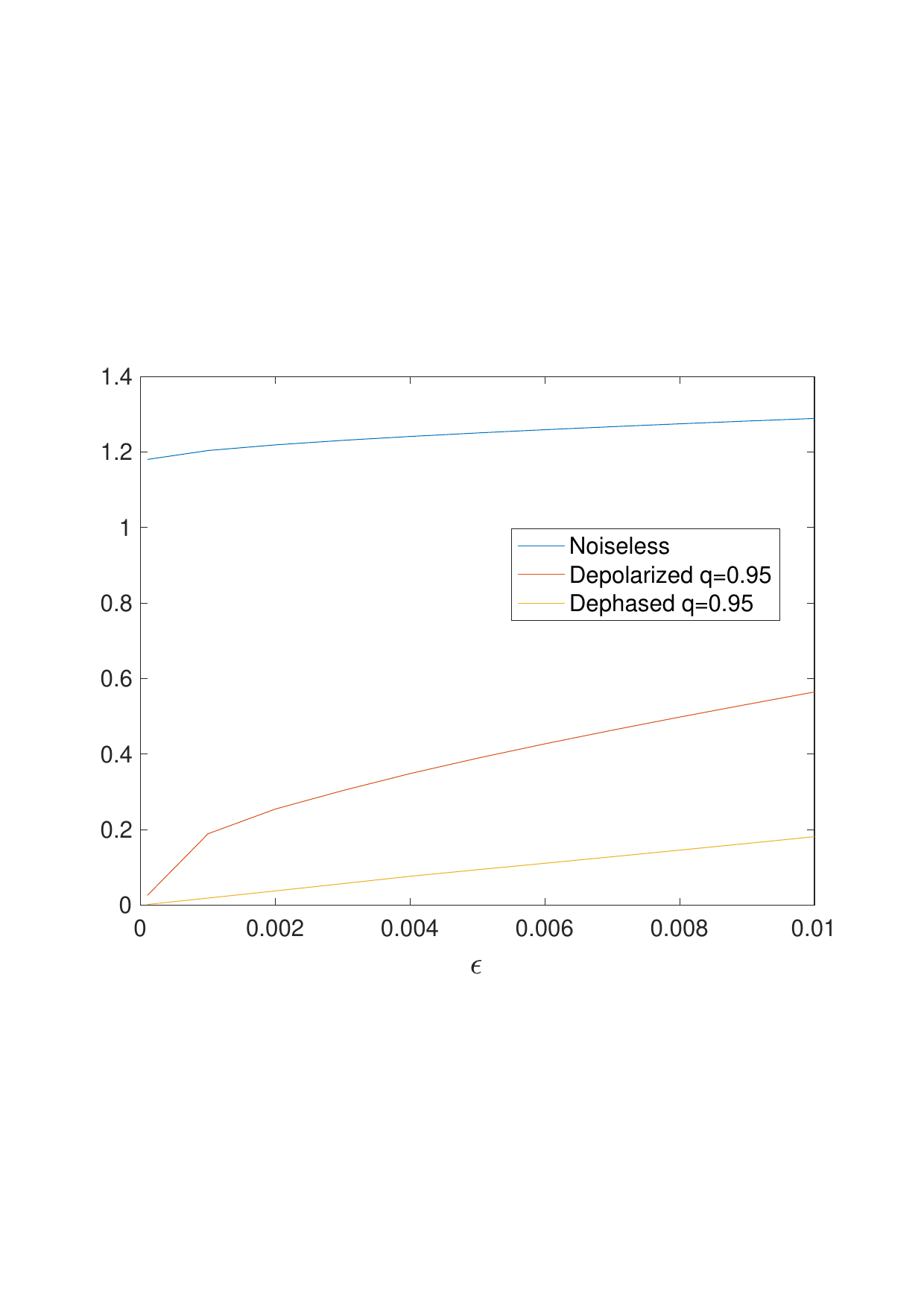}} \caption{Plot of $\varepsilon$-hypothesis testing upper bound on conference key rate for two copies of $\Phi_{3}^\mathrm{W}$ state, for noiseless, dephased and depolarized case.}     \label{fig:W_3p_2c} \end{figure}

We compare performance of our upper bound and choice of biseparable states for tripartite single copy state in the plots in Figs.~\ref{fig:GHZ_3p_1c} and \ref{fig:W_3p_1c}. In the control plot in Fig. \ref{fig:GHZ_3p_1c} for the noiseless $\Phi_M^\mathrm{GHZ}$ state the upper bound as expected, exhibits the value to be just above 1 for the chosen range of $\varepsilon$. This indicates that the $\varepsilon$-hypothesis testing upper bound is not too loose. For single copy tripartite $\Phi_M^\mathrm{W}$ state value of the upper bound in Fig.~\ref{fig:W_3p_1c} for $\varepsilon \approx 0$ is below 0.6, what is below the value of the rate of the optimal LOCC asymptotic protocol being approximately $0.643$ per copy \cite{SVW}. In the case of two copies of bipartite $\Phi_M^\mathrm{W}$ in Fig.~\ref{fig:W_3p_2c} state we obtain an upper bound that for $\varepsilon \approx 0$ has the value around 1.18 what is significantly above $\frac{2}{3}$ achieved by the protocol described earlier in this section and 1.286 what is an asymptotic limit for state being two copies of $\Phi_M^\mathrm{W}$ state \cite[Theorem~2]{SVW}. Both these results stand with an agreement with the fact that single copy and two copies one-shot protocols constitute very limited class of protocols compared to those available for calculating the asymptotic limit. For two copies of $\Phi_M^\mathrm{W}$ state, the large gap between our upper bound for the conference key rate and rate of $\Phi_M^\mathrm{GHZ}$ states distillation protocol makes us think that indeed the former is larger than the latter. However, formal proof is still missing. Moreover, we notice that optimal protocol $\Phi_M^\mathrm{W}$ to $\Phi_M^\mathrm{GHZ}$ conversion have to incorporate at least three copies of $\Phi_M^\mathrm{W}$ state. This is because our $\varepsilon$-hypothesis testing upper bound is smaller than the asymptotic limit for $\Phi_M^\mathrm{GHZ}$ distillation.

\end{appendix}

\bibliography{qr}{}
\end{document}